\newtheorem{rem}{Remark}
\newtheorem{theorem}{Theorem}
\newtheorem{prop}{Proposition}
\newtheorem{definition}{Definition}[section]
\newtheorem{lemma}{Lemma}
\newtheorem{corollary}{Corollary}[theorem]
\newenvironment{proof}[1]{\medskip\par\noindent{\bf Proof:\,}\,#1}{{\mbox{\,$\blacksquare$}\par}}
\newcommand{\agesubs}[4]{{\Delta}(\left[\begin{smallmatrix} #1 & #2 \\ #3 & #4 \end{smallmatrix}\right])}
\newcommand{\agenonsubs}[4]{{\Delta}^\dagger(\left[\begin{smallmatrix} #1 & #2 \\ #3 & #4 \end{smallmatrix}\right])}
\newcommand{\ageongraph}[4]{{\Delta}_{\Vec{G}}(\left[\begin{smallmatrix} #1 & #2 \\ #3 & #4 \end{smallmatrix}\right])}
\newcommand{\agesubswo}[4]{\bar{\Delta}(\left[\begin{smallmatrix} #1 & #2 \\ #3 & #4 \end{smallmatrix}\right])}
\newcommand{\agenonsubswo}[4]{\bar{\Delta}^\dagger(\left[\begin{smallmatrix} #1 & #2 \\ #3 & #4 \end{smallmatrix}\right])}
\newcommand{\ageongraphwo}[4]{\bar{\Delta}_{\Vec{G}}(\left[\begin{smallmatrix} #1 & #2 \\ #3 & #4 \end{smallmatrix}\right])}
\begin{document}

\title{Age of Coded Updates In Gossip Networks Under Memory and Memoryless Schemes}

\author{Erkan Bayram, Melih Bastopcu, Mohamed-Ali Belabbas, Tamer Ba\c{s}ar~\IEEEmembership{Life Fellow,~IEEE}
\thanks{
{A part of this paper, \cite{bayram2024age}, is presented at the  Asilomar Conference on Signals, Systems, and Computers, Pacific Grove, California, October 27-30, 2024.}}
\thanks{E. Bayram, M.-A. Belabbas, and T. Başar are with the Coordinated Science Laboratory, University of Illinois Urbana-Champaign, Urbana, IL, 61801. E-mail: \emph{(ebayram2,belabbas,basar1)@illinois.edu}. {M. Bastopcu is with the Department of Electrical and Electronics Engineering at Bilkent University, Ankara, Turkey, 06800. E-mail: \emph{bastopcu@bilkent.edu.tr}. Research of UIUC authors was supported in part by ARO MURI Grant AG285, NSF-CCF 2106358, ARO W911NF-24-1-0105 and AFOSR FA9550-20-1-0333.} }
}


\IEEEpubid{0000--0000/00\$00.00~\copyright~2024 IEEE}

\maketitle

\begin{abstract}
We consider an information update system on a gossip network, where a source node encodes information into $n$ total keys such that any subset of at least $k+1$ keys can fully reconstruct the original information. This encoding process follows the principles of a $k$-out-of-$n$ threshold system. The encoded updates are then disseminated across the network through peer-to-peer communication. We have two different types of nodes in a network: subscriber nodes, which receive a unique key from the source node for every status update instantaneously, and nonsubscriber nodes, which receive a unique key for an update only if the node is selected by the source, and this selection is renewed for each update. For the message structure between nodes, we consider two different schemes: a memory scheme (in which the nodes keep the source's current and previous encrypted messages) and a memoryless scheme (in which the nodes are allowed to only keep the source's current message). We measure the \emph{timeliness} of information updates by using a recent performance metric, the version age of information. We present explicit formulas for the time average AoI in a scalable homogeneous network as functions of the network parameters under a memoryless scheme. Additionally, we provide strict lower and upper bounds for the time average AoI under a memory scheme. 
\end{abstract}

\begin{IEEEkeywords}
Age of information, version age of information, gossip network, coded updates, multitude dissemination
\end{IEEEkeywords}

\section{Introduction}
\IEEEPARstart{I}{n} certain communication networks, the dissemination of a single piece of information is insufficient for full message reconstruction~\cite{shah2009gossip,bayram2023vector}; that is, a node in the networks needs to collect a multitude of messages or observations generated at the same time to construct meaningful information. This scenario arises in applications such as cryptography, error correction codes, and multi-sensor measurements, which will be explored in this paper. We call a system in which any subset of at least $k+1$ out of $n$ messages can fully reconstruct the original information a {\em $k$-out-of-$n$ threshold system}. 


In this work, we consider a gossip network, where the source disseminates coded updates using a $k$-out-of-$n$ threshold system over peer-to-peer connections. We consider two types of nodes in the network: {\em subscriber nodes}, which receive a unique key from the source node instantaneously for every status update, and {\em nonsubscriber nodes}, which receive a unique key from the source node only if they are selected for a particular update. The selection of nonsubscriber nodes that receive a key is done independently for each status update. Upon receiving an update, the nodes start to share their local messages with their neighboring nodes to decrypt the source's update. The nodes that collects $k+1$ different messages of the same update can decode the source's update. For the communication structure between nodes, we consider two different settings: a {\em memory scheme}---where the nodes are able to keep the keys received from the source--- and a {\em memoryless scheme}---where the nodes are allowed to keep only keys associated with the source's current status. For both of these settings, we study the information freshness achieved by the receivers, which informally refers to how up-to-date the information at the receiver is compared to the most recent update from the source. Such systems find applications in various domains including cryptography, robotics, and communication systems. 

For example, in cryptography, the information source can apply Shamir's secret sharing method~\cite{shamir1979share} or Blakley's method~\cite{blakley1979safeguarding} on the information to put it into $n$ distinct keys such that any subset of $n$ keys with $k+1$ cardinality would be sufficient to decode the encrypted message. Such a scheme is known as $(k,n)$-Threshold Signature Scheme (TSS)~\cite{hwang2003practical}. 

Furthermore, using $k$-out-of-$n$ threshold systems can enhance error correction performance in data transmission. Integrating multiple sensor observations minimizes information errors. For instance, in robotics, simultaneous localization and mapping (SLAM) fuse measurements from cameras, LiDAR, or inertial measurement units to estimate the robot's pose; another example is the determination of the relative position of a target object using the Time Difference of Arrival (TDoA) technique~\cite{gustafsson2003positioning}, for which one must have at least three distinct Time of Flight (ToF) measurements, while more measurements increase the accuracy of detection. 
Another common example of $k$-out-of-$n$ threshold systems that is used in communication and distributed storage is $(n,k+1)$-MDS error correction codes~\cite{buyukates2020timely}. For example, in Reed–Solomon error correcting algorithm~\cite{wicker1999reed}, which is an example of $(n,k+1)$-MDS coding, while any $k+1$ codeword is sufficient to decode the message (less than $k+1$ keys cannot decode the actual information), having more redundant codewords (keys) allows for more error corrections in the message.  
\IEEEpubidadjcol




Motivated by these applications, we consider in this work an information source that generates updates and then encodes them by using a $k$-out-of-$n$ threshold systems. For the sake of simplicity, we consider the source using $(k,n)$-TSS, but it is worth noting that our results are applicable to {\em any $k$-out-of-$n$ threshold system}, some examples are discussed above. 




In order to measure the freshness of information in communication networks, age of information (AoI) has been introduced as a new performance metric~\cite{Kaul12a}. Since then, AoI has been considered in queueing networks \cite{Soysal2021}, energy harvesting problems \cite{Bacinoglu15, Arafa2020}, caching systems \cite{Bastopcu2021}, remote estimation \cite{Sun2020}, distributed computation systems \cite{buyukates2020timely}, and RF-powered communication systems \cite{Elmagid2020}. A more detailed review of the literature on AoI can be found in \cite{Yates2021}. The traditional age metric {is defined as the time that has passed since the most recent status update packet at the receiver was generated. Thus, the traditional age metric} increases linearly over time until the receiver gets a new status update from the source. However, if the information at the source does not change frequently, although the receiver may not get updates from the source for a long time, it may still have the most up-to-date information prevailing at the source. The traditional age metric is said to be {\em source-agnostic},  meaning that it does not consider the information change rate at the source. Considering this problem, new \textit{source-aware} freshness metrics such as the version age of information {(VAoI)}~\cite{yates2021age, Eryilmaz2021}, the binary freshness~\cite{Brewington00}, and age of incorrect information~\cite{MaatoukAoII} have been considered recently which can achieve \textit{semantic-oriented} communication~\cite{Uysal2022semtantic}.  

In the earlier works on AoI, age has been studied for simple communication networks where the information flows through directly from the source, or other serially connected nodes (as in multi-hop multi-cast networks \cite{Zhong17a}). The development of the stochastic hybrid system (SHS) approach in \cite{Yates2019} paves a new way to calculate the average age in arbitrarily connected networks. In particular, reference \cite{yates2021age} considers a setting where in addition to source sending updates to the receiver nodes, nodes also share their local updates through \textit{a gossiping mechanism} to improve their freshness further. Inspired by  \cite{yates2021age}, reference \cite{Buyukates2022} improves the AoI scaling by using the idea of clustered networks and \cite{Bastopcu2021gossip} studies scaling of binary freshness metric in gossip networks. In the aforementioned works in \cite{yates2021age, Buyukates2022, Bastopcu2021gossip}, age scaling has been considered in disconnected, ring, and fully-connected symmetric gossip networks. Different from these symmetric gossip networks, age scaling has been studied in a grid network in \cite{mitra2023ageaware}. Reference \cite{delfani2023version} optimizes age in a gossip network where updates are provided through an energy harvesting sensor. In gossiping, as the information exchanges happen through comparison of time-stamp of information, the gossip networks are vulnerable to adversarial attacks. The \textit{timestomping}, i.e., changing time-stamp of the updates to brand the outdated information as fresh, has been studied in gossip networks in \cite{Kaswan2024timestomping}. The reliability of information sources \cite{kaswan2023choosing}, the binary dynamic information dissemination \cite{Bastopcu2024gossip}, information mutation and spread of misinformation \cite{kaswan2023information} have been studied in gossip networks. More recent advances of AoI in gossip networks can be found in \cite{kaswan2023age}.            
\begin{figure}[t]
	\centering  	\includegraphics[scale=0.65]{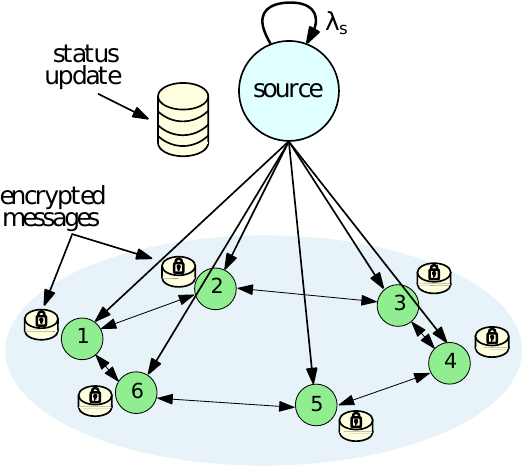}
	\caption{A gossip network consisting of a source and $n=6$ receiver nodes. Source encrypts status updates and sends them to nodes.}
	\label{fig:main_model}
 \vspace{-0.5cm}
\end{figure}


In all these aforementioned works, the source sends its information to gossip nodes without using any encryption. In this work, we consider the version age of information in a gossip network, where the source disseminates coded updates using $(k,n)$-TSS encryption (or any $k$-out-of-$n$ threshold system). The main contributions of our work are as follows: 
\begin{itemize}
    \item We measure the freshness of the information through $k$-keys VAoI on subscriber and nonsubscriber nodes for both memory and memoryless schemes.
    \item For nodes with memory, we derive the closed-form average age expression for an arbitrary \emph{non-homogeneous} $(k,n)$- TSS feasible network in full subscription where {\em every node} receives a key from the source once a status update is generated. 
    \item In total key subscription where {\em only subscriber nodes} get a unique key from the source, we derive a closed-form expression for the average $k$-keys version age under both memory and memoryless schemes
    \item In partial key subscription where the source sends keys to {\em subscriber nodes and some randomly selected nonsubscriber nodes}, we provide tight upper and lower bounds for the average $k$-keys version age under the memory scheme and we derive a closed-form expression for the memoryless scheme.
    \item We provide an asymptotic bound on the average $k$-keys version age as the network size and total key numbers proportionally grow for the scalability of the network.
\end{itemize}

Furthermore, we show that assigning a node as a subscriber reduces its $k$-keys version age, while significantly increasing the version age for the remaining nonsubscriber nodes under the memoryless scheme. To emphasize this, a closed-form expression {is derived} to quantify the impact of assigning a node as a subscriber in terms of the overall age of the network. 

Finally, we present extensive numerical simulations to validate our theoretical findings. In the numerical results, we show that the time average of the $k$-keys version age for a node in both schemes decreases as the edge activation rate increases, or as the number of keys required to decode the information decreases, or as the number of gossip pairs in the network increases. We show that a memory scheme yields a lower time average of $k$-keys version age compared to a memoryless scheme. However, the difference between the two schemes diminishes with infrequent source updates, frequent gossip between nodes, or a decrease in $k$ for a fixed number of keys $n$. Finally, this result motivates us to provide a more detailed discussion on the value of the memory, where we provide an analytical result on how to choose gossip rates between nodes so that the difference between two schemes can be made arbitrarily small. 



The remainder of the paper is organized as follows: In Sec.~\ref{sec:system_model}, we introduce the system model and the metric used to measure information freshness. In Sec.~\ref{sec:prelim}, we provide preliminaries on order statistics, along with relevant notation. The age analysis under memory and memoryless schemes is presented in Sec~\ref{sec:w_memory} and Sec.~\ref{sec:wo_memory}, respectively. Sec.~\ref{sec:discuss} includes numerical results and discussions. In Sec.~\ref{sec:memory_value}, we evaluate the value of memory in maintaining information freshness. Finally, the paper concludes in Sec.~\ref{sec:concl}.

\section{System Model and Metric}\label{sec:system_model}

We consider an information updating system consisting of a single source, which is labeled as node $0$, and $m$ receiver nodes. We denote by $\Vec{G}=(V,\Vec{E})$ the directed graph with node set $V$ and edge set $\Vec{E}$. We let  $\Vec{G}$ represent the communication network according to which nodes exchange information. We denote the set of nodes except the source node by ${V}^*$. Let $\mathcal{N}^+_j=\{ i\in V^* , e_{ij} \in \Vec{E} \}$ be the set of nodes with out-bound connections to node $j$; denote its cardinality by $n_j$. If there is a directed edge $e_{ij}\in\vec{E}$, that is, $i \in \mathcal{N}^+_j$; we call node $j$ {\em gossiping neighbor} of node $i$. 

The information at the source is updated at times distributed according to a Poisson counter, denoted by $N_0(t)$, with rate $\lambda_s$. We refer to the time interval between $\ell$th and $(\ell+1)$th information updates (messages) as the {\em $\ell$th version cycle} and denote it by $U^\ell$. Each update is stamped by the current value of the process $N_0(t)=\ell$ and the time of the  $\ell$th update is labeled $\tau_\ell$ once it is generated. The stamp $\ell$ is called {\em version-stamp} of the information. Let $T=\{\tau_\ell\}_{\ell=1}^\infty$ be the monotonically increasing sequence of times when the status updates occur at the source node $0$, with $\tau_0:=0$. 

We call a communication network {\em $(k,n)$-TSS feasible} if the node $0$ has out-bound connections to all other nodes and the smallest in-degree of the receiver nodes is greater than $k$. We illustrate a $(2,6)$-TSS feasible network in Fig.~\ref{fig:main_model}. We assume that the source instantaneously encrypts the information update by using $(k,n)$-TSS (or any other $k$-out-of-$n$ threshold system) once it is generated. To be more precise, we assume that the source puts the information update into $n$ distinct keys, out of which $k+1$ are sufficient to decode the status update. To build upon our previous work~\cite{bayram2024age}, in this study, we consider two types of receiver node: subscriber nodes to the source and nonsubscriber nodes. Let $\mathcal{S}(\subseteq V^*)$ and $\mathcal{S}^\dagger(\subseteq V^*)$ be the sets of subscriber nodes and nonsubscriber nodes, respectively. The source sends a unique key to each subscriber node in the set $\mathcal{S}$ at $\tau_\ell$ for all $\ell$. Then, the source node sends the remaining $n-|\mathcal{S}|$ keys to nodes which are selected uniformly at random  from the set $\mathcal{S}^\dagger$ at time $\tau_\ell$. We denote the set of nodes selected at $\tau_\ell$ as $\mathcal{M}^\ell(\subseteq \mathcal{S}^\dagger)$. The source nodes selects the set $\mathcal{M}^\ell$ independently for each $\ell$. The set of nodes that receive a key directly from the source at $\tau_\ell$, denoted by $\mathcal{K}^\ell$, is defined as $\mathcal{K}^\ell:=\!\mathcal{S}\!\cup\!\mathcal{M}^\ell$ (the union of subscriber nodes and randomly selected nonsubscriber nodes at $\tau_\ell$).

Each node wishes that its knowledge of the source is as timely as possible. The timeliness is measured for an arbitrary node $j$ by the difference between the latest version of the message at the source node, $N_0(t)$, and the latest version of the message which can be {\em decrypted} at node $j$, denoted by $N^k_j(t)$. This metric has been introduced as {\em version age of information} in~\cite{yates2021age, Eryilmaz2021}. We call it  {\em $k$-keys version age of node $\!j$} at time $\!t\!$ and denote it as \begin{equation}
    A_j(k,t) := N_0(t) - N^k_j(t).
\end{equation}

Recall that in the $(k,n)$-TSS, a node needs to have  $k+1$ keys with the version stamp $\ell$ in order to decrypt the information at the source generated at $\tau_\ell$. Therefore, we consider a $(k,n)$-TSS feasible network, in which, nodes are allowed to communicate and share {\em only} the keys that are received from the source with their {\em gossiping neighbor}. We do not allow the nodes to share the keys that they gather from the other nodes ensuring that any set of messeage on a single channel is not sufficient to decode the status update at any time. The edge $e_{ij}$, connecting node $i$ to node $j$, is activated at times distributed according to the Poisson counter $N^{ij}(t)$, which has a rate $\lambda_{ij}$. All counters are pairwise independent. Once the edge $e_{ij}$ is activated, node $i$ sends a message to node $j$, instantaneously. This process occurs under two distinct schemes: {\em with memory scheme} and {\em memoryless scheme}.

\begin{figure}[t]
    \centering
    \includegraphics[width=1\linewidth]{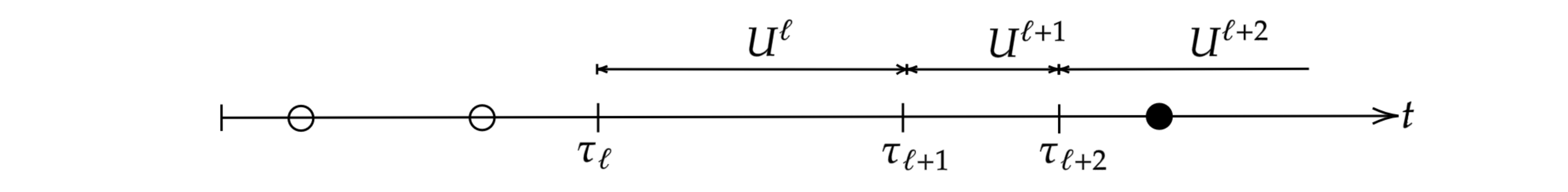}
    \vspace{-0.55cm}
     \caption{Sample timeline of the source update and the edge $e_{ij}$ activation. The last activation of $e_{ij}$ is marked by ($\bullet)$ and the previous activations of $e_{ij}$ are marked by ($\circ$).}\label{fig:timeline_for_schemes}
     \vspace{-0.55cm}
\end{figure}

In the memory scheme, nodes can store (and send) the keys of the previous updates. For example, if the edge $e_{ij}$ is activated at time $t$, node $i$ sends node $j$ {\em all} the keys that the source has sent to node $i$ since the last activation of $N^{ij}(t)$ before time $t$. For the illustration in Fig.~\ref{fig:timeline_for_schemes}, node $i$ sends the set of keys with the versions $\{\ell,\ell+1,\ell+2\}$ to node $j$ in the memory scheme. Note that this can be implemented by finite memory in a finite node network with probability $1$~\cite{bayram2024age}. In the memoryless scheme, nodes have no memory and only store the latest key obtained from the source. If the edge $e_{ij}$ is activated at time $t$, node $i$ sends node $j$ {\em only} the key that belongs to the most up-to-date information prevailing at the source at time $t$. It is worth noting that if a node does not receive a key from the source for the most up-to-date information, for example, a nonsubscriber node that is not selected to the set $\mathcal{M}^\ell$ for the latest version $\ell$, the node does not transmit any message in the memoryless scheme during the update cycle $U^\ell$. Referring again to the illustration in Fig.~\ref{fig:timeline_for_schemes}, node $i$ in this case sends only the key with the version $\{\ell+2\}$ to node $j$ (under the assumption of $i\in\mathcal{M}^{\ell+2}$).

\begin{figure}[!t]
\centering
\subfloat[]{\includegraphics[width=0.85\linewidth]{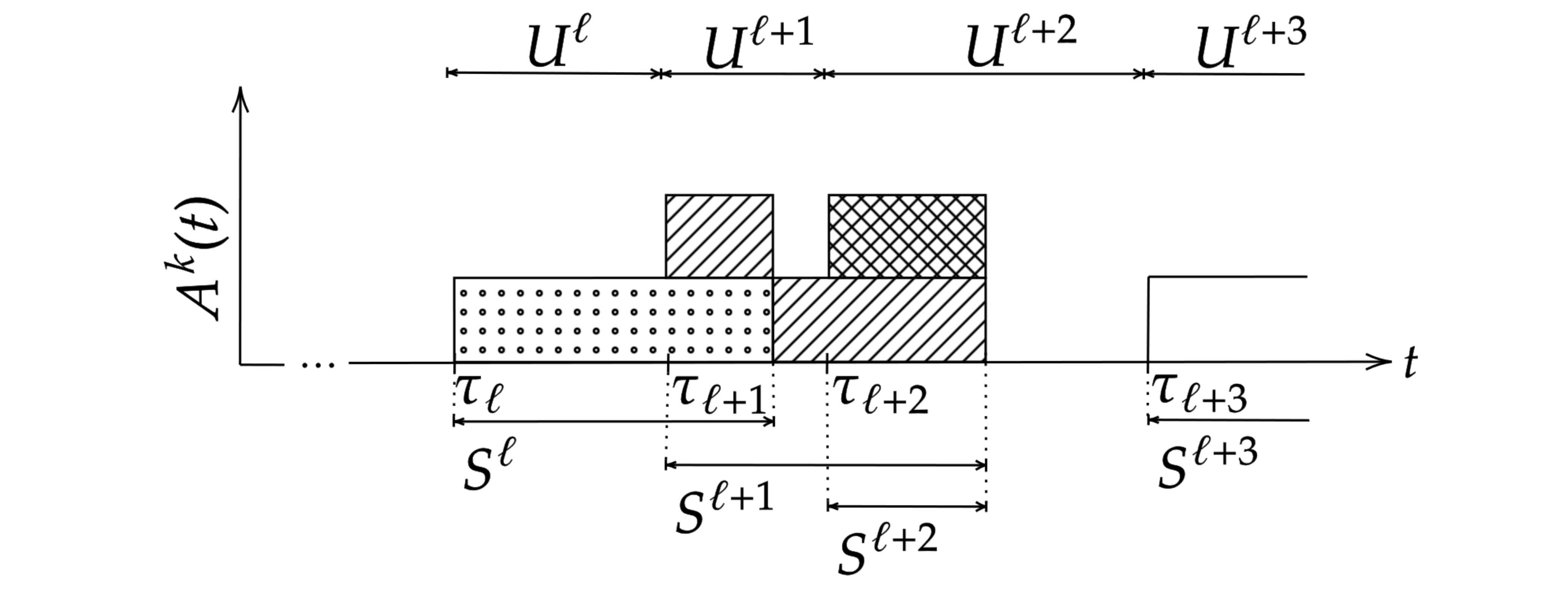}%
\label{fig:a}}
\hfil
\subfloat[]{\includegraphics[width=0.85\linewidth]{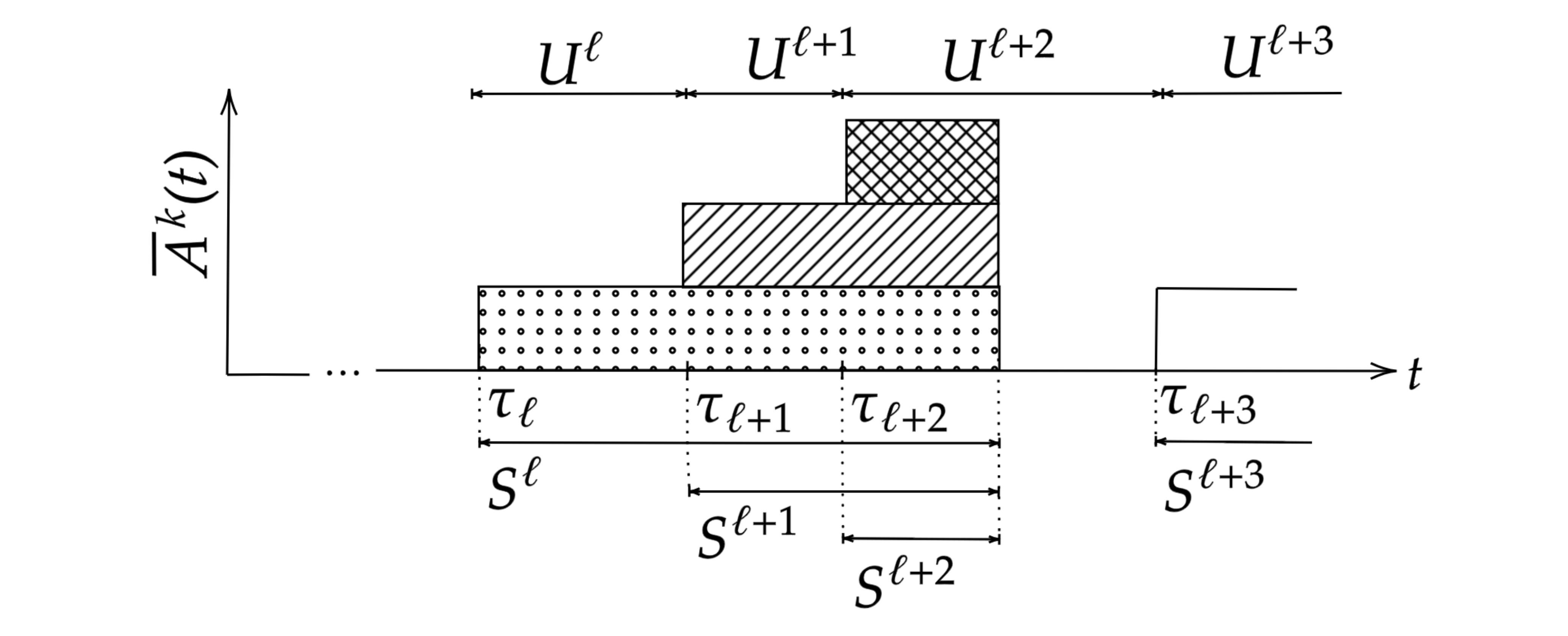}%
\label{fig:b}}
    \vspace{-0.2cm}
\caption{Sample path of the $k$-keys version age (a) $A(k,t)$ for a node with memory and (b) $\bar{A}(k,t)$ for a node without memory.}
\label{fig:path_w_memory}
\vspace{-0.5cm}
\end{figure}


Fig.~\ref{fig:path_w_memory}(a) and Fig.~\ref{fig:path_w_memory}(b) depict the sample path of the $k$-keys version age process ${A}(k,t)$ (resp.~ $\bar{A}(k,t)$) for a node with memory (resp.~without memory). It is worth noting that we associate the notation $\bar{\cdot}$ with the memoryless scheme. It is assumed that edge activations and source updates occur at the same time in both schemes in the figures. In the memory scheme, we define {\em the service time of information with version $\ell$ to an arbitrary node $j$}, denoted by $S^\ell_j$, as the duration between $\tau_\ell$ and the time when node $j$ can decrypt the information with version $\ell$, as shown in Fig.~\ref{fig:path_w_memory}(a). In the memoryless scheme, a node can miss an information update with version $\ell$, unlike in the memory scheme, if it cannot get $k$ more distinct keys before the next update arrives at $\tau_{\ell+1}$. Thus, for a node without memory, we define $S^\ell_j$ as the duration between $\tau_\ell$ and the earliest time when the node can decrypt information with a version of at least $\ell$. In Fig.~\ref{fig:path_w_memory}(b), the node misses the information with versions $\ell$ and $\ell\!+\!1$. It can only decode the information with version $\ell\!+\!2$. 
Hence, the service times $\!S^\ell\!$ and $\!S^{\ell+\!1\!}\!$ end at the same time as the service time $\!S^{\ell+2}\!$. 


In the memory scheme, let $\Delta_j(k,t)$ be the total $k$-keys version age of node $j$, defined as the integrated $k$-keys version age of node $j$, $A_j(k,\tau)$, until time $t$ which is given by:
\begin{align}\label{eqn:defn_time_av}
    \Delta_j(k,t) := \int_{0}^{t} A_j(k,\tau) d\tau.
\end{align}
Then, the time average of $k$-keys version age process of node $j$ is defined as follows:
\begin{align}\label{eqn:defn_time_av_2}
    \Delta_j(k) :=\lim_{t\to \infty} \frac{\Delta_j(k,t)}{t}  = \lim_{t\to \infty} \frac{1}{t} \int_{0}^{t} A_j(k,\tau) d\tau.
\end{align}
We interchangeably call $\Delta_j(k)$ {\em the version age of $k$-keys for node $j$}. If nodes in the network have no memory, we denote {\em the version age of $k$-keys for node $j$} by $\bar{\Delta}_j(k)$, which can be similarly defined by following~\eqref{eqn:defn_time_av} and~\eqref{eqn:defn_time_av_2} for $\bar{A}^k_j(k,\tau)$.

If we have a fully connected directed graph $\vec{G}$ on $m+1$ nodes (including the source node) with $\lambda_{ij}=\frac{\lambda_e}{(m-1)}$ for all edges $e_{ij}$ in $\vec{E}$, we call the graph $\vec{G}$ {\em scalable homogeneous network} (SHN) on $m$ nodes and $\lambda_e$ is the {\em gossip rate}.  We consider a scalable homogeneous network on $m$ nodes with $|\mathcal{S}|=s$ subscriber nodes on the graph $\vec{G}$ to implement $(k,n)$-TSS in {\em with memory} and {\em without memory} schemes. 

On a{n} SHN, the processes $A_j(k,t)$ are statistically identical for the subscriber nodes, $j \in \mathcal{S}$. Thus, we denote them by ${A}(k,t)$ for subscriber nodes. Similarly, the processes $A_j(k,t)$ are statistically identical for 
the nonsubscriber nodes, $j \in \mathcal{S}^\dagger$. Thus, we denote them by ${A}^\dagger(k,t)$ for nonsubscriber nodes. To emphasize the number of subscribers $s$ and the total number of nodes $m$, we denote the version age of $k$-keys for any subscriber node $j\!\in\!\mathcal{S}$ by $\agesubs{k}{m}{s}{n}$ and for any nonsubscriber node $j\!\in\!\mathcal{S}^\dagger$ by $\agenonsubs{k}{n}{s}{m}$. We define the average 
version age of $k$-keys over the graph $\Vec{G}$, denoted by $\ageongraph{k}{n}{s}{m}$, as follows:
\begin{align}\label{eqn:age_on_graph}
     \ageongraph{k}{n}{s}{m} := \frac{|\mathcal{S}|}{|V^*|} \agesubs{k}{n}{s}{m} +\frac{|\mathcal{S}^\dagger|}{|V^*|}  \agenonsubs{k}{n}{s}{m}. 
\end{align}
If nodes in the network have no memory, we denote the version age of $k$-keys for a subscriber node by $\agesubswo{k}{n}{s}{m}$ and for a nonsubscriber node by $\agenonsubswo{k}{n}{s}{m}$.

We have three different network types based on the number of subscriber nodes $s$, the number of total keys $n$ and the network size $m$, as {summarized in Table~\ref{tab:types}}.


\begin{table}[t]
  \renewcommand{\arraystretch}{1.3}
  \vspace{-4mm}
    \caption{Network Types Based on $\{s,n,m\}$}  \vspace{-3mm}
  \label{tab:types}
  \centering
  \begin{tabular}{|c||c|}
    \hline
    Type   & Relation \\
    \hline
    Full Subscription & $m=n=s$\\
    \hline
    Total Key Subscription & $m>n=s$\\
    \hline
    Partial Key Subscription & $m > n >s$\\
    \hline
  \end{tabular}
  \vspace{-5mm}
\end{table}
\section{Preliminaries on Order Statistics}\label{sec:prelim}
We first focus on the order statistics of a set of random variables. Consider a set of random variables $\mathcal{Y}=\{Y_i\}_{i=1}^n$. We denote the $k$th smallest variable in the set $\mathcal{Y}$ by $Y_{(k:n)}$. We call $Y_{(k:n)}$ the $k$th order statistic of $n$-samples ($k=1,2,\cdots,n$) in the set $\mathcal{Y}$. For a set of $i.i.d.$ random variables $\{ Y_i\}_{i=1}^n$that are exponentially distributed with mean $\frac{1}{\lambda}$, the expectation of the $k$th order statistic $Y_{(k:n)}$ is given by \cite{david2004order}:
\begin{equation}\label{eqn:exp_mean}
   \mathbb{E}[Y _{(k:n)}] = \frac{1}{\lambda} ( H_n - H_{n-k} ), \mbox{ where } H_n= \sum_{j=1}^n \frac{1}{j}.\\[-0.5em]
\end{equation}

Let $X_{ij}$ be the times between successive activations of the edge $e_{ij}$. Then, $X_{ij}$ is an exponential random variable with mean $1/\lambda_{ij}$. Let $\mathcal{X}_j$ be the set of random variables $X_{ij}$, $\forall i \in \mathcal{N}^+_j$. We denote the $k$th order statistic of the set $\mathcal{X}_j$ by $\mathcal{X}_{(k:n_j)}$. Let $\mathcal{X}_j(\ell)$ be the set of random variables $X_{ij}$, $\forall i \in \mathcal{N}^+_j \cap \mathcal{K}^\ell$. We denote the $k$th order statistic of the set $\mathcal{X}_j(\ell)$ by $\mathcal{X}_{(k:\gamma_j^\ell)}(\ell)$ where $\gamma_j^\ell=|\mathcal{N}^+_j \cap \mathcal{K}^\ell|$.
\section{Age Analysis for Nodes with Memory}\label{sec:w_memory}
In this section, we consider the case when nodes in the network have memory. We first analyze a $(k,n)$-TSS feasible network in full-subscription with nonhomogenoues edge activation rates. Then, we analyze a scalable homogeneous network in total key $(s=n)$ and partial key $(s<n)$ subscriptions.

\subsection{Arbitrary $(k,n)$-TSS Feasible Network in Full Subscription}
In this subsection, we consider a $(k,n)$-TSS feasible network $\Vec{G}$ in full subscription. We first compute the service time of the information to a node with memory. Then, we provide the closed-form expression for the version age of $k$-keys for any node $j$, ${\Delta}_j(k)$, in $\vec{G}$.
\begin{lemma}\label{lem:service}
If nodes have memory in full subscription, then the service time of the information with version $\ell$ to node $j$ is the $k$th order statistic of the set of exponential random variables $\mathcal{X}_j$.   
\end{lemma}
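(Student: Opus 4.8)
The plan is to reduce the service time to the $k$th order statistic of the residual (forward-recurrence) waiting times of the edges incident to node $j$, and then to argue that these residual times are distributed exactly as the interarrival times $X_{ij}$, so that the order statistic may be taken over $\mathcal{X}_j$ itself.

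First I would exploit the full-subscription assumption ($m=n=s$): at $\tau_\ell$ every node, and in particular every in-neighbor $i\in\mathcal{N}^+_j$, instantaneously receives a unique version-$\ell$ key from the source, and node $j$ itself receives one such key. Since decrypting version $\ell$ under $(k,n)$-TSS requires $k+1$ distinct version-$\ell$ keys and node $j$ already holds its own, it must collect $k$ further distinct keys. Because each neighbor $i$ holds exactly one (its own) version-$\ell$ key, these $k$ keys necessarily come from $k$ distinct in-neighbors of $j$.

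Next I would pin down when node $i$'s version-$\ell$ key first reaches node $j$. Under the memory scheme, the first activation of $e_{ij}$ after $\tau_\ell$ transmits every key node $i$ has accumulated since its previous activation; by the definition of ``first activation after $\tau_\ell$,'' that previous activation precedes $\tau_\ell$, so node $i$'s version-$\ell$ key (received at $\tau_\ell$) is among those transmitted. Hence the time for node $j$ to obtain $i$'s version-$\ell$ key equals the waiting time $W_{ij}$ from $\tau_\ell$ to the next firing of $N^{ij}(t)$, and node $j$ can decrypt version $\ell$ precisely when the $k$th of the waiting times $\{W_{ij}\}_{i\in\mathcal{N}^+_j}$ has elapsed. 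Thus $S^\ell_j$ equals the $k$th order statistic of $\{W_{ij}\}$, which is well-defined because $(k,n)$-TSS feasibility guarantees $n_j>k$.

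The crux, and the only real subtlety, is the distributional identity $W_{ij}\stackrel{d}{=}X_{ij}$. Here I would invoke the memoryless property of the exponential interarrival times: since the source counter $N_0(t)$ that determines $\tau_\ell$ is independent of every edge counter $N^{ij}(t)$, the residual time from the (random but edge-independent) instant $\tau_\ell$ to the next firing of $N^{ij}(t)$ is exponential with rate $\lambda_{ij}$, i.e.\ has exactly the law of $X_{ij}$. Pairwise independence of all counters then makes the family $\{W_{ij}\}_i$ mutually independent, so it shares the joint law of $\mathcal{X}_j=\{X_{ij}\}$, giving $S^\ell_j\stackrel{d}{=}\mathcal{X}_{(k:n_j)}$. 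The step that must be stated most carefully is this passage from interarrival to residual waiting time, as it is precisely what lets the order statistic be taken over $\mathcal{X}_j$ rather than over a time-shifted distribution.
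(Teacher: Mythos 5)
Your proof is correct and follows essentially the same route as the paper's own argument (which it defers to Lemma 1 of the conference version~\cite{bayram2024age}): in full subscription each in-neighbor of node $j$ holds exactly one version-$\ell$ key, the memory scheme guarantees that key is delivered at the first activation of $e_{ij}$ after $\tau_\ell$, and the memoryless property of the edge counters (independent of the source counter) identifies each residual waiting time with $X_{ij}$, so the service time is $\mathcal{X}_{(k:n_j)}$. One small wording fix: pairwise independence of the counters does not logically imply mutual independence of the residual times; you should simply invoke the model's intended mutual independence of the Poisson edge processes at that step.
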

See~\cite[Lemma 1]{bayram2024age} for the proof of Lemma~\ref{lem:service}. We are now in a position to state the main theorem.

\begin{theorem}\label{thm:hetero_w_memory_age}
Let $\vec{G}$ be a $(k,n)$-TSS feasible network. Consider an arbitrary node $j$ in $\Vec{G}$. The version age of $k$-keys for node $j$ with {\em memory} in full subscription is:
\begin{equation}
    \Delta_j(k) = \frac{\mathbb{E}[\mathcal{X}_{(k:{n}_j)}]}{\mathbb{E}[U]} \mbox{ w.p. } 1.
\end{equation}
where $U$ is the interarrival time for the source update. 
\end{theorem}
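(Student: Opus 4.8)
The plan is to evaluate the time average in \eqref{eqn:defn_time_av_2} by a renewal--reward (area) argument, in which each source update is an ``arrival'' that injects an amount of age-area equal to its own service time. By Lemma~\ref{lem:service}, in full subscription the service time $S^\ell_j$ of version $\ell$ to node $j$ is distributed as the $k$th order statistic $\mathcal{X}_{(k:n_j)}$, so $\mathbb{E}[S^\ell_j]=\mathbb{E}[\mathcal{X}_{(k:n_j)}]$ for every $\ell$. Since $\Vec{G}$ is $(k,n)$-TSS feasible we have $n_j>k$, hence $S^\ell_j<\infty$ w.p.\ $1$ and, with memory, every version is eventually decrypted by node $j$.

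First I would show that with memory the decryption instants $D^\ell_j:=\tau_\ell+S^\ell_j$ are nondecreasing in $\ell$. In the memory scheme an in-neighbor $i\in\mathcal{N}^+_j$ forwards its version-$\ell$ key to $j$ at the first activation of $e_{ij}$ at or after $\tau_\ell$; call this time $a_i(\ell)$. Because $\tau_\ell\le\tau_{\ell+1}$ we have $a_i(\ell)\le a_i(\ell+1)$ for each $i$, so the $k$th smallest of $\{a_i(\ell)\}_i$ is monotone in $\ell$, giving $D^\ell_j\le D^{\ell+1}_j$; moreover the memoryless property makes each $a_i(\ell)-\tau_\ell$ an independent $\mathrm{Exp}(\lambda_{ij})$ waiting time, recovering $S^\ell_j\sim\mathcal{X}_{(k:n_j)}$. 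Consequently $N^k_j(\tau)=\#\{\ell:D^\ell_j\le\tau\}$ and $N_0(\tau)=\#\{\ell:\tau_\ell\le\tau\}$, so the age reduces to the count of in-flight versions,
\begin{equation}
 A_j(k,\tau)=N_0(\tau)-N^k_j(\tau)=\#\{\ell:\tau_\ell\le\tau<D^\ell_j\}.
\end{equation}

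Next I would integrate this identity and interchange the (finite) sum with the integral, obtaining $\int_0^t A_j(k,\tau)\,d\tau=\sum_{\ell}\big(\min(D^\ell_j,t)-\tau_\ell\big)^+$, whose leading part is $\sum_{\ell=1}^{N_0(t)}S^\ell_j$; the discrepancy is the residual service of the $A_j(k,t)$ versions still in flight at $t$ and is $o(t)$ w.p.\ $1$. Writing
\begin{equation}
 \frac{1}{t}\int_0^t A_j(k,\tau)\,d\tau=\frac{N_0(t)}{t}\cdot\frac{1}{N_0(t)}\sum_{\ell=1}^{N_0(t)}S^\ell_j+o(1),
\end{equation}
I would let $t\to\infty$: by the strong law for the Poisson counter, $N_0(t)/t\to1/\mathbb{E}[U]$ w.p.\ $1$, and the Ces\`aro average of the service times converges w.p.\ $1$ to $\mathbb{E}[\mathcal{X}_{(k:n_j)}]$, which yields $\Delta_j(k)=\mathbb{E}[\mathcal{X}_{(k:n_j)}]/\mathbb{E}[U]$.

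The main obstacle is this last almost-sure convergence: the $\{S^\ell_j\}_\ell$ are identically distributed but \emph{not} independent, since consecutive versions are served by overlapping portions of the same edge-activation Poisson processes. To resolve it I would use the memoryless property of all counters to express $S^\ell_j$ as a measurable functional of the edge-activation increments on $[\tau_\ell,\infty)$, which makes $\{S^\ell_j\}$ stationary and ergodic under the Palm probability of the source-update process (equivalently, one invokes the renewal--reward/rate-conservation principle for the stationary marked point process of updates); the ergodic theorem then supplies the required Ces\`aro convergence. The remaining technicality, bounding the boundary term, follows from $A_j(k,t)/t\to0$ w.p.\ $1$, which holds because the mean number of in-flight versions is finite.
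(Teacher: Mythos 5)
Your proposal is correct in substance, but it takes a genuinely different route from the paper's. The paper defers the proof of Theorem~\ref{thm:hetero_w_memory_age} to \cite[Thm.~1]{bayram2024age}, and the machinery used there---visible in this paper's own proofs of Lemma~\ref{lem:upper_bound} and Theorem~\ref{thm:wo_memory_subset}---is regenerative: one extracts the subsequence $T^1=\{\tau_{\ell_a}\}$ of update epochs at which the age equals $1$, notes that by memorylessness of all the Poisson counters the cycle pairs $(L_a,R_a)$ are i.i.d.\ across cycles, and applies the renewal--reward theorem \cite[Thm.~6]{gallager1997discrete}; the dependence of service times \emph{within} a cycle is absorbed by a Wald-type identity, since they are identically distributed. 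You instead argue directly on the sample path: the counting identity $A_j(k,\tau)=\#\{\ell:\tau_\ell\le\tau<D^\ell_j\}$ (which requires exactly your monotonicity-of-decryption-instants observation, i.e.\ the ``no early stopping'' property the paper itself emphasizes for the memory scheme in full/total key subscription), the strong law for $N_0(t)$, and an ergodic theorem for the stationary but dependent sequence $\{S^\ell_j\}$ under the Palm measure of the update process. Both routes are sound. The paper's approach buys elementarity: regeneration converts the dependence problem into i.i.d.\ cycles, so no ergodic-theoretic input is needed. Yours buys generality: stationarity and ergodicity of the driving processes suffice, so your argument would survive non-Poisson (stationary ergodic) update or gossip streams, whereas the regenerative construction leans on memorylessness to manufacture i.i.d.\ cycles.

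Two points need tightening. First, ergodicity of the point shift on the Palm version of the marked process is asserted rather than argued; it does hold here (independent Poisson drivers are mixing under time shifts, and ergodicity transfers to the Palm point shift), but a sentence to this effect belongs in the proof. Second, your boundary estimate is not closed as stated: ``the mean number of in-flight versions is finite'' does not by itself imply $A_j(k,t)/t\to 0$ w.p.\ $1$ (first moments plus Markov and Borel--Cantelli do not sum). A clean repair is the sandwich $\sum_{\ell\le N^k_j(t)} S^\ell_j \le \int_0^t A_j(k,\tau)\,d\tau \le \sum_{\ell\le N_0(t)} S^\ell_j$, valid by monotonicity of the $D^\ell_j$, combined with the observation that $A_j(k,\tau_\ell)>x$ forces $S^{\ell-x}_j>\tau_\ell-\tau_{\ell-x}$; since $S^\ell_j$ has exponential tails and $\tau_\ell-\tau_{\ell-x}$ is a Gamma sum, this probability decays geometrically in $x$, so Borel--Cantelli gives $A_j(k,\tau_\ell)/\ell\to 0$ w.p.\ $1$, which is what the sandwich needs.
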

See~\cite[Thm. 1]{bayram2024age} for the proof of Theorem~\ref{thm:hetero_w_memory_age}. 
\vspace{-3mm}
\subsection{Scalable Homogeneous Network with $s$ Subscriber}

In this subsection, we consider a scalable homogeneous network with $m+1$ nodes. We first analyze the total key subscription case $(s=n)$. Then, we focus on the partial key subscription case $(s<n)$.

\paragraph{Total Key Subscription $(s=n)$} In this network type, each generated key (total $n$) is sent to a subscriber node in the set $\mathcal{S}$. Therefore, there is no random selection among the nonsubscriber nodes to send a key at $\tau_\ell$, that is, $\mathcal{K}^\ell=\mathcal{S}$ where the cardinality of each is $n(=s)$. We have the following corollary to Theorem~\ref{thm:hetero_w_memory_age} for an SHN with $n$ subscribers.
 
\begin{corollary}\label{cor:w_memory_age}
For a{n} SHN in total key subscription, a node with {\em memory} has the following version age of $k$-keys. If the node is {\em {a} subscriber node}, then:
\begin{align*}
 \agesubs{k}{n}{n}{m} = \frac{(m-1)\lambda_s}{\lambda_e}\left(\sum_{i=n-k}^{n-1} \frac{1}{i}\right) \mbox{w.p.~} 1.
\end{align*}
If the node is {\em {a} nonsubscriber node}, then:
\begin{align*}
 \agenonsubs{k}{n}{n}{m} = \frac{(m-1)\lambda_s}{\lambda_e} \left( \sum_{i=n-k}^{n} \frac{1}{i} \right) \mbox{w.p.~} 1.
\end{align*}
\end{corollary}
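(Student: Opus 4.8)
The plan is to reduce the corollary to Lemma~\ref{lem:service} and Theorem~\ref{thm:hetero_w_memory_age} by identifying, for each node type, the correct set of key-supplying neighbors and the correct order-statistic index, and then to evaluate the resulting expectation with the formula~\eqref{eqn:exp_mean}. The structural fact that drives everything is recalled from the system model: under the memory scheme a node may forward \emph{only} the keys it received directly from the source, and in total key subscription ($s=n$) the only direct recipients are the $n$ subscriber nodes. Consequently the nonsubscriber nodes never hold a key to relay, so for \emph{any} node $j$ the only useful in-neighbors are the subscribers in $\mathcal{N}^+_j\cap\mathcal{S}$. Because $\vec{G}$ is an SHN, $j$ is connected to all other $m-1$ nodes and every edge activates at rate $\lambda_e/(m-1)$, while the source interarrival satisfies $\mathbb{E}[U]=1/\lambda_s$.

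First I would treat a subscriber node $j\in\mathcal{S}$. It receives its own version-$\ell$ key at $\tau_\ell$ and therefore needs only $k$ additional distinct keys, which can be supplied solely by the other $n-1$ subscribers $\mathcal{S}\setminus\{j\}$. Running the argument of Lemma~\ref{lem:service} with this key-supplying set in place of all in-neighbors yields the service time $S^\ell_j=\mathcal{X}_{(k:n-1)}$, the $k$th order statistic of $n-1$ i.i.d.\ exponentials of rate $\lambda_e/(m-1)$. The renewal identity of Theorem~\ref{thm:hetero_w_memory_age} then gives $\agesubs{k}{n}{n}{m}=\mathbb{E}[\mathcal{X}_{(k:n-1)}]/\mathbb{E}[U]$ w.p.~$1$, and substituting~\eqref{eqn:exp_mean} with $\lambda=\lambda_e/(m-1)$ produces $\frac{(m-1)\lambda_s}{\lambda_e}(H_{n-1}-H_{n-1-k})=\frac{(m-1)\lambda_s}{\lambda_e}\sum_{i=n-k}^{n-1}\frac1i$, as claimed.

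Next I would treat a nonsubscriber node $j\in\mathcal{S}^\dagger$. Here $j$ holds no key of its own, so it must collect the full complement of $k+1$ distinct keys, and since $j\notin\mathcal{S}$ none of the $n$ subscribers is excluded: all $n$ of them are key-supplying in-neighbors. Repeating the memorylessness-of-the-Poisson argument behind Lemma~\ref{lem:service} (each subscriber $i$ delivers its version-$\ell$ key at the first activation of $e_{ij}$ after $\tau_\ell$, and these times are i.i.d.\ exponential), the service time becomes the $(k+1)$th order statistic $S^\ell_j=\mathcal{X}_{(k+1:n)}$. The same age-to-service-time identity underlying Theorem~\ref{thm:hetero_w_memory_age} gives $\agenonsubs{k}{n}{n}{m}=\mathbb{E}[\mathcal{X}_{(k+1:n)}]/\mathbb{E}[U]$ w.p.~$1$, and evaluating~\eqref{eqn:exp_mean} yields $\frac{(m-1)\lambda_s}{\lambda_e}(H_n-H_{n-k-1})=\frac{(m-1)\lambda_s}{\lambda_e}\sum_{i=n-k}^{n}\frac1i$.

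The main obstacle, and the step I would write most carefully, is the nonsubscriber case: Theorem~\ref{thm:hetero_w_memory_age} is stated for full subscription, so it cannot be invoked as a black box for a node that owns no key. I must re-justify two points---that the version-$\ell$ service time is still a single order statistic of i.i.d.\ exponentials (which hinges on the no-relaying-of-others'-keys rule, guaranteeing that the $n$ subscriber arrival times are independent and that no shorter decoding path exists), and that the age-equals-service-over-interarrival identity still holds when the relevant index is $k+1$ rather than $k$. Both follow the same renewal reasoning as the full-subscription proof; the only change is the shift of the order-statistic index, reflecting that the nonsubscriber starts with zero keys instead of one.
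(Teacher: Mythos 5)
Your proposal is correct and follows essentially the same route as the paper: identify the effective key-supplying neighbor set ($\mathcal{S}\setminus\{j\}$ of size $n-1$ for a subscriber, $\mathcal{S}$ of size $n$ for a nonsubscriber), apply Theorem~\ref{thm:hetero_w_memory_age} with the order-statistic index $k$ or $k+1$ accordingly, and evaluate via~\eqref{eqn:exp_mean}. Your extra care in re-justifying the nonsubscriber case (where Theorem~\ref{thm:hetero_w_memory_age}, stated for full subscription, is used with index $k+1$) is a point the paper handles more tersely by simply asserting $\mathcal{N}^+_j=\mathcal{S}$ and invoking the theorem, so your write-up is, if anything, slightly more rigorous on the same argument.
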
 
\begin{proof}
Without loss of generality, consider node $j$ in {an} SHN. In the total key subscription case, the set $\mathcal{K}^\ell$ is equal to the set $\mathcal{S}$ for any $\ell$. On the one hand, if node $j$ is a subscriber node, it receives a key from the source at $\tau_\ell$ for any $\ell$. Thus, it requires $k$ additional keys with the same version stamp to decode the status update. It can receive the remaining keys only from the nodes in the set $\mathcal{S} \setminus \{ j\}$, that is, from the set of nodes that receives unique keys (subscribers) excluding node $j$ itself. Therefore, we can assume, without loss of generality, that $\mathcal{N}^+_j$ is equal to $\mathcal{S}\setminus \{ j\}$, which implies $n_j=n-1$. Then, from Theorem~\ref{thm:hetero_w_memory_age}, we have:
\begin{align}
    \Delta_{j}(k) =  \frac{ \mathbb{E}[\mathcal{X}_{(k:n_j)}] }{\mathbb{E}[U]} =  \frac{ \mathbb{E}[\mathcal{X}_{(k:n-1)}] }{\mathbb{E}[U]}\mbox{ w.p. } 1.
\end{align}
On the other hand, if node $j$ is a nonsubscriber node, it requires $k+1$ {\em additional keys} to decode the status update and it can receive keys only from the set $\mathcal{S}$ for any status update. Thus, we consider that $\mathcal{N}_j^+$ is equal to $\mathcal{S}$, which implies $n_j=n$. Then, from Theorem~\ref{thm:hetero_w_memory_age}, we have:
\begin{align}
     \Delta_{j}(k) =   \frac{ \mathbb{E}[\mathcal{X}_{(k+1:n_j)}] }{\mathbb{E}[U]}  =  \frac{ \mathbb{E}[\mathcal{X}_{(k+1:n)}] }{\mathbb{E}[U]}\mbox{ w.p. } 1.
\end{align}
In a{n} SHN, the set $\mathcal{X}_j$ is the set of $i.i.d.$ exponential random variables with rate $\frac{\lambda_e}{(m-1)}$. From~\eqref{eqn:exp_mean}, we have $\mathbb{E}[\mathcal{X}_{(k:n-1)}]=\frac{(m-1)}{\lambda_e} ( H_{n-1} - H_{n-1-k} )$ and  $\mathbb{E}[\mathcal{X}_{(k+1:n)}]=\frac{(m-1)}{\lambda_e} ( H_{n} - H_{n-1-k} )$. This completes the proof.
\end{proof}

From Corollary~\ref{cor:w_memory_age}, it is clear that for fixed $k$ and $n$, the version age of $k$-keys on a{n} SHN in total key subscription increases linearly as the total number of nodes $m$ grows. Therefore, if the total number of keys $n$ also grows linearly with the total number of nodes in the network, that is, $n=\lfloor \alpha m \rfloor$ for some rate $\alpha \in (0,1]$, then we obtain the following scalability result.

\begin{corollary}\label{cor:w_memory_scale}
The average version age of $k$-keys over a{n} SHN in total key subscription with a countable memory is given by:
\vspace{-4mm}
\begin{align}
    \lim_{m \to \infty} \ageongraph{k}{n}{n}{m} = \frac{\lambda_s (k+1-\alpha)}{\alpha\lambda_e},
\end{align}
where $n=\lfloor \alpha m \rfloor$ for $\alpha \in (0,1]$.
\end{corollary}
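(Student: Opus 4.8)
The plan is to start from the network-averaged definition in~\eqref{eqn:age_on_graph} and feed in the per-node closed forms of Corollary~\ref{cor:w_memory_age}. In total key subscription the three cardinalities are $|\mathcal{S}|=n$, $|V^*|=m$, and $|\mathcal{S}^\dagger|=m-n$, so
\begin{align}
\ageongraph{k}{n}{n}{m} = \frac{n}{m}\,\agesubs{k}{n}{n}{m} + \frac{m-n}{m}\,\agenonsubs{k}{n}{n}{m}.
\end{align}
Substituting the two expressions from Corollary~\ref{cor:w_memory_age} and pulling out the common factor $\frac{(m-1)\lambda_s}{m\lambda_e}$ reduces the whole problem to evaluating the limit of the bracketed quantity $n\sum_{i=n-k}^{n-1}\frac{1}{i} + (m-n)\sum_{i=n-k}^{n}\frac{1}{i}$, since the prefactor trivially satisfies $\frac{(m-1)\lambda_s}{m\lambda_e}\to\frac{\lambda_s}{\lambda_e}$.

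The asymptotics are governed by the fact that $k$ is fixed while $n=\lfloor\alpha m\rfloor\to\infty$, so each sum consists of only finitely many ($k$ and $k+1$) terms, all of them pinched near $\frac{1}{n}$. I would make this precise with the elementary bounds
\begin{align}
\frac{k}{n-1} \le \sum_{i=n-k}^{n-1}\frac{1}{i} \le \frac{k}{n-k}, \qquad \frac{k+1}{n} \le \sum_{i=n-k}^{n}\frac{1}{i} \le \frac{k+1}{n-k},
\end{align}
obtained by replacing every summand with the smallest, resp.\ largest, term. Multiplying the first pair of bounds by $n$ and applying a squeeze (using $\frac{n}{n-1}\to1$ and $\frac{n}{n-k}\to1$) gives $n\sum_{i=n-k}^{n-1}\frac{1}{i}\to k$. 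For the second term I would first note that $\frac{\lfloor\alpha m\rfloor}{m}\to\alpha$ makes the floor harmless, so that $\frac{m-n}{n}\to\frac{1-\alpha}{\alpha}$ and likewise $\frac{m-n}{n-k}\to\frac{1-\alpha}{\alpha}$; multiplying the second pair of bounds by $(m-n)$ and squeezing then yields $(m-n)\sum_{i=n-k}^{n}\frac{1}{i}\to\frac{(1-\alpha)(k+1)}{\alpha}$.

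Collecting the pieces, the bracket converges to $k+\frac{(1-\alpha)(k+1)}{\alpha}$, which simplifies to $\frac{\alpha k+(1-\alpha)(k+1)}{\alpha}=\frac{k+1-\alpha}{\alpha}$, and multiplying by the limiting prefactor $\frac{\lambda_s}{\lambda_e}$ produces exactly $\frac{\lambda_s(k+1-\alpha)}{\alpha\lambda_e}$. The only genuinely delicate step is the second term, which is an indeterminate product of a diverging factor $(m-n)\sim(1-\alpha)m$ with a vanishing sum $\sim\frac{k+1}{\alpha m}$; I expect the bulk of the care to go into justifying the sandwich on that product and verifying that the floor in $n=\lfloor\alpha m\rfloor$ does not disturb the ratio $\frac{m-n}{n}$, whereas the first-term limit and the prefactor limit are entirely routine.
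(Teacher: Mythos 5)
Your proof is correct and follows essentially the same route as the paper's: both substitute the closed forms of Corollary~\ref{cor:w_memory_age} into the definition~\eqref{eqn:age_on_graph} and pass to the limit, arriving at $k\frac{\lambda_s}{\lambda_e}+\frac{(1-\alpha)(k+1)\lambda_s}{\alpha\lambda_e}=\frac{\lambda_s(k+1-\alpha)}{\alpha\lambda_e}$. The only difference is presentational: you justify the limits with explicit sandwich bounds (and handle $\alpha=1$ uniformly, since the second term then vanishes), whereas the paper passes the limit inside the finite sums directly and treats $\alpha=1$ as a separate trivial case.
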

\begin{proof} For $\alpha=1$, we have total key subscription case $n=m$. The result trivially follows from Theorem~\ref{thm:hetero_w_memory_age}. For $\alpha\in(0,1)$, we have the following from Corollary~\ref{cor:w_memory_age} and~\eqref{eqn:age_on_graph}:
\begin{align*}
    \ageongraph{k}{n}{n}{m} =& \frac{\lfloor \alpha m \rfloor}{m}\left(\sum_{i=\lfloor \alpha m \rfloor-k}^{\lfloor \alpha m \rfloor-1} \frac{\lambda_s(m-1)}{\lambda_e i}\right) \\[-2pt] &+ \frac{m - \lfloor \alpha m \rfloor}{m}\left(\sum_{i=\lfloor \alpha m \rfloor-k}^{\lfloor \alpha m \rfloor} \frac{\lambda_s(m-1)}{\lambda_e i}\right),
\end{align*}
where $n=\lfloor \alpha m \rfloor$. Then, we have the following as $m \to \infty$: 
\begin{align}
    \lim_{m \to \infty} \ageongraph{k}{n}{n}{m} &= \alpha \left(\sum_{i=1}^{k} \frac{\lambda_s}{\lambda_e \alpha }\right) + (1 - \alpha)\left(\sum_{i=1}^{k+1} \frac{\lambda_s}{\lambda_e  \alpha}\right) \\
    &= k \frac{\lambda_s}{\lambda_e } + \frac{(1 - \alpha)}{\alpha}  (k+1) \frac{\lambda_s}{\lambda_e}.\end{align}\end{proof}



\paragraph{Partial Key Subscription $(s<n)$}
In this network type, the source sends keys to $s$ subscriber nodes in the set $\mathcal{S}$ and then, randomly chooses $n-s$ nonsubscriber nodes to which it sends a unique key for the version $\ell$ at time $\tau_\ell$, denoted by $\mathcal{M}^\ell$. To be more precise, the source sends a unique key to each node in the set $\mathcal{K}^\ell(=\mathcal{M}^\ell \cup \mathcal{S})$ at time $\tau_\ell$. 

We have a complex system to analyze due to the random selection of nodes receiving a key from the source for a status update, combined with the memory capabilities of the nodes. We provide an example later in this section to illustrate the complexity of the age analysis for partial key subscription. We first provide Theorem~\ref{thm:w_memory_r_subs} that provides an upper and lower bounds for the average $k$-keys age over the graph $\Vec{G}$ in the partial key subscription in memory scheme. 

\begin{theorem}\label{thm:w_memory_r_subs}
In a{n} SHN, the average $k$-keys age over the graph $\Vec{G}$, $\ageongraph{k}{n}{s}{m}$, satisfies the following bounds:
\begin{align}\label{eqn:symbol_low_up}
    \agesubs{k}{m}{m}{m} \leq \ageongraph{k}{n}{s}{m} \leq \agenonsubs{k}{n}{n}{m}. 
\end{align}\end{theorem}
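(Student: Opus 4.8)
The plan is to reduce the statement to a comparison of \emph{per-node} ages and then to establish those comparisons by coupling the edge-activation counters and source-update times across the three relevant networks. By definition~\eqref{eqn:age_on_graph}, $\ageongraph{k}{n}{s}{m}$ is the convex combination $\tfrac{s}{m}\agesubs{k}{n}{s}{m}+\tfrac{m-s}{m}\agenonsubs{k}{n}{s}{m}$, so it lies between the smallest and the largest node age in the partial-subscription SHN. Hence it suffices to prove that \emph{every} node $j$ in that network satisfies $\agesubs{k}{m}{m}{m}\leq\Delta_j(k)\leq\agenonsubs{k}{n}{n}{m}$. The structural fact I would use throughout is that the age is a monotone functional of the sequence of decode times: writing $D_\ell:=\tau_\ell+S^\ell_j$ for the instant version $\ell$ becomes decodable at $j$, the overtaking behaviour described for Fig.~\ref{fig:path_w_memory} means $N^k_j(t)=\max\{\ell:D_\ell\leq t\}$, so a coupled pointwise ordering $D^A_\ell\leq D^B_\ell$ for all $\ell$ forces $A^A_j(k,t)\leq A^B_j(k,t)$ for all $t$ and therefore $\Delta^A_j(k)\leq\Delta^B_j(k)$. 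The whole argument thus reduces to ordering decode-time sequences on a common probability space.

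For the lower bound I would couple the partial-subscription system with the full-subscription system ($n=m=s$) using identical source times $\{\tau_\ell\}$ and identical edge counters $N^{ij}(t)$. Since a node relays only its own source key, the providers that can deliver a version-$\ell$ key to $j$ are exactly the key-holders $\mathcal{K}^\ell$, which equals $V^*$ in full subscription but is a size-$n$ subset in the partial case; moreover a full-subscription node always holds its own key. Under the coupling the set of version-$\ell$ keys present at $j$ in full subscription contains the corresponding set in the partial system at every instant, so the full-subscription node reaches the threshold of $k+1$ keys no later, giving $D^{\mathrm{full}}_\ell\leq D^{\mathrm{partial}}_\ell$ for all $\ell$ and hence $\agesubs{k}{m}{m}{m}\leq\Delta_j(k)$. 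Averaging over $j$ with the weights of~\eqref{eqn:age_on_graph} yields the left inequality.

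For the upper bound I would compare against the total-key system ($s=n$), where a nonsubscriber never receives its own key and draws from a fixed set of exactly $n$ providers, so by the service-time analysis behind Corollary~\ref{cor:w_memory_age} its per-version service time is $\mathcal{X}_{(k+1:n)}$. For a node $j$ in the partial system there are two cases per version: if $j\in\mathcal{K}^\ell$ it already holds one key and needs only $k$ more from $n-1$ providers (service $\mathcal{X}_{(k:n-1)}$), and if $j\notin\mathcal{K}^\ell$ it needs $k+1$ from $n$ providers (service $\mathcal{X}_{(k+1:n)}$). Exploiting the exchangeability of the SHN (all incoming edges of $j$ are i.i.d.\ exponential), I would match the partial-system providers within the total-key providers, letting $j$'s own key, when present, stand in for one provider; the elementary order-statistic inequality that the $k$th smallest of $n-1$ i.i.d.\ values is at most the $(k+1)$th smallest of those same values together with one more then gives $D^{\mathrm{partial}}_\ell\leq D^{\mathrm{total}}_\ell$ in both cases, hence $\Delta_j(k)\leq\agenonsubs{k}{n}{n}{m}$, and averaging yields the right inequality.

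The hard part will be making these couplings valid \emph{simultaneously for all versions}. The decode time of version $\ell$ depends on the first edge activations after $\tau_\ell$, and because the same Poisson processes are reused across versions, consecutive decode times are correlated and may overtake one another, so one cannot treat the $\{S^\ell_j\}$ as independent. The upper bound is the more delicate case: the provider identities genuinely differ between the partial and total-key systems (a varying random $\mathcal{K}^\ell$ versus the fixed set $\mathcal{S}$), so a naive identity-preserving coupling fails, and one must use the homogeneity of the network to re-match providers version by version while retaining a single realization of the activation counters. The step I expect to require the most care is verifying that the per-version order-statistic domination indeed survives the running maximum defining $N^k_j$, i.e.\ that pathwise domination of $\{D_\ell\}$ is preserved after the overtaking is taken into account.
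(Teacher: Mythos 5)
Your skeleton is right: reducing \eqref{eqn:symbol_low_up} to per-node bounds via the convex combination \eqref{eqn:age_on_graph}, and observing that the age is a monotone functional of the effective decode times, are both correct. Your lower-bound coupling is also sound, and is in fact more explicit than the paper's own one-line justification (which simply invokes Corollary~\ref{cor:w_memory_age} after noting that in full subscription every node needs exactly $k$ extra keys): with identical source epochs and edge counters, the set of version-$\ell$ keys held by node $j$ under full subscription is a superset of that held under partial subscription at every instant, so decode times are pointwise earlier. Note also that early stopping only \emph{shortens} the partial-system effective decode times, since the effective decode time of version $\ell$ is $\min_{\ell'\geq\ell}$ of the nominal ones; so the ``running maximum'' issue you flag at the end is the easy direction.

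The genuine gap is your upper bound. The version-by-version re-matching of $\mathcal{K}^\ell$ into $\mathcal{S}$ cannot be carried out ``while retaining a single realization of the activation counters'' in the memory scheme: the decode of version $\ell$ may use activations occurring after $\tau_{\ell+1}$ (service windows of consecutive versions overlap -- this is exactly what produces early stopping), so the bijection used for version $\ell$ and the one used for version $\ell+1$ must act on the same edge processes over the same time stretch, and when $\mathcal{K}^\ell\neq\mathcal{K}^{\ell+1}$ they are in general incompatible. What your argument actually needs is coordinatewise stochastic domination of the \emph{whole sequence} of effective decode times (so that a Strassen-type theorem yields a simultaneous coupling), whereas your order-statistic comparison establishes only per-version, marginal domination. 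The paper closes precisely this hole by abandoning pathwise coupling: its Lemma~\ref{lem:no_earylstop} proves the marginal bound $S^\ell_j\leq\mathcal{X}_{(k+1:n)}$ (with early stopping handled by noting it only shortens service), and its Lemma~\ref{lem:upper_bound} converts marginal bounds into a time-average bound through a renewal--reward argument: cycles are delimited by update epochs at which the age equals $1$, one writes $\agenonsubs{k}{n}{s}{m}=\mathbb{E}[R_a]/\mathbb{E}[L_a]$, bounds the summed service times inside $\mathbb{E}[R_a]$ term by term, and factors out $\mathbb{E}[\ell_{a+1}-\ell_a]$ Wald-style. The repair for your proof is therefore not to construct the simultaneous coupling, but to replace pathwise domination of $\{D_\ell\}$ by the ratio representation $\lim_{L}\sum_{\ell\leq L}S^\ell_j\,/\sum_{\ell\leq L}U^\ell$, for which version-by-version stochastic domination of $S^\ell_j$ is all that is required.
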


{In other words}, the statement says that the average $k$-keys age over the graph $\Vec{G}$ is lower bounded by the average $k$-keys age for a node with memory in full subscription $(n=m)$, $\agesubs{k}{m}{m}{m}$, and is upper bounded by the average $k$-keys age for a nonsubscriber node with memory in total key subscription $(n=s)$, $\agenonsubs{k}{n}{n}{m}$. We derive closed-form expressions for the upper and lower bounds given in~\eqref{eqn:symbol_low_up} as a corollary to Theorem~\ref{thm:w_memory_r_subs} and Corollary~\ref{cor:w_memory_age}. 
\begin{corollary}\label{cor:thm_bounds}
In a{n} SHN, the average $k$-keys age over the graph $\Vec{G}$ satisfies the following bounds:
\begin{equation}\label{eqn:thm_bounds}
\Tilde{\lambda} \left( \sum_{i=m-k}^{m-1} \frac{1}{i} \right) \leq \ageongraph{k}{n}{s}{m}\nonumber \leq \Tilde{\lambda} \left( \sum_{i=n-k}^{n} \frac{1}{i} \right) \mbox{ w.p. } 1.    
\end{equation}
where $\Tilde{\lambda}=\frac{(m-1)\lambda_s}{\lambda_e}$. 
\end{corollary}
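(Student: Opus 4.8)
The plan is to reduce everything to substitution into the closed forms already available. First I would invoke Theorem~\ref{thm:w_memory_r_subs}, which hands me the symbolic sandwich $\agesubs{k}{m}{m}{m} \leq \ageongraph{k}{n}{s}{m} \leq \agenonsubs{k}{n}{n}{m}$ directly. Consequently the entire task collapses to evaluating the two extremal quantities in closed form and recognizing the common prefactor $\Tilde{\lambda} = \frac{(m-1)\lambda_s}{\lambda_e}$; no new probabilistic argument is needed beyond what Corollary~\ref{cor:w_memory_age} supplies.

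For the upper bound, I would read off $\agenonsubs{k}{n}{n}{m}$ verbatim from the nonsubscriber formula in Corollary~\ref{cor:w_memory_age}, namely $\frac{(m-1)\lambda_s}{\lambda_e}\bigl(\sum_{i=n-k}^{n}\frac{1}{i}\bigr)$, and simply rewrite the leading constant as $\Tilde{\lambda}$. This is immediate, since the upper bound in Theorem~\ref{thm:w_memory_r_subs} is exactly the total-key-subscription nonsubscriber age that Corollary~\ref{cor:w_memory_age} computes.

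For the lower bound I would specialize the subscriber formula of Corollary~\ref{cor:w_memory_age} to $n=m$ (full subscription $m=n=s$), obtaining $\frac{(m-1)\lambda_s}{\lambda_e}\bigl(\sum_{i=m-k}^{m-1}\frac{1}{i}\bigr)=\Tilde{\lambda}\bigl(\sum_{i=m-k}^{m-1}\frac{1}{i}\bigr)$. The one point meriting care is justifying this specialization: Corollary~\ref{cor:w_memory_age} is phrased for total-key subscription $m>n=s$, whereas the lower bound $\agesubs{k}{m}{m}{m}$ concerns full subscription $m=n=s$. I would argue that full subscription is merely the boundary case of total-key subscription in which every node is a subscriber, so each node has $n_j=m-1=n-1$ gossiping neighbors and needs $k$ additional keys; Lemma~\ref{lem:service} and Theorem~\ref{thm:hetero_w_memory_age} then give the $k$th order statistic of $m-1$ i.i.d.\ exponentials, which is precisely the subscriber expression evaluated at $n=m$.

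Combining the two evaluated endpoints with the sandwich of Theorem~\ref{thm:w_memory_r_subs}, and carrying along the almost-sure qualifiers inherited from Corollary~\ref{cor:w_memory_age}, yields the claimed bounds with probability $1$. The only mild obstacle is the boundary-case bookkeeping for the lower bound; once that specialization is justified, the remainder is routine identification of terms and the prefactor $\Tilde{\lambda}$.
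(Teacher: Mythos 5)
Your proposal is correct and matches the paper's own route exactly: the paper derives Corollary~\ref{cor:thm_bounds} precisely by combining the sandwich of Theorem~\ref{thm:w_memory_r_subs} with the closed forms of Corollary~\ref{cor:w_memory_age}, evaluating the subscriber expression at $n=m$ for the lower bound and the nonsubscriber expression at $s=n$ for the upper bound. Your extra care in justifying the $n=m$ specialization (full subscription as the case where every node has $n_j=m-1$ subscriber neighbors, via Lemma~\ref{lem:service} and Theorem~\ref{thm:hetero_w_memory_age}) is sound and consistent with how the paper itself handles that boundary case in the proof of Theorem~\ref{thm:w_memory_r_subs}.
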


Later in this section, we discuss how strict the bounds given in Theorem~\ref{thm:w_memory_r_subs} are. First, we need the notion of {early stopping} to prove Theorem~\ref{thm:w_memory_r_subs}. We denote the time when node $j$ decodes the status update $\ell$ by $\tau^*_{\ell,j}$. 
\begin{definition}\label{Defn:Early_stop}
For node $j$, if $\tau^*_{\ell_1,j}>\tau^*_{\ell_2,j}$ for $\ell_1 < \ell_2$, then the version $\ell_2$ 
{\em early stops} the version $\ell_1$ at node $j$.    
\end{definition}
{In other words,} Definition~\ref{Defn:Early_stop} says that if node $j$ decodes a status update with a higher version earlier than {the} one with a lower version, the higher version is said to early stop the lower version at node $j$.

Consider Figs.~\ref{fig:path_subset}(a) and {\ref{fig:path_subset}(b) where we} 
depict the sample path of the $k$-keys version age process $A^\dagger(k,t)$ for a nonsubscriber node in total key subscription $(s=n)$ and partial key subscription $(s<n)$ cases, respectively. It is assumed that edge activations in the network and the source updates occur at the same time in both given sample paths. For simplicity, we remove the subscript $j$ from the notation in Fig.~\ref{fig:path_subset}.

\begin{figure}[!t]
\centering
\subfloat[]{\includegraphics[width=0.9\linewidth]{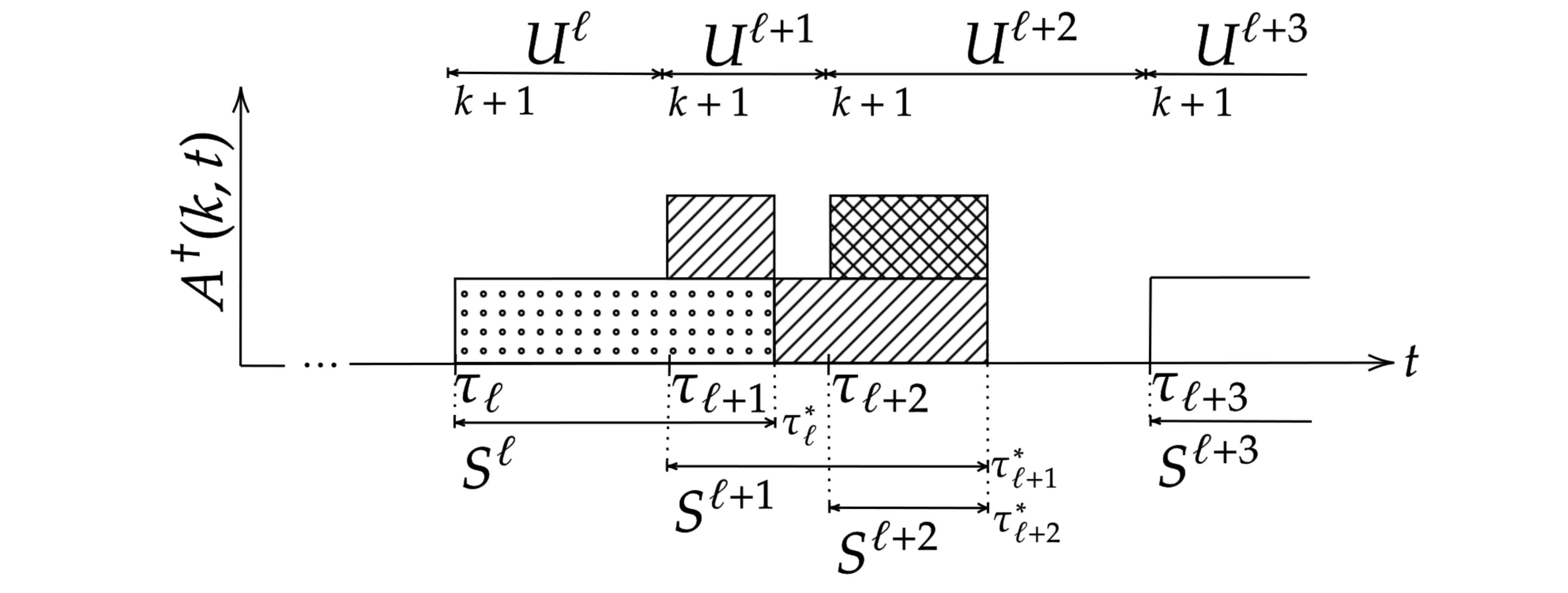}
        \label{fig:a_sub}}
\hfil
\subfloat[]{\includegraphics[width=0.90\linewidth]{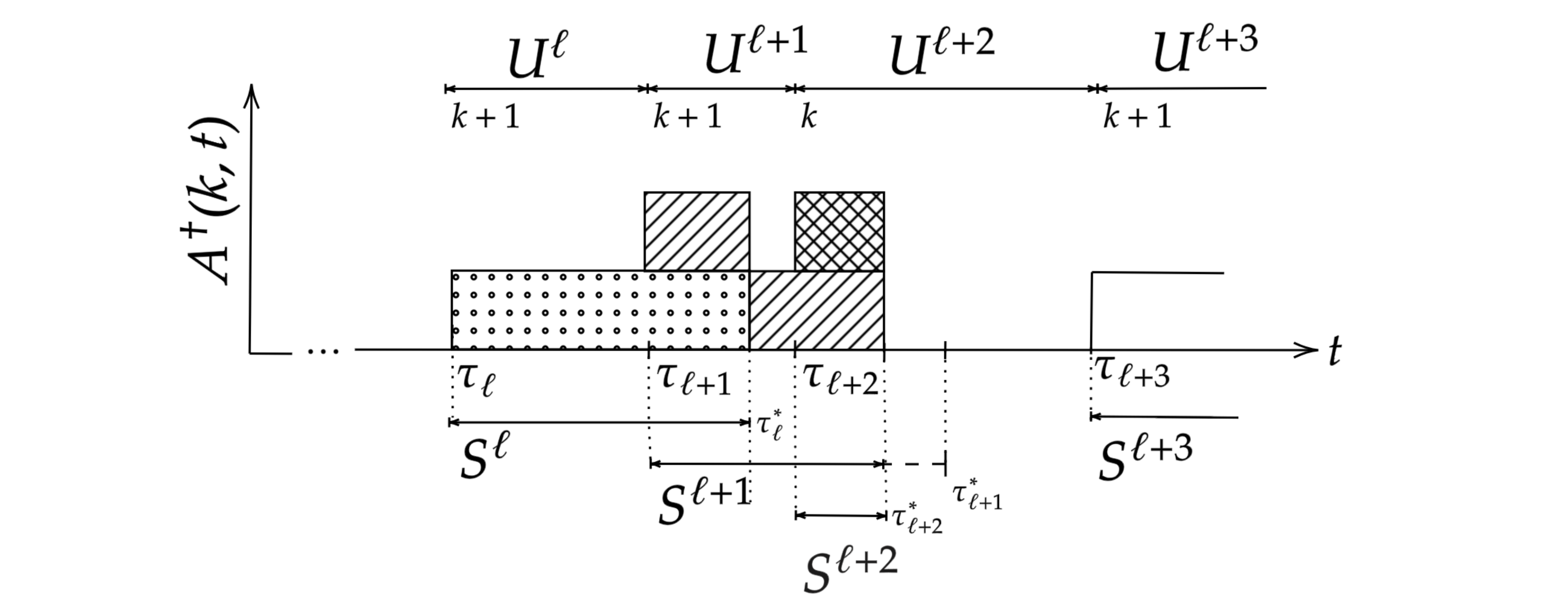}
        \label{fig:b_sub}}
    \vspace{-0.35cm}
\caption{Sample path of the $k$-keys version age $A^\dagger(k,t)$ for a nonsubscriber node with a memory in (a) total key subscription $(s=n)$, (b) partial key subscription $(s<n)$.}
\label{fig:path_subset}
\vspace{-0.5cm}
\end{figure}

On the one hand, one can easily see that {\em no} version can early stop another one at a node with memory in total key subscription. First, we have $\mathcal{S}^\dagger \cap \mathcal{K}^\ell = \emptyset $ for any $\ell$ and  $\mathcal{K}^\ell = \mathcal{S}$ by construction in  total key subscription. Furthermore, the existence of memory implies once node $j$ communicates with a subscriber node in the set $\mathcal{S}(=\mathcal{K}^{\ell_1}=\mathcal{K}^{\ell_2})$, it receives {\em all} keys for the previous status updates as well. These two show that node $j$ cannot collect $k+1$ keys for $\ell_2$ earlier than it collects $k+1$ for $\ell_1$.

For example, in the given sample path of the $k$-keys version age process in Fig.~\ref{fig:path_subset}(a), the node decodes the status update with the version stamp $\ell+2$ at $\tau^*_{\ell+2}$. It means that there are $k+1$ distinct edge activations since $\tau_{\ell+2}$ with the nodes that received keys from the source, $\mathcal{K}^{\ell+2}$. We also know that these messages also include the keys for the version $\ell_1$ due to memory scheme. Therefore, the node decodes the status update with the version $\ell+2$ and $\ell+1$ at the same time, that is, $\tau^*_{\ell+1}=\tau^*_{\ell+2}$. Clearly, time  $\tau^*_{\ell+2}$ cannot be less than $\tau^*_{\ell+1}$.


On the other hand, a version can {\em early stop} another one at a node in partial key subscription. In this case, the sets $\mathcal{K}^{\ell+2}(\supset \mathcal{S})$ and $\mathcal{K}^{\ell+1}(\supset \mathcal{S})$ do not have to be equal to each other. Therefore, having $k+1$ interaction with the nodes in $\mathcal{K}^{\ell+2}$ does not necessarily imply having $k$ (or $k+1$) distinct interactions with the nodes in $\mathcal{K}^{\ell+1}$.

For example, in the given sample path of the $k$-keys version age process in Fig.~\ref{fig:path_subset}(b), the node needs $k$ more keys to decode the status update $\ell+2$ since it is in the set $\mathcal{K}^{\ell+2}$. However, it needs $k+1$ keys to decode the status update $\ell+1$ since it is not in the set $\mathcal{K}^{\ell+1}$. Thus, we can state the event of version $\ell+2$ early stops version $\ell+1$ for the given case: 
$$
\{  \mathcal{X}_{(k:n-1)}(\ell+2)  \leq  \mathcal{X}_{(k+1:n)}(\ell+1) \}.
$$
In the given sample paths Fig.~\ref{fig:path_subset}, the node decodes the status update $\ell\!+\!2$ earlier than the status update $\ell+1$, i.e., $\tau^*_{\ell+1}\!>\!\tau^*_{\ell+2}$. It means that the node collects $k+1$ distinct keys required to decode the version $\ell\!+\!2$ earlier than $k+1$ distinct keys required to decode the version $\ell+1$. As a result, at time $\tau^*_{\ell+2}$, the status update with the version $\ell+2$ is decoded and version age reduces down to $0$ as shown in Fig.~\ref{fig:path_subset}(b). Therefore, the service time $\mathcal{S}^{\ell+1}(=\tau^*_{\ell+2}-\tau_\ell)$ is less than $\mathcal{X}_{(k+1:n)} =\tau^*_{\ell+1}-\tau_{\ell+1}$. 

We note that the given example is not the only event where early stopping occurs. For instance, consider a node that belongs to neither $\mathcal{K}^{\ell+2} $ nor $ \mathcal{K}^{\ell+1}$. Since $\mathcal{K}^{\ell+2} \neq \mathcal{K}^{\ell+1}$, it is still possible to have early stopping due to the realizations of different activations, such as the event $\mathcal{X}_{(k+1:n)}(\ell+2) \leq \mathcal{X}_{(k+1:n)}(\ell+1)$. Other scenarios where early stopping occurs can also be identified.

By construction, we have $\mathcal{X}_j(\ell+1) \cap \mathcal{X}_j(\ell+2) \neq \emptyset$, that is, the provided ordered sets includes the same edge activations, making them dependent. Additionally, the cycle length $U^\ell$ is a random variable and the source randomly selects key receiver node sets $\mathcal{K}^\ell$ for each status update. This shows that it is challenging to track which version is early stopped by another version, making it difficult to determine a closed-form age expression for the partial key subscription case. Therefore, we provide lower and upper bounds in the given theorem.

Now, we can state the first Lemma to prove Theorem~\ref{thm:w_memory_r_subs}:

\begin{lemma}\label{lem:no_earylstop}
{Assuming that} node $j$ is {a} nonsubscriber node with a countable memory in {the} {\em partial key} subscription case in a{n} SHN, the service time of version $\ell$ to node $j$ is less than or equal to the $(k+1)$th order statistic of the set of exponential random variables, $\mathcal{X}_{(k+1:n)}$. 
\end{lemma}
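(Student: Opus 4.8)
The plan is to bound the (early-stopped) service time $S^\ell_j$ by the time that node $j$ would need to decode version $\ell$ \emph{on its own}, and then to bound that decoding time by $\mathcal{X}_{(k+1:n)}$ via an explicit coupling. Two structural facts drive everything. First, a version-$\ell$ key is held only by the nodes in $\mathcal{K}^\ell$, and since a node may forward only keys it received \emph{directly} from the source, node $j$ can acquire a version-$\ell$ key only across a direct edge $e_{ij}$ with $i\in\mathcal{K}^\ell$. Second, under the countable-memory scheme the first activation of such an edge after $\tau_\ell$ already carries the version-$\ell$ key (bundled with node $i$'s accumulated keys). By memorylessness of the Poisson edge counters and their independence from the source counter, the residual waiting time from $\tau_\ell$ to that first activation is exponential with the SHN rate $\lambda_e/(m-1)$, and these residuals are i.i.d.\ across the edges incident to $j$; this is precisely the content of the set $\mathcal{X}_j(\ell)$.

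Next I would compute the time at which node $j$ decodes version $\ell$ itself, ignoring any early stopping. If $j\notin\mathcal{K}^\ell$, then in the SHN all $n$ members of $\mathcal{K}^\ell$ are neighbors of $j$, node $j$ needs $k+1$ of their keys, and the decoding time is exactly the $(k+1)$th order statistic $\mathcal{X}_{(k+1:n)}$ of those $n$ i.i.d.\ residuals. If instead $j\in\mathcal{K}^\ell$, node $j$ already holds one key and needs only $k$ more from the remaining $n-1$ members. I would treat both cases uniformly by assigning node $j$'s own key a delivery time of $0$, so that the decoding time equals the $(k+1)$th order statistic of $n$ delivery times, one of which may be $0$. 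Replacing that zero by a fresh independent exponential can only raise every order statistic, so the version-$\ell$ decoding time is pathwise at most $\mathcal{X}_{(k+1:n)}$; equivalently, the $k$th order statistic of $n-1$ exponentials is dominated by the $(k+1)$th order statistic of $n$ exponentials.

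Finally I would invoke early stopping to pass from the decoding time to the service time. By definition $S^\ell_j$ terminates at the first instant node $j$ holds a decodable version of index at least $\ell$; since decoding version $\ell$ exactly is one way to realize this event, $S^\ell_j$ is at most the version-$\ell$ decoding time computed above, which in turn is at most $\mathcal{X}_{(k+1:n)}$. Chaining the two inequalities delivers the claim.

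The main obstacle is conceptual rather than computational. One must argue carefully that early stopping can only \emph{shorten} the service time, so that the own-decoding time remains a legitimate upper bound even though the sets $\mathcal{X}_j(\ell)$ overlap and are dependent across versions; and one must route the selected ($j\in\mathcal{K}^\ell$) and unselected ($j\notin\mathcal{K}^\ell$) cases through a single coupling so that the bound $\mathcal{X}_{(k+1:n)}$ holds pathwise rather than only in expectation.
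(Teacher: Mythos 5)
Your proof is correct and follows essentially the same route as the paper's: both reduce the (possibly early-stopped) service time to the node's own decoding time of version $\ell$, identify that decoding time as $\mathcal{X}_{(k:n-1)}$ or $\mathcal{X}_{(k+1:n)}$ via Lemma~\ref{lem:service} according to whether $j\in\mathcal{M}^\ell$, and conclude from the domination $\mathcal{X}_{(k:n-1)}\leq\mathcal{X}_{(k+1:n)}$. The differences are organizational rather than substantive: the paper runs an explicit case analysis on whether an early-stopping version $\tilde{\ell}$ exists (and whether $j\in\mathcal{M}^{\tilde{\ell}}$), whereas you absorb all such cases into the single observation that the service time is the minimum of the decode times over versions at least $\ell$, and you make the order-statistic comparison precise with a pathwise coupling where the paper merely asserts it.
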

\begin{proof}
For simplicity, we remove the subscript $j$ from the notation. 
First, we assume that there is no version that early stops version $\ell$ at node $j$. Node $j$ receives a unique key from the source with probability $\frac{n-s}{m-s}$. If $j \in \mathcal{M}^\ell$, then, the service time of $\ell$ to node $j$ is equal to $\mathcal{X}_{(k:n-1)}$, otherwise it is $\mathcal{X}_{(k+1:n)}$ from Lemma~\ref{lem:service}. One can easily see that $\mathcal{X}_{(k:n-1)}$ is less than $\mathcal{X}_{(k+1:n)}$. This completes the first part of the proof.

We assume that there exists a version $\tilde{\ell}$ such that it early stops version $\ell$ at node $j$, that is, we have $\tau_{\tilde{\ell}} < \tau_\ell$ and $\tilde{\ell} > \ell$.  If $j \in \mathcal{M}^{\tilde{\ell}}$, node $j$ has at least $k$ distinct interactions with the nodes in $\mathcal{K}^{\tilde{\ell}}$ after $\tau_{\tilde{\ell}}$ so that it decode{s} the status update $\tilde{\ell}$. From Lemma~\ref{lem:service}, we know that the service time of version $\tilde{\ell}$ is $\mathcal{X}_{(k:n-1)}$. In addition to that, {\em early stopping} assumption {implies} that node $j$ has fewer than $k+1$ distinct interactions with the nodes in $\mathcal{K}^{{\ell}}$ after $\tau_{{\ell}}$. Then, we have that the service time of version ${\ell}$ is equal to $\tau^*_{\tilde{\ell}}-\tau_\ell$ instead of $\tau^*_{\ell}-\tau_\ell$ because of early stopping. This implies that the service time of version ${\ell}$, $\mathcal{S}^\ell$, is less than $\mathcal{X}_{(k+1:n)}(=\tau^*_{\ell}-\tau_\ell)$. We can conclude that the service time of version ${\ell}$ is upper bounded by the service  $\mathcal{X}_{(k+1:n)}$. If $j \notin \mathcal{M}^{\tilde{\ell}}$, node $j$ has at least $k+1$ distinct interactions with the nodes in $\mathcal{K}^{\tilde{\ell}}$ after $\tau_{\tilde{\ell}}$ so that it decodes the status update $\tilde{\ell}$. This observation leads us to the conclusion that the remaining part of the proof follows a similar pattern to the one discussed earlier. This completes the proof.\end{proof}

\begin{lemma}\label{lem:upper_bound}
In a{n} SHN, the version age of $k$-keys for a nonsubscriber node in partial key subscription, $\agenonsubs{k}{n}{s}{m}$, is less than or equal to that in total key subscription, $\agenonsubs{k}{n}{n}{m}$. To be more precise, we have the following for $s<n$:
\begin{align}
    \agenonsubs{k}{n}{s}{m} \leq \agenonsubs{k}{n}{n}{m}.
\end{align}
\end{lemma}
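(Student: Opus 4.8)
The plan is to reduce the claimed inequality to a comparison of expected service times and then invoke Lemma~\ref{lem:no_earylstop}. First I would establish the structural identity that, for any node $j$ in an SHN, the time-average version age equals the expected service time divided by the expected cycle length, $\Delta_j(k)=\mathbb{E}[S^\ell_j]/\mathbb{E}[U]$ with $\mathbb{E}[U]=1/\lambda_s$, exactly as in the service-time bookkeeping underlying Theorem~\ref{thm:hetero_w_memory_age}. To obtain this I would decompose the age integral version by version. Writing $\sigma_\ell=\tau_\ell+S^\ell_j$ for the first time node $j$ decodes a version of at least $\ell$, the version age admits the pointwise representation $A_j(k,\tau)=\sum_{\ell}\mathds{1}[\tau_\ell\le\tau<\sigma_\ell]$, since $A_j(k,\tau)=N_0(\tau)-N^k_j(\tau)$ counts exactly those versions already generated but not yet decodable. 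Integrating gives $\int_0^T A_j(k,\tau)\,d\tau=\sum_{\ell}(\sigma_\ell\wedge T-\tau_\ell)^+$, and dividing by $T$ and letting $T\to\infty$ yields $\mathbb{E}[S^\ell_j]/\mathbb{E}[U]$. The important point is that this representation remains valid under early stopping, because $\sigma_\ell$ is defined through decoding \emph{any} version $\ge\ell$; an early stop merely moves $\sigma_\ell$ to the earlier decoding instant and shortens $S^\ell_j$ accordingly, so no separate treatment of the dependence between versions is required at this stage.

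With this identity, both ages are the common factor $1/\mathbb{E}[U]$ times an expected service time. In total key subscription a nonsubscriber is never in $\mathcal{K}^\ell=\mathcal{S}$ and no version early stops another, so (as in Corollary~\ref{cor:w_memory_age}) its service time for every version is exactly $\mathcal{X}_{(k+1:n)}$, giving $\agenonsubs{k}{n}{n}{m}=\mathbb{E}[\mathcal{X}_{(k+1:n)}]/\mathbb{E}[U]$. For partial key subscription, Lemma~\ref{lem:no_earylstop} states that the service time $S^\ell_j$ of a nonsubscriber is bounded above by $\mathcal{X}_{(k+1:n)}$, whose law is the $(k+1)$th order statistic of $n$ i.i.d.\ exponentials of rate $\lambda_e/(m-1)$ in both regimes. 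Taking expectations gives $\mathbb{E}[S^\ell_j]\le\mathbb{E}[\mathcal{X}_{(k+1:n)}]$, and dividing by $\mathbb{E}[U]$ yields $\agenonsubs{k}{n}{s}{m}\le\agenonsubs{k}{n}{n}{m}$, which is the assertion.

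The step I expect to be the main obstacle is justifying the identity $\Delta_j(k)=\mathbb{E}[S^\ell_j]/\mathbb{E}[U]$ in the partial case. Unlike full or total key subscription, here the service times $\{S^\ell_j\}_\ell$ are genuinely dependent: they share the same edge-activation counters and are coupled through early stopping, and the key-holder sets $\mathcal{K}^\ell$ are themselves random and version-dependent, so the clean renewal argument of Theorem~\ref{thm:hetero_w_memory_age} does not transfer verbatim. I would resolve this by noting that the cycle lengths $U^\ell$ are i.i.d.\ exponential and the key-selection and activation mechanisms are i.i.d.\ across versions, so the driving process is stationary and ergodic; the Cesàro average $\frac{1}{L}\sum_{\ell\le L}S^\ell_j$ then converges to $\mathbb{E}[S^\ell_j]$ despite the dependence, and the single unfinished version at the horizon contributes a boundary term that vanishes after division by $T$. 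Once this representation is secured, the inequality follows immediately from Lemma~\ref{lem:no_earylstop} and monotonicity of expectation.
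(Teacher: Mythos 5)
Your proposal is correct, and its skeleton matches the paper's: both arguments rest on the same two pillars, namely (i) converting the time-average age of a nonsubscriber into an expected-service-time-per-expected-update-interval expression, and (ii) applying the early-stopping bound $S^\ell_j \leq \mathcal{X}_{(k+1:n)}$ of Lemma~\ref{lem:no_earylstop} together with the identification $\agenonsubs{k}{n}{n}{m} = \mathbb{E}[\mathcal{X}_{(k+1:n)}]/\mathbb{E}[U]$ coming from Corollary~\ref{cor:w_memory_age}. Where you genuinely differ is in how step (i) is justified. The paper never claims $\Delta_j(k)=\mathbb{E}[S^\ell_j]/\mathbb{E}[U]$ in the partial case; instead it regenerates at the epochs $\tau_{\ell_a}$ where the age equals $1$, applies the renewal-reward theorem to write $\agenonsubs{k}{n}{s}{m} = \mathbb{E}\bigl[\sum_{i} S^i_j\bigr]/\mathbb{E}\bigl[\sum_i U^i\bigr]$ over a regeneration cycle, bounds the numerator \emph{termwise} via Lemma~\ref{lem:no_earylstop}, and only then uses a Wald-type identity on the identically distributed bounds $\mathcal{X}^i_{(k+1:n)}$ --- so the dependence among service times is quarantined inside i.i.d.\ cycles and no ergodic theorem is needed. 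You instead decompose the age pointwise into indicators of outstanding versions (the same bookkeeping the paper uses implicitly to equate its reward $R_a$ with $\sum_i S^i_j$) and pass to the limit globally by appealing to stationarity and ergodicity of $\{S^\ell_j\}$. That route is legitimate and arguably more transparent, since it makes explicit that the renewal identity survives early stopping; but it carries two obligations your sketch leaves informal: first, ergodicity of the stationary sequence $\{S^\ell_j\}$ must actually be established (plausible, because the driving randomness is i.i.d.\ key selections plus independent-increment Poisson processes, but this is precisely the issue the paper's regeneration construction is designed to sidestep); second, the boundary term at the horizon $T$ involves \emph{all} outstanding versions, of which there are $A_j(k,T)$ --- not ``the single unfinished version'' --- so killing it requires tightness/positive recurrence of the age process rather than one vanishing term. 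Neither is a fatal gap: the paper's cycle argument buys a shorter path to rigor, while your argument buys a cleaner structural identity that also explains why the comparison with $\mathbb{E}[\mathcal{X}_{(k+1:n)}]$ is an expectation-level consequence of a pathwise bound.
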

\begin{proof} 
Let $T^1\!=\!\{\tau_{\ell_a}\}_{a=0}^{\infty}$ be a subsequence of $T$ such that $A(k,\tau_{\ell_a})\!=\!1$. Let $L_a$ be the time elapsed between two consecutive arrivals of the subsequence $T^1$. Let $R_a$ be the version age of $k$-keys, $A(k,t)$ integrated over the duration $[\tau_{\ell_a},\tau_{\ell_{a+1}})$ in a node. 
It is worth noting that two random variables $ S^i_j$ and $ S^{i+1}_j$ (service time of a status update to a node) are not independent (consider early stopping) in memory scheme if $\ell_a\!\leq\!i\!\leq\!\ell_{a+1}-1$ for any $a \in \mathbb{N}$, but they are identically distributed. By construction of the sequence $T^1$, a pair of $((L_a,R_a),(L_b,R_b))$ for any $a \neq b$ is $i.i.d.$.  Then, from~\cite[Thm. 6]{gallager1997discrete},  we have:
\begin{align}\label{eqn:}
   \agenonsubs{k}{n}{s}{m} = \frac{\mathbb{E}[R_a]}{\mathbb{E}[L_a]} \!= \frac{\mathbb{E}[\sum_{i=\ell_a}^{\ell_{a+1}-1} S^i_j  ]}{\mathbb{E}[ \sum_{i=\ell_a}^{\ell_{a+1}-1} U^i]}  \mbox{ w.p. }  1.
\end{align}
Consider the upper bound that we constructed for the service time for a nonsubsciber node in partial key subscription in Lemma~\ref{lem:upper_bound}; then, we have: 
\begin{align*}
\agenonsubs{k}{n}{s}{m}&= \frac{\mathbb{E}[R_a]}{\mathbb{E}[L_a]} = \frac{\mathbb{E}[ \sum_{i=\ell_a}^{\ell_{a+1}-1} S^i_j]}{ \mathbb{E}[\sum_{i=\ell_a}^{\ell_{a+1}-1} U^i]}\\ &\leq \frac{\mathbb{E}[\sum_{i=\ell_a}^{\ell_{a+1}-1} \mathcal{X}^i_{(k+1:n)}  ]}{ \mathbb{E}[\sum_{i=\ell_a}^{\ell_{a+1}-1} U^i]} \!=\! \frac{\mathbb{E}[{\ell_{a+1}\!-\!\ell_a}]  \mathbb{E}[\mathcal{X}_{(k+1:n)}] }{ \mathbb{E}[{\ell_{a+1}-\ell_a}] \mathbb{E}[U]} \\ &= \frac{\mathbb{E}[\mathcal{X}_{(k+1:n)}]}{{E}\mathbb[U]} =\agenonsubs{k}{n}{n}{m}, \quad\mbox{ w.p. } 1.
\end{align*}\end{proof}

Now, we can give the proof of Theorem~\ref{thm:w_memory_r_subs}. 
\begin{proof}[Proof of Theorem~\ref{thm:w_memory_r_subs}] \textit{Upper Bound.} First, we construct the upper bound. From the definition in~\eqref{eqn:age_on_graph}, the average $k$-keys age over the graph $\vec{G}$, $\ageongraph{k}{n}{s}{m}$ is a convex combination of $\agesubs{k}{n}{s}{m}$ and $\agenonsubs{k}{n}{s}{m}$. Subscriber nodes always need $k$ additional keys to decode the status update while nonsubscriber nodes need $k$ additional keys with probability $\frac{n-s}{m-s}$ or $k\!+\!1$ additional keys with probability $\frac{m-n}{m-s}$ to decode the status update. Then, one can easily see that we have $\agesubs{k}{n}{s}{m}\leq\agenonsubs{k}{n}{s}{m}$ and it follows:
\begin{align}\label{eqn:upper_convex}
  \agesubs{k}{n}{s}{m}  \leq \ageongraph{k}{n}{s}{m} \leq \agenonsubs{k}{n}{s}{m}. 
\end{align}
From Lemma~\ref{lem:upper_bound} and~\eqref{eqn:upper_convex}, we have the following upper bound:
\begin{align}\label{eqn:upper_n_subs_s_subs}
  \ageongraph{k}{n}{s}{m} \leq \agenonsubs{k}{n}{s}{m}  \leq \agenonsubs{k}{n}{n}{m}. 
\end{align}
\textit{Lower Bound.} We consider that $\mathcal{S}=V^*$, that is, any node in the network gets a unique key (equivalently, full subscription $n=m$). Furthermore, any node on the network always needs $k$ more keys with the version stamp $\ell$ after $\tau_\ell$ in this case. Then, we have the following lower bound for $\ageongraph{k}{n}{s}{m}$:
\begin{align}\label{eqn:lower_full_subs}
   \agesubs{k}{m}{m}{m} = \ageongraph{k}{m}{m}{m} \leq  \ageongraph{k}{n}{s}{m}, 
\end{align}
where the inequality in (\ref{eqn:lower_full_subs}) directly follows from Cor.~\ref{cor:w_memory_age}. When we combine~\eqref{eqn:upper_n_subs_s_subs} and~\eqref{eqn:lower_full_subs}, we have obtained~\eqref{eqn:symbol_low_up}, which completes the proof.\end{proof}


Next, we analyze the tightness of the bounds presented in Theorem~\ref{thm:w_memory_r_subs}
{as follows}. Let $U_B$ and $L_B$ denote the upper and lower bounds, respectively, as given in Corollary~\ref{eqn:thm_bounds}. For finite values of $k,$ $m$, $n$ in a{n} SHN with a countable memory, one can easily see that we have the following:
\begin{align*}
     k\frac{\lambda_s}{\lambda_e} \leq  L_B  \leq  \ageongraph{k}{n}{s}{m} \leq U_B \leq \frac{(m-1)(k+1)\lambda_s}{(n-k)\lambda_e}.
\end{align*} 

We use {\em the normalized relative gap between the upper and lower bounds}, $\frac{U_B-L_B}{L_B}$,
to measure the tightness of provided bounds:
\begin{align}\label{eqn:relative_gap}
      \frac{U_B-L_B}{L_B} \leq  \frac{(m-n)k + m - 1 -k + k^2}{(n-k) k }.
\end{align}

From Corollary~\ref{cor:thm_bounds} and~\eqref{eqn:relative_gap}, it is clear that for fixed $k$ and $n$, the version age of $k$-keys over a{n} SHN and the normalized relative gap between the upper and lower bounds increases linearly as the total number of nodes $m$ grows. Therefore, we consider that the total number of keys $n$ grows linearly with the total number of nodes on the network, that is, $\!n\!=\!\!\lfloor\! \alpha m \!\rfloor$ for some rate $\alpha\in (0,1]$, to obtain the following scalability result.
\begin{corollary}
The average version age of $k$-keys over an SHN in partial key subscription under memory scheme obeys:
\begin{align}
    k\frac{\lambda_s}{\lambda_e} \leq \lim_{m \to \infty} \ageongraph{k}{n}{s}{m} \leq \frac{(k+1)\lambda_s}{\alpha\lambda_e},
\end{align}
where $n=\lfloor \alpha m \rfloor$ for $\alpha \in (0,1]$. 
\end{corollary}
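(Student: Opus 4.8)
The plan is to pass to the limit $m\to\infty$ directly in the non-asymptotic bounds already established in Corollary~\ref{cor:thm_bounds} and then invoke the squeeze theorem. Recall that corollary states
\[
\Tilde{\lambda}\sum_{i=m-k}^{m-1}\frac{1}{i} \;\leq\; \ageongraph{k}{n}{s}{m} \;\leq\; \Tilde{\lambda}\sum_{i=n-k}^{n}\frac{1}{i},\qquad \Tilde{\lambda}=\frac{(m-1)\lambda_s}{\lambda_e}.
\]
Denote the left- and right-hand sides by $L_B$ and $U_B$. Since $\ageongraph{k}{n}{s}{m}$ is already trapped between $L_B$ and $U_B$, it suffices to evaluate $\lim_{m\to\infty}L_B$ and $\lim_{m\to\infty}U_B$ under the scaling $n=\lfloor\alpha m\rfloor$ with $\alpha\in(0,1]$.

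For the lower bound I would absorb the factor $m-1$ into each summand, writing
\[
L_B=\frac{\lambda_s}{\lambda_e}\sum_{i=m-k}^{m-1}\frac{m-1}{i}.
\]
This sum contains exactly $k$ terms, a number that does not depend on $m$, and every index lies in the window $m-k\leq i\leq m-1$, so each ratio $\frac{m-1}{i}$ tends to $1$ as $m\to\infty$. Because the sum has a fixed finite length, the limit may be taken term by term, yielding $\lim_{m\to\infty}L_B=k\,\lambda_s/\lambda_e$, which is the claimed lower bound.

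The upper bound is handled the same way. Writing
\[
U_B=\frac{\lambda_s}{\lambda_e}\sum_{i=n-k}^{n}\frac{m-1}{i},\qquad n=\lfloor\alpha m\rfloor,
\]
I now have $k+1$ terms, and for each fixed offset $j\in\{0,\dots,k\}$ the index $i=\lfloor\alpha m\rfloor-j$ satisfies $i/m\to\alpha$, so $\frac{m-1}{i}\to 1/\alpha$. Taking the limit term by term over the finite sum gives $\lim_{m\to\infty}U_B=(k+1)\lambda_s/(\alpha\lambda_e)$. Combining the two limits with the sandwich from Corollary~\ref{cor:thm_bounds} completes the argument.

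The computation is essentially routine; the only points deserving care are the presence of the floor function, where one must check that $\frac{m-1}{\lfloor\alpha m\rfloor-j}\to 1/\alpha$ for each of the finitely many offsets $j$, and the observation that $\lim_{m\to\infty}\ageongraph{k}{n}{s}{m}$ need not exist a priori, so the statement is properly read as bounding every limit point (equivalently, the $\liminf$ and $\limsup$) of the sequence between $k\,\lambda_s/\lambda_e$ and $(k+1)\lambda_s/(\alpha\lambda_e)$.
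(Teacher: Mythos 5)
Your proof is correct and takes essentially the same route as the paper: both arguments sandwich $\ageongraph{k}{n}{s}{m}$ between the bounds $L_B$ and $U_B$ of Corollary~\ref{cor:thm_bounds} and pass to the limit $m\to\infty$ under the scaling $n=\lfloor\alpha m\rfloor$. The only cosmetic difference is that the paper uses the uniform estimates $k\lambda_s/\lambda_e \leq L_B$ and $U_B \leq \frac{(m-1)(k+1)\lambda_s}{(n-k)\lambda_e}$ and then lets $m\to\infty$, whereas you compute the exact limits of $L_B$ and $U_B$ term by term, arriving at the same constants.
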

Similarly, we have the following scalability results for the normalized relative gap between the upper and lower bounds:
\begin{align*}
      \frac{U_B-L_B}{L_B} \leq  \frac{k-\alpha k + 1}{ \alpha k} ,\mbox{ when } m \to \infty.
\end{align*}
{To paraphrase, the statement indicates} that we have tighter bounds as the number of keys required to decode the status update, $k$, increases and the growth rate of the number of total keys, $\alpha$ with respect to the network size increases for a countably large SHN.

\vspace{-5mm}
\section{Age Analysis for Nodes without Memory}\label{sec:wo_memory}

In this section, we consider nodes in the network without memory. Similar to the previous section, we first analyze a $(k,n)$-TSS feasible network in  full subscription with nonhomogenoues edge activation rates. Then, we analyze a scalable homogeneous network  in total $(s=n)$ and partial $(s<n)$ key subscriptions with the total number of nodes $m$.

\subsection{Arbitrary $(k,n)$-TSS Feasible Network in Full Subscription}
In this subsection, we provide a closed-form expression for the version age of $k$-keys for any node without memory, $\bar{\Delta}_j(k)$, in a $(k,n)$-TSS feasible network. 

\begin{theorem}\label{thm:hetero_wo_memory_age}
Let $\vec{G}$ be a $(k,n)$-TSS feasible network. The version age of $k$-keys for node $j$ {\em without memory} is:
\begin{align}\label{eqn:hetero_wo_memory_age}
    \bar{\Delta}_j(k) = \frac{ \mathbb{E}[\min(\mathcal{X}_{(k:{n}_j)},U)]}{Pr(\mathcal{X}_{(k:{n_j})}\!\leq\!U) \mathbb{E}[U]} \mbox{ w.p. } 1, 
\end{align}
where $U$ is the interarrival times for the source update.
\end{theorem}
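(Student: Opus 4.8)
The plan is to analyze the version age process for a memoryless node via a renewal-reward argument, mirroring the structure used for the memory scheme in Theorem~\ref{thm:hetero_w_memory_age}, but now accounting for the crucial difference that a node \emph{without memory} can \emph{miss} an update entirely. The key object is the service time $S^\ell_j$: unlike the memory case where $S^\ell_j=\mathcal{X}_{(k:n_j)}$ deterministically (as a random variable), here node $j$ succeeds in decoding version $\ell$ within its own cycle only if the $k$th order statistic $\mathcal{X}_{(k:n_j)}$ of the relevant edge activations falls below the cycle length $U^\ell$; otherwise the keys for version $\ell$ are flushed when version $\ell+1$ arrives, and the node effectively skips ahead. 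So the first step is to characterize exactly when a successful decode occurs, which happens with probability $Pr(\mathcal{X}_{(k:n_j)}\leq U)$, the quantity appearing in the denominator.

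\medskip\noindent
First I would set up the sample path of $\bar{A}_j(k,t)$ as in Fig.~\ref{fig:path_w_memory}(b), identifying the epochs at which the age drops to a fresh low as the times of \emph{successful} decodes. Between two consecutive successful decodes, the age accumulates as a staircase/ramp: it jumps up by one at every source update (rate $\lambda_s$) and is reset only when a decode finally succeeds. I would then define a renewal sequence at the successful-decode epochs. The reward over one renewal interval is the integral $\int \bar{A}_j(k,\tau)\,d\tau$ over that interval, and the length is the time between successive successful decodes. Because each source cycle independently results in a success with probability $p:=Pr(\mathcal{X}_{(k:n_j)}\leq U)$ (the edge-activation counters restart memorylessly at each $\tau_\ell$ in the memoryless scheme, and $U^\ell$ are i.i.d.), the number of cycles between successes is geometric with parameter $p$, and the inter-renewal quantities form an i.i.d. sequence. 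Applying the renewal-reward theorem (as in~\cite{gallager1997discrete}) then yields $\bar{\Delta}_j(k)=\mathbb{E}[\text{reward}]/\mathbb{E}[\text{length}]$ with probability $1$.

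\medskip\noindent
The computation of $\mathbb{E}[\text{length}]$ is the easier piece: with a geometric number of cycles (mean $1/p$) each of mean length $\mathbb{E}[U]$, Wald's identity gives $\mathbb{E}[\text{length}]=\mathbb{E}[U]/p = \mathbb{E}[U]\,Pr(\mathcal{X}_{(k:n_j)}\leq U)^{-1}$, matching the denominator. The subtler piece is $\mathbb{E}[\text{reward}]$: I would argue that the expected area under the age curve over one renewal interval reduces to $\mathbb{E}[\min(\mathcal{X}_{(k:n_j)},U)]$. Intuitively, the area accrued per source cycle, conditioned on the cycle's outcome, is governed by how long the node spends holding stale keys before either decoding (at time $\mathcal{X}_{(k:n_j)}$) or flushing at the next arrival (at time $U$), which is exactly $\min(\mathcal{X}_{(k:n_j)},U)$; summing the per-cycle contributions over the geometric run and simplifying the telescoping age increments is where the bookkeeping concentrates.

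\medskip\noindent
The main obstacle I anticipate is the careful accounting of the reward integral across the staircase age profile during the missed cycles: one must verify that the accumulated area over a full renewal interval collapses cleanly to $\mathbb{E}[\min(\mathcal{X}_{(k:n_j)},U)]$ rather than to a more complicated sum depending on how many cycles were missed. This requires showing that the contributions from the ``wasted'' partial progress in unsuccessful cycles telescope correctly against the age jumps, and that the independence of edge activations across cycles (guaranteed by the memoryless scheme and pairwise-independent Poisson counters) makes the per-cycle areas i.i.d. so that Wald's identity applies to the numerator as well. Once that telescoping identity is established, dividing $\mathbb{E}[\min(\mathcal{X}_{(k:n_j)},U)]$ by $\mathbb{E}[U]\,Pr(\mathcal{X}_{(k:n_j)}\leq U)^{-1}$ gives exactly~\eqref{eqn:hetero_wo_memory_age}.
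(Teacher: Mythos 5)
Your renewal skeleton is the right one, and it is essentially the paper's own (the proof is deferred to \cite[Thm.~2]{bayram2024age}, and the same argument appears in generalized form in the proof of Theorem~\ref{thm:wo_memory_subset}): renew at the age-reset epochs, use the success-run structure of Lemma~\ref{lem:markov} to conclude that the number $M_a$ of source cycles per renewal is geometric with parameter $p:=Pr(\mathcal{X}_{(k:n_j)}\leq U)$, compute $\mathbb{E}[L_a]=\mathbb{E}[U]/p$, and apply renewal--reward. The denominator half of your argument is correct.

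The gap is in the reward, and it is not merely a bookkeeping detail: your claimed value is inconsistent with the formula you are trying to prove. The expected area under $\bar{A}_j(k,t)$ over one renewal interval is \emph{not} $\mathbb{E}[\min(\mathcal{X}_{(k:n_j)},U)]$. During the $i$th source cycle after a reset the age sits at level $i$ (the staircase grows across missed cycles), so that cycle contributes $i\cdot\min(\mathcal{X}^i_{(k:n_j)},U^i)$, giving $R_a=\sum_{i=1}^{M_a} i\,\min(\mathcal{X}^i_{(k:n_j)},U^i)$. Moreover, Wald's identity does not apply directly here: conditioned on the renewal pattern, the first $M_a-1$ minima are copies of $U$ restricted to $\{U<\mathcal{X}\}$ while the last is $\mathcal{X}$ restricted to $\{\mathcal{X}\leq U\}$, so the summands are not independent of $M_a$. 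Carrying out the conditional computation (or using the factorization in the proof of Theorem~\ref{thm:wo_memory_subset}, whose validity one checks by exactly this conditioning) yields
\begin{align}
\mathbb{E}[R_a]=\mathbb{E}\!\left[\tfrac{M_a(M_a+1)}{2}\right]\mathbb{E}[\min(\mathcal{X}_{(k:n_j)},U)]=\frac{\mathbb{E}[\min(\mathcal{X}_{(k:n_j)},U)]}{p^2},
\end{align}
and then $\mathbb{E}[R_a]/\mathbb{E}[L_a]=\mathbb{E}[\min(\mathcal{X}_{(k:n_j)},U)]/\bigl(p\,\mathbb{E}[U]\bigr)$, which is \eqref{eqn:hetero_wo_memory_age}. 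With your values the argument does not close: dividing $\mathbb{E}[\min(\mathcal{X}_{(k:n_j)},U)]$ by $\mathbb{E}[U]/p$ gives $p\,\mathbb{E}[\min(\mathcal{X}_{(k:n_j)},U)]/\mathbb{E}[U]$, which is the right-hand side of \eqref{eqn:hetero_wo_memory_age} multiplied by $p^2$. So the ``collapse to $\mathbb{E}[\min(\mathcal{X},U)]$'' that you flag as the main obstacle is in fact false as stated; the missing idea is precisely the factor $\mathbb{E}[M_a(M_a+1)/2]=1/p^2$ produced by the growing age multiplier over the geometric run of missed cycles.
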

See \cite[Thm.2]{bayram2024age} for the proof of Theorem~\ref{thm:hetero_wo_memory_age}. We focus here on the statement of Theorem~\ref{thm:hetero_wo_memory_age}. In the memoryless scheme, a node needs to get at least $k\!+1\!$ different keys with version $\ell$ before $\tau_{\ell+1}$ so that it can decrypt the status update generated at $\tau_\ell$. To be more precise, a node can decode the status update at $\tau_\ell$ if the event $E=\{\mathcal{X}_{(k:n_j)}\!\leq\!U^\ell\}$ happens, otherwise, it misses. Then, Theorem~\ref{thm:hetero_wo_memory_age} says that if the event $E$ happens in an update cycle, the accumulated age in the update cycle is $\mathcal{X}_{(k:n_j)}$, otherwise, it is $U^\ell$. The expected time of the generation of two consecutive updates that can be decrypted at node $j$ is equal to $Pr(\mathcal{X}_{(k:n_j)} \leq U)\mathbb{E}[U]$ from \cite[Lem. 2]{bayram2024age}. 
\subsection{Scalable Homogeneous Network with $s$ Subscriber}
In this subsection, we consider a scalable homogeneous network with $m+1$ nodes without memory. Unlike the previous section, we have provided closed-form expressions for both total key and partial key subscription cases. Therefore, we analyze both scenarios together.
\begin{theorem}\label{thm:wo_memory_subset}
For an SHN, the version age of $(k,n)$-TSS for an individual node without memory is the following. If a node is a nonsubscriber node, then; 
\begin{equation}\label{eqn:wo_memory_nonsubs}
    \agenonsubswo{k}{n}{s}{m}\!=\!\frac{ \alpha_0  \mathbb{E}[\min(\mathcal{X}_{(k:n\!-\!1)},U)] \!+\!  \alpha_1 \mathbb{E}[\min(\mathcal{X}_{(k\!+\!1:n)},U)] }{  \alpha_0 Pr( E_0 )\mathbb{E}[U] + \alpha_1  Pr( E_1)\mathbb{E}[U]  },
\end{equation}
with probability $1$ where $\alpha_0 = \frac{n-s}{m-s} $ and $\alpha_1 = 1 - \alpha_0$  and the events $E_0=\{\mathcal{X}_{(k:n-1)}\leq U \}$ and $E_1=\{\mathcal{X}_{(k+1:n)} \leq U\}$ for $s\leq n$. If a node is {a} subscriber node, then;
\begin{align}\label{eqn:wo_memory_subset_subs}\\[-1.5em]
    \agesubswo{k}{n}{s}{m}=  \frac{ \mathbb{E}[\min(\mathcal{X}_{(k:n-1)},U)]  }{   Pr( E_0 )\mathbb{E}[U]    }  \mbox{\quad w.p. } 1. \\[-2em]
\end{align}
\end{theorem}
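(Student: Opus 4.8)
The plan is to reduce both statements to the single-node formula of Theorem~\ref{thm:hetero_wo_memory_age} by determining, cycle by cycle, how many keys the node already holds and how many key-holding gossiping neighbors it can draw from. On an SHN the graph is complete and every edge clock is i.i.d.\ exponential with rate $\lambda_e/(m-1)$; moreover, for every version $\ell$ exactly $|\mathcal{K}^\ell|=|\mathcal{S}\cup\mathcal{M}^\ell|=n$ nodes hold a key, since $|\mathcal{S}|=s$ and $|\mathcal{M}^\ell|=n-s$ deterministically. Hence the only quantity that fluctuates is whether the node itself belongs to $\mathcal{K}^\ell$.

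For a \emph{subscriber} node $j\in\mathcal{S}$ this never fluctuates: $j$ always receives a key, always needs exactly $k$ more, and its providers are precisely the $n-1$ nodes of $\mathcal{K}^\ell\setminus\{j\}$, each equipped with an i.i.d.\ exponential activation clock. The per-cycle service requirement is therefore $\mathcal{X}_{(k:n-1)}$, irrespective of the random draw $\mathcal{M}^\ell$, so Theorem~\ref{thm:hetero_wo_memory_age} applies verbatim with $n_j=n-1$ and order index $k$, giving \eqref{eqn:wo_memory_subset_subs} with $E_0=\{\mathcal{X}_{(k:n-1)}\leq U\}$.

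For a \emph{nonsubscriber} node the effective neighborhood is random across cycles. In the memoryless scheme no key is carried between cycles and $\mathcal{M}^\ell$ is drawn independently for each $\ell$, so the events $\{j\in\mathcal{M}^\ell\}$ are i.i.d.\ with probability $\alpha_0=\frac{n-s}{m-s}$ (one of the $n-s$ selected slots among the $m-s$ nonsubscribers). On $\{j\in\mathcal{M}^\ell\}$ the node holds a key, needs $k$ more from $n-1$ providers, accumulates age $\min(\mathcal{X}_{(k:n-1)},U)$ over that cycle, and decodes iff $E_0$ occurs; on the complement (probability $\alpha_1=1-\alpha_0$) it holds no key, needs $k+1$ from $n$ providers, accumulates $\min(\mathcal{X}_{(k+1:n)},U)$, and decodes iff $E_1$ occurs. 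I would then re-run the renewal-reward argument underlying Theorem~\ref{thm:hetero_wo_memory_age}: the time-average age equals the expected per-cycle accumulated age divided by the expected time between two consecutive decodable updates, the latter being $Pr(E)\mathbb{E}[U]$ in the single-regime case. Conditioning on the i.i.d.\ selection status and applying the tower rule turns the per-cycle accumulated age into the mixture $\alpha_0\mathbb{E}[\min(\mathcal{X}_{(k:n-1)},U)]+\alpha_1\mathbb{E}[\min(\mathcal{X}_{(k+1:n)},U)]$ and the per-cycle decoding probability into $\alpha_0 Pr(E_0)+\alpha_1 Pr(E_1)$, so the denominator becomes $[\alpha_0 Pr(E_0)+\alpha_1 Pr(E_1)]\mathbb{E}[U]$; dividing yields \eqref{eqn:wo_memory_nonsubs}.

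The main obstacle is justifying that the two regimes combine linearly in both numerator and denominator, i.e.\ that renewal-reward survives when the per-cycle reward law is itself a mixture. This hinges on two facts: memorylessness makes the update cycles statistically independent, and the selection $\{j\in\mathcal{M}^\ell\}$ is independent across cycles and of the edge clocks. Together these make the pairs (per-cycle accumulated age, decoding indicator) an i.i.d.\ sequence whose common means are exactly the $\alpha$-weighted averages above; once this i.i.d.\ structure is secured the strong-law step is routine. A secondary point worth stating carefully is that $|\mathcal{M}^\ell|=n-s$ is deterministic, which is what makes the provider count exactly $n-1$ or $n$ and the marginal selection probability exactly $\alpha_0$.
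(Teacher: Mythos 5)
Your proposal is correct and takes essentially the same route as the paper: the paper likewise splits each update cycle according to whether the node lies in $\mathcal{K}^\ell$ (an event of probability $\alpha_0=\frac{n-s}{m-s}$, i.i.d.\ across cycles and independent of the edge clocks), shows via Lemma~\ref{lem:markov} that the number of missed updates between decodable ones is geometric with the mixture rate $\mu=\alpha_0 Pr(E_0)+\alpha_1 Pr(E_1)$, and then applies the same renewal-reward theorem to obtain the $\alpha$-weighted numerator and denominator. The only cosmetic difference is that the paper gets the subscriber formula by setting $\alpha_0=1$ in the nonsubscriber derivation, whereas you invoke Theorem~\ref{thm:hetero_wo_memory_age} directly with $n_j=n-1$.
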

We have an closed-form expressions for both $\agenonsubswo{k}{n}{s}{m}$  and $\agesubswo{k}{n}{s}{m}$, which will be provided following stating the proof of Theorem~\ref{thm:wo_memory_subset}. {For that} we need the following lemma.
\begin{lemma}\label{lem:markov}
Let $\bar{A}^k_j[\ell]=\Bar{A}_j(k,\tau_\ell)$ be the version age of information at $\tau_\ell$ for the node $j$. The sequence $\bar{A}^k_j[\ell]$ is homogeneous success-run with the rate $\mu_j\!=\!Pr(\mathcal{X}^j_{(k:n_j)}\!\leq\!U)$.
\end{lemma}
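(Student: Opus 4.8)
The plan is to track the version-age sequence $\bar{A}^k_j[\ell]$ only at the source-update epochs $\tau_\ell$ and to show that its one-step evolution takes exactly one of two forms, each occurring with a probability that depends neither on $\ell$ nor on the current age. First I would fix a single cycle $U^\ell=[\tau_\ell,\tau_{\ell+1})$ and classify what can happen to node $j$ during it. Since in the memoryless scheme the keys of version $\ell$ are discarded at $\tau_{\ell+1}$, node $j$ can decode version $\ell$ only if it collects its remaining keys before the next update arrives, i.e.\ on the event $E^\ell=\{\mathcal{X}_{(k:n_j)}\leq U^\ell\}$; this is precisely the ``decode versus miss'' dichotomy underlying Theorem~\ref{thm:hetero_wo_memory_age}, with the service time given by the $k$th order statistic as in Lemma~\ref{lem:service}.

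Next I would translate these two outcomes into a recursion for the sampled age. If $E^\ell$ occurs, then at $\tau_{\ell+1}$ the freshest version decodable at node $j$ is exactly $\ell$ while the source holds $\ell+1$, so $\bar{A}^k_j[\ell+1]=1$; if $E^\ell$ fails, node $j$ retains its previously decoded version (it cannot recover version $\ell$ later, having no memory) while the source advances by one, so $\bar{A}^k_j[\ell+1]=\bar{A}^k_j[\ell]+1$. Thus at every step the chain either resets to $1$ or climbs by one, which is exactly the success-run structure I want to exhibit.

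It then remains to establish the Markov property, the homogeneity, and the value of the rate. Here I would invoke the memoryless property of the Poisson counters: measured from $\tau_\ell$, the residual inter-activation times of the edges into $j$ are again independent exponentials with the same rates $\lambda_{ij}$, so $\mathcal{X}_{(k:n_j)}$ regenerates at every update epoch; likewise the cycle lengths $U^\ell$ are i.i.d.\ copies of $U$, independent of the edge processes by the assumed pairwise independence of the counters. Consequently the indicator $\mathbf{1}_{E^\ell}$ is independent of the history $\mathcal{F}_{\tau_\ell}$ and, in particular, of $\bar{A}^k_j[\ell]$. This single independence fact delivers everything at once: the next state is a function only of the current state and fresh randomness (Markov), the reset probability is the same in every state (homogeneous), and that common probability is $\Pr(E^\ell)=\Pr(\mathcal{X}_{(k:n_j)}\leq U)=\mu_j$.

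The hard part will be the independence claim of the third step, namely that ``node $j$ collected its missing keys within this cycle'' is independent of the accumulated age and of the entire past. I would argue this carefully by noting that $E^\ell$ is measurable with respect to the edge activations and the cycle length lying strictly inside $[\tau_\ell,\tau_{\ell+1})$, and that the strong memoryless (regenerative) property of the Poisson clocks at $\tau_\ell$ erases any dependence of these quantities on how the age was built up before $\tau_\ell$; this is what rules out a hidden coupling between the current state and the success event and thereby secures both the Markov and the homogeneity properties.
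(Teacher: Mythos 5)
Your proposal is correct and is essentially the argument the paper relies on: the paper defers this proof to \cite[Lem.~2]{bayram2024age}, where the reasoning is exactly your decode-or-miss dichotomy within each update cycle, the resulting reset-to-$1$ / increment-by-$1$ recursion for $\bar{A}^k_j[\ell]$, and the regeneration of the edge clocks and the cycle length at each $\tau_\ell$ by the (strong) memoryless property of the Poisson counters, which makes the success indicators i.i.d.\ Bernoulli with parameter $\mu_j=Pr(\mathcal{X}^j_{(k:n_j)}\leq U)$. Your explicit handling of the independence of $\mathds{1}_{E^\ell}$ from $\mathcal{F}_{\tau_\ell}$ (hence from the accumulated age) is precisely the step that yields both the Markov property and homogeneity, so there is no gap.
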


See \cite[Lem. 2]{bayram2024age} for the proof of Lemma~\ref{lem:markov}. 
Now, we can prove Theorem~\ref{thm:wo_memory_subset}. 

\begin{proof}[Proof of Theorem~\ref{thm:wo_memory_subset}] We first prove the theorem for a nonsubscriber node~\eqref{eqn:wo_memory_nonsubs}. Consider a nonsubscriber node. Let $\bar{A}^\dagger[\ell]$ be the version age of information at $\tau_\ell$ for a nonsubscriber node. Let $E_{\mathcal{K}^\ell}$ be the event that the node is in the set $\mathcal{K}^\ell$. On the one hand, once the event $E_{K^\ell}$ happens at $\tau_\ell$, the node gets a unique key from the source, and thus it needs to get $k$ more keys with the version stamp $\ell$ before $\tau_{\ell+1}$ to decode the status update $\ell$. We know that $\mathcal{K}^\ell$ is fixed in $[\tau_\ell,\tau_{\ell+1})$. Thus, equivalently, we can assume as if we had a network such that the set $\mathcal{N}^+$ for the node is equal to the set $\mathcal{K}^\ell$ during $[\tau_\ell,\tau_{\ell+1})$. It is worth noting that the nodes in the set $V^* \setminus \mathcal{K}^\ell$ do not have the key with version-stamp $\ell$. Therefore, the nodes in the set $V^* \setminus \mathcal{K}^\ell$ do not spread any messeage in $[\tau_\ell,\tau_{\ell+1})$. Then, if the event $E_{\mathcal{K}^\ell}$ happens, we have $n_j=n-1$, otherwise, $n_j=n$.

Let $T^1\!=\!\{\tau_{\ell_a}\}_{a=0}^{\infty}$ be a subsequence of $T$ such that $\bar{A}_j(k,\tau_{\ell_a})\!=\!1$. Let $R_a$ be version age of $k$-keys, $\bar{A}^\dagger(k,t)$ integrated over the duration $[\tau_{\ell_a},\tau_{\ell_{a+1}})$ in a node. Then, 
\begin{align*}
        R_a =& \int^{\tau_{\ell_{a+1}-1}}_{\tau_{\ell_a}} \!\bar{A}^\dagger(k,t) d t \\[-2pt]
        =& \sum^{\ell_{a+1}-1}_{i=\ell_{a}} \bar{A}^\dagger[i]   \mathds{1}_{E_{\mathcal{K}^i}} ( \min( \mathcal{X}^{i}_{(k:n-1)},U^{i} ) ) \\[-2pt] &+  \sum^{\ell_{a+1}-1}_{i=\ell_{a}} \bar{A}^\dagger[i]  \mathds{1}_{E^c_{\mathcal{K}^i}} ( \min( \mathcal{X}^{i}_{(k+1:n)},U^{i} ) ). 
\end{align*}
The pair of random variables $\mathcal{X}^{i}_{(k+1:n)}$ and $\mathcal{X}^{i+1}_{(k+1:n)}$ are identically distributed. Thus, we have $\mathbb{E}[\mathcal{X}_{(k+1:n)}]=\mathbb{E}[\mathcal{X}^i_{(k+1:n)}]$ for any $i$. Clearly, the random process $\bar{A}^\dagger[i]$ and the event $E_{\mathcal{K}^i}$ are independent. Then, we have:
\begin{align*}
        {\mathbb{E}[R_a]} =& \alpha_0 \mathbb{E}[\min( \mathcal{X}_{(k:n-1)},U )] \mathbb{E}\left[ \sum^{\ell_{a+1}-1}_{i=\ell_{a}} \bar{A}^\dagger[i]  \right] \\[-2pt] &+ \alpha_1 \mathbb{E}[ \min( \mathcal{X}_{(k+1:n)},U ) ]  \mathbb{E}\left[ \sum^{\ell_{a+1}-1}_{i=\ell_{a}} \bar{A}^\dagger[i]  \right], 
\end{align*}
where $\alpha_0=Pr(E_{\mathcal{K}^i})=\frac{n-s}{m-s}$ for any $i$ and $\alpha_1 =  1- \alpha_0$. Let $M_a:=\bar{A}^\dagger[\ell_{a+1}-1]$ be the number of status updates that the node has missed in $[\tau_{\ell_a},\tau_{\ell_{a+1}})$, that is, $M_a:=\ell_{a+1}-\ell_{a}$. Then, we have:
\begin{align}\label{eqn:closed_a_Ra}
        {\mathbb{E}[R_a]} =& \alpha_0 \mathbb{E}[\min( \mathcal{X}_{(k:n-1)},U )] \mathbb{E}\left[ \frac{M_a(M_a+1)}{2} \right]  \\[-2pt] &+ \alpha_1 \mathbb{E}[ \min( \mathcal{X}_{(k+1:n)},U ) ] \mathbb{E}\left[ \frac{M_a(M_a+1)}{2} \right]. 
\end{align}
As a corollary of Lemma~\ref{lem:markov}, the sequence $\bar{A}^\dagger[i]$ is homogeneous success-run with the rate $Pr(\{\mathcal{X}_{(k:n-1)}\leq U \})$ if the event $E_{\mathcal{K}^i}$ happens at $\tau_i$. On the other hand, if the event $E_{\mathcal{K}^i}$ does not happen at $\tau_i$, similarly, as a corollary of Lemma~\ref{lem:markov}, the sequence $\bar{A}^\dagger[i]$ is homogeneous success-run with the rate $Pr(\{\mathcal{X}_{(k+1:n)}\leq U \})$. Thus, by combining both of these cases, the sequence $\bar{A}^\dagger[\ell]$ is homogeneous success-run chain with the following rate: 
$$
\mu = \alpha_0 Pr(\{\mathcal{X}_{(k:n-1)}\leq U \}) + \alpha_1 Pr(\{\mathcal{X}_{(k+1:n)}\leq U \}).
$$
This implies that $M_a$ has a geometric distribution for sufficiently large $a \in \mathbb{N}$ with rate $\mu$.
Let $L_a$ be the elapsed time between two consecutive successful arrivals of the subsequence $T^1$. Then, we have:
\begin{equation}\label{eqn:closed_L_a}
    \mathbb{E}[L_a] = \mathbb{E}\left[ \sum^{\ell_{a+1}-1}_{i=\ell_{a}} U^i \right] = \mathbb{E}[U]\mathbb{E}[ M_a ] = \frac{\mathbb{E}[U]}{\mu}.
\end{equation}
When we consider $M_a \sim \mbox{geo}(\mu)$ for~\eqref{eqn:closed_a_Ra} and~\eqref{eqn:closed_L_a}, we have:
 \begin{align}\label{eqn:R_a_over_L_a}
        \frac{\mathbb{E}[R_a]}{\mathbb{E}[L_a]} &= \frac{ \alpha_0 \mathbb{E}[\min( \mathcal{X}_{(k:n-1)},U )] + \alpha_1 \mathbb{E}[ \min( \mathcal{X}_{(k+1:n)},U ) ] }{ \mu\mathbb{E}[U]}.
\end{align}
By {the} definition of the subsequence $T^1$, a pair of $(\!L_a,\!R_a\!)$ and $(L_b,R_b)$ for any $a \neq b$ is $i.i.d.$. From~\cite[Thm. 6]{gallager1997discrete}, we find the time average $\agenonsubswo{k}{n}{s}{m}$, thus {completing} the proof for~\eqref{eqn:wo_memory_nonsubs}.

Now, if the node is a subscriber node, then one can easily see that we have $Pr(E_{\mathcal{K}^i})=1$ for all $i$ (equivalently, $\alpha_0=1$) and $\mu=Pr(\{\mathcal{X}_{(k:n-1)}\leq U \})$. When we plug these into~\eqref{eqn:R_a_over_L_a}, we have~\eqref{eqn:wo_memory_subset_subs}, which completes the proof.\end{proof}

We need the following proposition to provide a closed-form expressions for $\agenonsubswo{k}{n}{s}{m}$ in~\eqref{eqn:wo_memory_nonsubs} and $\agesubswo{k}{n}{s}{m}$~\eqref{eqn:wo_memory_subset_subs}.

\begin{prop}\label{prop:closed_from_for_order}
Consider a set of $i.i.d.$ exponentially distributed random variables $\mathcal{X}$ with cardinality $\Tilde{n}-1$ and mean $\frac{m-1}{\lambda_e}$ and exponentially distributed random variable $U$ with mean $\frac{1}{\lambda_s}$. Then, we have the following closed-form expressions for $\tilde{k}<\tilde{n}$:
\begin{align}\label{eqn:prop_min} \\[-2em] 
  \mathbb{E}[\min(\mathcal{X}_{\tilde{k}:\Tilde{n}-1},U)]=  \sum\limits_{j=1}^{\tilde{k}} \frac{\mathcal{A}(\tilde{n},m,j-1)}{\lambda_e \mathcal{B}(\tilde{n},m,j) + \lambda_s},  
\end{align}
where $\!Pr(\mathcal{X}_{\tilde{k}:\Tilde{n}\!-\!1} \!\leq \!U)\!\!=\! {   \mathcal{A}(\tilde{n},m,\Tilde{k}) }$ with $\!\mathcal{A}(n,m,j)\!\!:=\!\!\! \prod\limits_{i=1}^{j}\!\!\!\frac{\lambda_e \mathcal{B}(n,m,i)}{\lambda_e \mathcal{B}(n,m,i)\! + \!\lambda_s }$ for $j\!\!\geq\!\! 1$, $\mathcal{A}(n,m,0)\!\!:=\!\!\!1$, and $\!\mathcal{B}(n,m,i)\!\!:=\!\!\frac{n\!-\!i}{m\!-\!1}$.
\end{prop}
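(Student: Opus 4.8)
The plan is to exploit the spacing (Rényi) representation of exponential order statistics, which turns the $\tilde{k}$th order statistic into a sum of independent exponentials with explicit rates, and then to reduce $\mathbb{E}[\min(\cdot,U)]$ to a recursion using the memorylessness of $U$. First I would write the set $\mathcal{X}$ as $\tilde{n}-1$ i.i.d. exponential random variables of rate $\beta := \frac{\lambda_e}{m-1}$. By the Rényi representation, $\mathcal{X}_{(\tilde{k}:\tilde{n}-1)} = \sum_{i=1}^{\tilde{k}} W_i$, where the spacings $W_i := \mathcal{X}_{(i:\tilde{n}-1)} - \mathcal{X}_{(i-1:\tilde{n}-1)}$ are mutually independent with $W_i \sim \mathrm{Exp}((\tilde{n}-i)\beta)$. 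Since $(\tilde{n}-i)\beta = \lambda_e \frac{\tilde{n}-i}{m-1} = \lambda_e \mathcal{B}(\tilde{n},m,i)$, each $W_i$ has exactly the rate $\lambda_e\mathcal{B}(\tilde{n},m,i)$ appearing in the denominators of the claim.

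Next I would interpret $\min(\mathcal{X}_{(\tilde{k}:\tilde{n}-1)}, U)$ as a race: run the pure-birth process that accumulates the spacings $W_1, W_2, \dots$ against an independent killing clock $U \sim \mathrm{Exp}(\lambda_s)$, stopping at the first of the two. Conditioning on the first event, the time to it is $\min(W_1,U)\sim \mathrm{Exp}(\lambda_e\mathcal{B}(\tilde{n},m,1)+\lambda_s)$, which is independent of the winner, and $W_1$ wins with probability $\frac{\lambda_e\mathcal{B}(\tilde{n},m,1)}{\lambda_e\mathcal{B}(\tilde{n},m,1)+\lambda_s}=\mathcal{A}(\tilde{n},m,1)$. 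If $U$ wins we stop with $\min=U$; if $W_1$ wins, memorylessness of $U$ supplies a fresh residual $U'\sim\mathrm{Exp}(\lambda_s)$ and what remains is the identical problem for the $(\tilde{k}-1)$th order statistic of the remaining $\tilde{n}-2$ i.i.d. exponentials. This gives the recursion
\begin{align*}
\mathbb{E}[\min(\mathcal{X}_{(\tilde{k}:\tilde{n}-1)},U)] = \frac{1}{\lambda_e\mathcal{B}(\tilde{n},m,1)+\lambda_s} + \mathcal{A}(\tilde{n},m,1)\,\mathbb{E}[\min(\mathcal{X}'_{(\tilde{k}-1:\tilde{n}-2)},U')],
\end{align*}
together with the multiplicative recursion $Pr(\mathcal{X}_{(\tilde{k}:\tilde{n}-1)}\le U)=\mathcal{A}(\tilde{n},m,1)\,Pr(\mathcal{X}'_{(\tilde{k}-1:\tilde{n}-2)}\le U')$ for the probability.

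Then I would close the induction on $\tilde{k}$. The base case $\tilde{k}=1$ is immediate: $\mathcal{X}_{(1:\tilde{n}-1)}$ is $\mathrm{Exp}(\lambda_e\mathcal{B}(\tilde{n},m,1))$, so its minimum with $U$ is $\mathrm{Exp}(\lambda_e\mathcal{B}(\tilde{n},m,1)+\lambda_s)$ with mean $\frac{\mathcal{A}(\tilde{n},m,0)}{\lambda_e\mathcal{B}(\tilde{n},m,1)+\lambda_s}$ and $Pr=\mathcal{A}(\tilde{n},m,1)$. For the inductive step I substitute the hypothesis at parameters $(\tilde{n}-1,\tilde{k}-1)$ and apply the index-shift identity $\mathcal{B}(\tilde{n}-1,m,j)=\mathcal{B}(\tilde{n},m,j+1)$ and the induced product identity $\mathcal{A}(\tilde{n},m,1)\,\mathcal{A}(\tilde{n}-1,m,j-1)=\mathcal{A}(\tilde{n},m,j)$ to re-index the resulting sum, then reattach the $j=1$ term $\frac{\mathcal{A}(\tilde{n},m,0)}{\lambda_e\mathcal{B}(\tilde{n},m,1)+\lambda_s}$, recovering $\sum_{j=1}^{\tilde{k}}\frac{\mathcal{A}(\tilde{n},m,j-1)}{\lambda_e\mathcal{B}(\tilde{n},m,j)+\lambda_s}$.

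The hard part will be the conditioning/memorylessness step that legitimizes the recursion: one must argue carefully that, on the event a spacing fires before $U$, the residual of $U$ is again $\mathrm{Exp}(\lambda_s)$ and is independent both of the elapsed time and of the freshly posed order-statistics problem on the surviving variables. The rest is bookkeeping, namely verifying the shift $\mathcal{B}(\tilde{n}-1,m,\cdot)=\mathcal{B}(\tilde{n},m,\cdot+1)$ and the telescoping of the $\mathcal{A}$-products. Alternatively, one can sidestep the explicit recursion and read the whole sum off the absorbing Markov chain directly: $\mathcal{A}(\tilde{n},m,j-1)$ is the probability of collecting $j-1$ spacings before $U$ fires and $\frac{1}{\lambda_e\mathcal{B}(\tilde{n},m,j)+\lambda_s}$ is the mean sojourn before the next event, so $\mathbb{E}[\min(\cdot,U)]$ equals the sum over states of reach-probability times mean sojourn, which is precisely the claimed expression.
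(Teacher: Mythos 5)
Your proposal is correct and follows essentially the same route as the paper's proof: both rest on decomposing $\mathcal{X}_{(\tilde{k}:\tilde{n}-1)}$ into independent exponential spacings (the R\'enyi representation, with rates $\lambda_e\mathcal{B}(\tilde{n},m,i)$) and then exploiting the memorylessness of $U$ to handle $\min(\cdot,U)$ stage by stage. The only difference is bookkeeping—you close the argument by induction on $\tilde{k}$ with index-shift identities, while the paper expands $\mathbb{E}[\min(\mathcal{X}_{(\tilde{k}:\tilde{n}-1)},U)]$ in one shot via the law of total expectation and rearranges; indeed, your ``alternative'' reading (reach probability times mean sojourn) is precisely the paper's rearranged sum.
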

\vspace{-3mm}
\begin{proof} 
Let $Y^{\Tilde{k}}$ be $\min\{\mathcal{X}_{(\Tilde{k}:\tilde{n}-1)},U\}$. Let $\Tilde{X}_{i:\tilde{n}-1}$ be the difference between {the} $i$th and $(i- 1)$th order statistics of the set $\mathcal{X}$ for $i\geq 1$ and $\Tilde{X}_{1:\tilde{n}-1} = \mathcal{X}_{(1:\tilde{n}-1)}$. Then, we have,
\begin{align*} 
  \\[-2em] 
    \mathcal{X}_{(\tilde{k}:\tilde{n}-1)} = \sum_{j=1}^{\tilde{k}} \Tilde{X}_{j:\tilde{n}-1}
\end{align*}
by construction. Let $\Tilde{Y}_i = \min\{\Tilde{X}_{i:\tilde{n}-1},U\}$. One can see that, from the memoryless property, the random variable (r.v.) $\Tilde{X}_{i:\tilde{n}-1}$ corresponds to the minimum of a set of $(\tilde{n} - i)$ $i.i.d.$ r.v. from an exponential distribution with mean $\frac{(m-1)}{\lambda_e}$. Thus, the r.v. $\Tilde{X}_{i:\tilde{n}-1}$ is also an exponentially distributed r.v., the parameter of which is equal to the sum of the parameters of $(\tilde{n}-i)$ $i.i.d.$ r.v., and precisely its mean is equal to $\frac{(m-1)}{\lambda_e (\tilde{n}-i)}$. This implies that the r.v. $\Tilde{Y}_i$ is the minimum of two independent exponentially distributed r.v.. From~\cite[Prob. 9.4.1]{yates2014probability}, $\Tilde{Y}_i$ is an exponentially distributed r.v. with mean 
\begin{align}\label{eqn:exp_tilde_y_i}  \\[-1.7em]
    \mathbb{E}[\Tilde{Y}_i] = \frac{1}{\lambda_e \frac{(\tilde{n}-i)}{(m-1)} + \lambda_s},
\end{align}
and  $Pr(U > \Tilde{X}_{i:\tilde{n}-1})$ for some $0<i<\tilde{n}$, can be found as:
\begin{align}\label{eqn:prob_ul} \\[-2em] 
    Pr(U > \Tilde{X}_{i:\tilde{n}-1})= \frac{\lambda_e \frac{(\tilde{n}-i)}{(m-1)}}{\lambda_e \frac{(\tilde{n}-i)}{(m-1)} + \lambda_s }.
\end{align}
From the total law of expectation and memoryless property of $U$, we have the following:
\begin{align*}
    \mathbb{E}[Y^{\tilde{k}}] =& Pr( U \leq \Tilde{X}_{1:\tilde{n}-1} ) \mathbb{E}[\Tilde{Y}_1] \\
                    &+ Pr( U \leq \Tilde{X}_{2:\tilde{n}-1} )  Pr( U > \Tilde{X}_{1:\tilde{n}-1} )  (\mathbb{E}[\Tilde{Y}_1] + \mathbb{E}[\Tilde{Y}_2] ) \\
                    &+ \ldots + \prod_{i=1}^{\tilde{k}-1} Pr( U > \Tilde{X}_{i:\tilde{n}-1} ) \left( \sum_{i=1}^{\tilde{k}} \mathbb{E}[\Tilde{Y}_i]\right).
\end{align*}
If we rearrange the sum above, we can rewrite $\mathbb{E}[Y^{\tilde{k}}]$ as:
\begin{align}\label{eqn:exp_y_k}
\mathbb{E}[Y^{\tilde{k}}] = \mathbb{E}[\Tilde{Y}_1] + \sum_{j=2}^{\tilde{k}} \left(  \mathbb{E}[\Tilde{Y}_j]  \prod_{i=1}^{j-1} Pr(U > \Tilde{X}_{i:\tilde{n}-1})  \right).
\end{align}
If we plug $\mathbb{E}[\Tilde{Y}_i]$ in~\eqref{eqn:exp_tilde_y_i} and $Pr(U^\ell\!>\!\Tilde{X}_{i:n-1})$ in~\eqref{eqn:prob_ul} for $i\geq1$ into~\eqref{eqn:exp_y_k}, we have:
\begin{align}\label{eqn:y^k}
  \mathbb{E}[Y^{\tilde{k}}] =  \frac{1}{\lambda_e \!+\! \lambda_s} \!+\! \sum_{j=2}^{\tilde{k}}\left( \frac{1}{\lambda_e\frac{(\tilde{n}-j)}{(m-1)}+\lambda_s} \prod_{i=1}^{j-1}\frac{\lambda_e\frac{(\tilde{n}-i)}{(m-1)}}{\lambda_e\frac{(\tilde{n}-i)}{(m-1)}+\lambda_s }  \right)\!. \\[-3em]
\end{align}
By definition, $\mathbb{E}[Y^{\tilde{k}}]=\mathbb{E}[\min\{\mathcal{X}_{(k:n-1)},U^\ell\}]$. It completes the first part of the proof. 
\begin{figure}[t]
    \centering
    \includegraphics[width=1\linewidth]{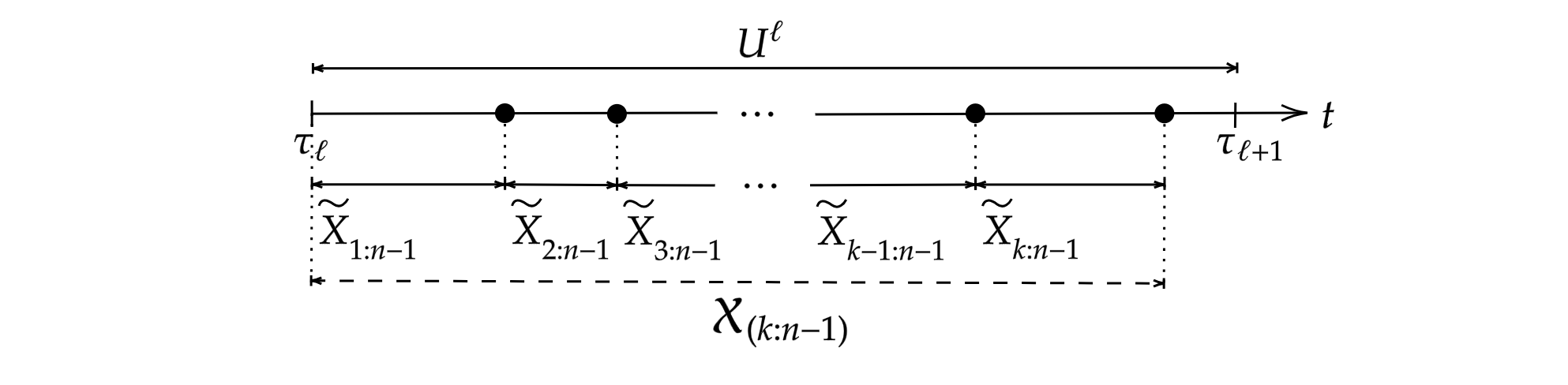}
    \vspace{-0.85cm}
     \caption{Sample timeline of the source update and the activation of edges that are all connected to a unique node. The time of activation of each edge is marked by \!($\bullet).$}\label{fig:x_tilde_timeline}
     \vspace{-0.5cm}
\end{figure}
From the conditional independence of random variables $\tilde{X}_{i:\Tilde{n}-1}$ for $1 \leq i \leq \Tilde{n}-1$ and the memoryless property of random variable $U$ (see Figure~\ref{fig:x_tilde_timeline}) and~\eqref{eqn:prob_ul}, we have the following:
\begin{align*}
    \\[-47pt]
\end{align*}
\begin{align}\label{eqn:subscriber_wo_miss_prob}
        Pr\!\left(\!\mathcal{X}_{(\Tilde{k}:\Tilde{n}-1)}\!\leq \!U \!\right)\! &=\! Pr\!\left(\!  \sum_{j=1}^k \Tilde{X}_{j:\Tilde{n}\!-\!1} \!\leq\! U \!\right) \!\!= \!\prod\limits_{i=1}^{\Tilde{k}}\frac{\lambda_e \frac{(\Tilde{n}-i)}{(m-1)}}{\lambda_e \frac{(\Tilde{n}-i)}{(m\!-\!1)} \!+\!\lambda_s },    
\end{align}
\begin{align*}
    \\[-50pt]
\end{align*}
which completes the proof.
\end{proof}



One can compute the closed-form expression for $\agenonsubswo{k}{n}{s}{m}$ and $\agesubswo{k}{n}{s}{m}$ by plugging~\eqref{eqn:prop_min} and  $Pr(\mathcal{X}_{\tilde{k}:\Tilde{n}-1} \leq U)$ in Proposition~\ref{prop:closed_from_for_order} into~\eqref{eqn:wo_memory_nonsubs} and~\eqref{eqn:wo_memory_subset_subs} in Theorem~\ref{thm:wo_memory_subset}. Then, we have the following corollary:

\begin{corollary}\label{cor:wo_memory_age}
For an SHN, the version age of $k$-keys for an individual node without memory is the following. For a subscriber node:
\begin{align}\label{eqn:cor_wo_memory_subs}
   \agesubswo{k}{n}{s}{m} = \frac{  \sum\limits_{j=1}^k\left(\frac{\lambda_s \mathcal{A}(n,m,j-1) }{\lambda_e \mathcal{B}(n,m,j) + \lambda_s}   \right) }{   \mathcal{A}(n,m,k) }    \quad \mbox{ w.p. } 1. 
\end{align}
For a nonsubscriber node:
\begin{align}\label{eqn:cor_wo_memory_nonsubs}
   \agenonsubswo{k}{n}{s}{m}  =& \frac{ (n-s) \sum\limits_{j=1}^k\left(\frac{\lambda_s \mathcal{A}(n,m,j-1) }{\lambda_e \mathcal{B}(n,m,j) + \lambda_s}   \right) } {(n\!-\!s) \mathcal{A}(n,m,k)\!+\!(m\!-\!n) \mathcal{A}(n\!+\!1,m,k\!+\!1) }   \\
   &+\! \frac{  (m-n) \sum\limits_{j=1}^{k+1}\left(\frac{\lambda_s   \mathcal{A}(n\!+\!1,m,j\!-\!1)  }{\lambda_e \mathcal{B}(n+1,m,j) + \lambda_s}\right) } {  (n\!-\!s) \mathcal{A}(n,m,k)\!+\!(m\!-\!n) \mathcal{A}(n\!+\!1,m,k\!+\!1) }  \\[-3em]
\end{align}
with probability $1$ where $\mathcal{A}(n,m,j)= \prod\limits_{i=1}^{j}\frac{\lambda_e \mathcal{B}(n,m,i)}{\lambda_e \mathcal{B}(n,m,i)+\lambda_s }$ for $j\geq 1$, $\mathcal{A}(n,m,0)=1,$ and $\mathcal{B}(n,m,i)=\frac{n-i}{m-1}$ {as defined earlier in Proposition~\ref{prop:closed_from_for_order}}.     


\end{corollary}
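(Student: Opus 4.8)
The plan is to prove the corollary by direct substitution: the two age expressions in Theorem~\ref{thm:wo_memory_subset} are ratios whose building blocks are exactly the quantities $\mathbb{E}[\min(\mathcal{X}_{(\tilde k:\tilde n-1)},U)]$ and $Pr(\mathcal{X}_{(\tilde k:\tilde n-1)}\leq U)$ evaluated in Proposition~\ref{prop:closed_from_for_order}. Since we are on an SHN, each $X_{ij}$ is exponential with mean $\frac{m-1}{\lambda_e}$ and $U$ is exponential with mean $\frac{1}{\lambda_s}$, so the hypotheses of the proposition are met verbatim. The whole task is therefore to match the parameters $(\tilde n,\tilde k)$ of the proposition to the order statistics appearing in the theorem, and then to clear the common factor $\mathbb{E}[U]=\frac{1}{\lambda_s}$.

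First I would treat the subscriber case \eqref{eqn:cor_wo_memory_subs}. The set $\mathcal{X}_j$ seen by a subscriber has cardinality $n-1$ and the relevant order statistic is $\mathcal{X}_{(k:n-1)}$, so the proposition applies with $\tilde n=n$ and $\tilde k=k$. This gives $\mathbb{E}[\min(\mathcal{X}_{(k:n-1)},U)]=\sum_{j=1}^{k}\frac{\mathcal{A}(n,m,j-1)}{\lambda_e\mathcal{B}(n,m,j)+\lambda_s}$ and $Pr(E_0)=\mathcal{A}(n,m,k)$. Substituting these into \eqref{eqn:wo_memory_subset_subs} and multiplying numerator and denominator by $\lambda_s$ (to absorb $\mathbb{E}[U]$) yields \eqref{eqn:cor_wo_memory_subs} immediately.

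Next I would handle the nonsubscriber case \eqref{eqn:cor_wo_memory_nonsubs}. Here the expression \eqref{eqn:wo_memory_nonsubs} mixes two order statistics: $\mathcal{X}_{(k:n-1)}$, coming from a set of cardinality $n-1$ (the event $E_0$, node in $\mathcal{K}^\ell$), and $\mathcal{X}_{(k+1:n)}$, coming from a set of cardinality $n$ (the event $E_1$, node not in $\mathcal{K}^\ell$). The first again corresponds to $\tilde n=n,\ \tilde k=k$; the second corresponds to $\tilde n=n+1,\ \tilde k=k+1$, so its closed form involves $\mathcal{A}(n+1,m,\cdot)$ and $\mathcal{B}(n+1,m,\cdot)$ with $Pr(E_1)=\mathcal{A}(n+1,m,k+1)$. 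Inserting $\alpha_0=\frac{n-s}{m-s}$, $\alpha_1=\frac{m-n}{m-s}$ and $\mathbb{E}[U]=\frac{1}{\lambda_s}$, and then multiplying numerator and denominator by $(m-s)\lambda_s$ to remove both $\alpha_i$ denominators and the factor $\mathbb{E}[U]$, reproduces \eqref{eqn:cor_wo_memory_nonsubs}.

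The step that requires genuine care—rather than routine algebra—is the parameter bookkeeping in the nonsubscriber case: recognizing that $\mathcal{X}_{(k+1:n)}$ must be read with $\tilde n=n+1$ because the proposition indexes a set by its cardinality $\tilde n-1$, so that the correct coefficients are $\mathcal{A}(n+1,m,\cdot)$ and not $\mathcal{A}(n,m,\cdot)$. Once this index shift is fixed consistently across the expectation term, the hitting probability, and the survival factors $\mathcal{B}$, the remaining manipulation is purely mechanical. I would therefore state the substitution cleanly for each of the two cases and point out this reindexing explicitly to avoid confusion.
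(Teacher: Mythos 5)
Your proposal is correct and is exactly the paper's own argument: the paper obtains Corollary~\ref{cor:wo_memory_age} by plugging the closed forms of Proposition~\ref{prop:closed_from_for_order} into~\eqref{eqn:wo_memory_nonsubs} and~\eqref{eqn:wo_memory_subset_subs} of Theorem~\ref{thm:wo_memory_subset}, clearing $\mathbb{E}[U]=1/\lambda_s$ and the common factor $m-s$. Your explicit identification of the index shift $(\tilde n,\tilde k)=(n+1,k+1)$ for $\mathcal{X}_{(k+1:n)}$ (versus $(n,k)$ for $\mathcal{X}_{(k:n-1)}$) is precisely the bookkeeping that produces the $\mathcal{A}(n+1,m,\cdot)$ and $\mathcal{B}(n+1,m,\cdot)$ terms in the stated formula.
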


\begin{rem}
Corollary~\ref{cor:wo_memory_age} is applicable to both total key $(s=n)$ and partial key  $(s<n)$ subscription cases.    
\end{rem}


We have the following scalability results for total key subscription. One can easily see that our results can be extended to partial key subscription when 
$n >> s$. However, due to space limitations, we only provide the total key subscription results for scalability.

\begin{corollary}\label{cor:wo_memory_scale}
The average version age of $k$-keys over an SHN without memory scales as follows when the network size $m$ increases:
\begin{align*}\label{eqn:scalability_wo_mem}
    \lim_{m \to \infty} \ageongraph{k}{n}{n}{m}\!=\!\left(1 \!+\! \frac{\lambda_s}{\alpha\lambda_e} \right)^k \!\!\left( \frac{\lambda_s \!+\! \alpha( \lambda_e \!- \!\lambda_s )}{\alpha \lambda_e} \right) \!- \!1,
\end{align*}
where $n = \lfloor\alpha m\rfloor$ for $\alpha \in (0,1]$.
\end{corollary}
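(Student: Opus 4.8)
The plan is to evaluate the limit by writing $\ageongraph{k}{n}{n}{m}$ (the memoryless average over the graph, i.e.\ the convex combination from~\eqref{eqn:age_on_graph} formed from $\agesubswo{k}{n}{n}{m}$ and $\agenonsubswo{k}{n}{n}{m}$) and then passing to the limit inside the closed forms of Corollary~\ref{cor:wo_memory_age}. In the total key subscription regime $s=n$ the weights are $\frac{n}{m}$ for subscribers and $\frac{m-n}{m}$ for nonsubscribers, and the nonsubscriber expression~\eqref{eqn:cor_wo_memory_nonsubs} collapses: the factor $n-s=0$ annihilates its first summand while the common factor $m-n$ cancels between numerator and denominator, leaving $\agenonsubswo{k}{n}{n}{m}=\frac{\sum_{j=1}^{k+1}\lambda_s\mathcal{A}(n+1,m,j-1)/(\lambda_e\mathcal{B}(n+1,m,j)+\lambda_s)}{\mathcal{A}(n+1,m,k+1)}$. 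So the first step is merely this reduction together with recording the subscriber expression~\eqref{eqn:cor_wo_memory_subs}.

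Second, I would compute the elementary limits of the building blocks with $n=\lfloor\alpha m\rfloor$. Since $k$ is fixed, for each fixed index $i$ we have $\mathcal{B}(n,m,i)=\frac{n-i}{m-1}\to\alpha$, and likewise $\mathcal{B}(n+1,m,i)\to\alpha$, as $m\to\infty$ (the floor contributes a bounded correction that is washed out). Setting $p:=\frac{\alpha\lambda_e}{\alpha\lambda_e+\lambda_s}$, each factor $\frac{\lambda_e\mathcal{B}}{\lambda_e\mathcal{B}+\lambda_s}\to p$; because $\mathcal{A}$ is a product of a \emph{fixed} number $j$ of such factors, $\mathcal{A}(n,m,j)\to p^{j}$, and identically $\mathcal{A}(n+1,m,j)\to p^{j}$. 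Simultaneously each remaining term $\frac{\lambda_s}{\lambda_e\mathcal{B}+\lambda_s}\to\frac{\lambda_s}{\alpha\lambda_e+\lambda_s}=1-p$. The interchange of the (finite) sums with the limit is immediate.

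Third, I would evaluate the resulting geometric sums. The subscriber numerator tends to $\sum_{j=1}^{k}(1-p)p^{j-1}=1-p^{k}$ over a denominator $p^{k}$, giving $\lim_{m\to\infty}\agesubswo{k}{n}{n}{m}=p^{-k}-1$; identically the nonsubscriber ratio tends to $\frac{1-p^{k+1}}{p^{k+1}}=p^{-(k+1)}-1$. Forming the convex combination with weights $\frac{n}{m}\to\alpha$ and $\frac{m-n}{m}\to1-\alpha$ yields $\alpha p^{-k}+(1-\alpha)p^{-(k+1)}-1$.

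Finally, I would translate back into $\lambda_s,\lambda_e,\alpha$ using $p^{-1}=1+\frac{\lambda_s}{\alpha\lambda_e}$, so $p^{-k}=\bigl(1+\frac{\lambda_s}{\alpha\lambda_e}\bigr)^{k}$, and factor $p^{-k}$ out via $\alpha p^{-k}+(1-\alpha)p^{-(k+1)}=p^{-k}\bigl[\alpha+(1-\alpha)p^{-1}\bigr]$; the bracket simplifies to $\frac{\alpha\lambda_e+(1-\alpha)\lambda_s}{\alpha\lambda_e}=\frac{\lambda_s+\alpha(\lambda_e-\lambda_s)}{\alpha\lambda_e}$, which after subtracting $1$ is exactly the claimed expression. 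The calculation is essentially routine; the only points demanding care are the $s=n$ reduction of~\eqref{eqn:cor_wo_memory_nonsubs} (the $(m-n)$ cancellation and the vanishing of the $n-s$ term must be carried out \emph{before} taking limits, to avoid an indeterminate-looking ratio) and the bookkeeping that $\mathcal{A}(n+1,m,\cdot)$ converges to the same $p^{j}$ as $\mathcal{A}(n,m,\cdot)$ despite the shift $n\mapsto n+1$. I expect no genuine obstacle beyond this careful simplification.
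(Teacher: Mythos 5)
Your proposal is correct and takes essentially the same route as the paper's proof: both pass to the limit inside the closed forms of Corollary~\ref{cor:wo_memory_age}, using $\mathcal{A}(\lfloor\alpha m\rfloor,m,j)\to\left(\tfrac{\alpha\lambda_e}{\alpha\lambda_e+\lambda_s}\right)^{j}$ to obtain the subscriber limit $\left(1+\tfrac{\lambda_s}{\alpha\lambda_e}\right)^{k}-1$ and the nonsubscriber limit $\left(1+\tfrac{\lambda_s}{\alpha\lambda_e}\right)^{k+1}-1$, and then combine them with weights $\alpha$ and $1-\alpha$ via~\eqref{eqn:age_on_graph}. Your explicit $s=n$ reduction of~\eqref{eqn:cor_wo_memory_nonsubs} and the final factoring of $p^{-k}$ are simply spelled-out versions of steps the paper performs implicitly, and your weight-$(m-n)/m$ treatment subsumes the paper's separate handling of the $\alpha=1$ case.
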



\begin{proof} We have the following for any finite $j>0$:
\begin{align}
   {\lim_{m \to \infty}\!\! } \mathcal{A}(\lfloor \alpha m \rfloor,m,j) = \prod_{i=1}^j  \left( \frac{\alpha \lambda_e}{\alpha \lambda_e + \lambda_s} \right) =  \left( \frac{\alpha \lambda_e}{\alpha \lambda_e + \lambda_s} \right)^j. \\[-3em]
 \end{align}
If we plug it into~\eqref{eqn:cor_wo_memory_subs}, then we have the following as $m \to \infty$:
  \begin{align}\label{eqn:subs_scale}
    \agesubswo{k}{n}{n}{m} \!=\! \frac{\sum^{k}_{j=1}\!\! \left(\frac{\alpha \lambda_e}{\alpha \lambda_e + \lambda_s} \right)^{j-1}}{ \left(\frac{\alpha \lambda_e}{\alpha \lambda_e + \lambda_s} \right)^{k} } \frac{\lambda_s}{\alpha\lambda_e \!+\! \lambda_s} = \left( 1 \!+ \!\frac{\lambda_s}{\alpha \lambda_e} \right)^k\!\! - \!1. \\[-3em]
 \end{align}
For $\alpha=1$, that is, $n=m$, there is no nonsubscriber node by construction. Thus, we have $\agesubswo{k}{m}{m}{m}=\ageongraph{k}{m}{m}{m}$. This completes the proof for $\alpha=1$. 
For $\alpha \in (0,1)$, we consider the nonsubcriber nodes. Then, we have the following as $m \to \infty$, from Corollary~\ref{eqn:cor_wo_memory_nonsubs}: 
 \begin{align}\label{eqn:nonsubs_scale}
    \agenonsubswo{k}{n}{n}{m} &\!=\!\frac{\sum^{k+1}_{j=1} \left(\frac{\alpha \lambda_e}{\alpha \lambda_e + \lambda_s} \right)^{j-1}}{\frac{\alpha\lambda_e + \lambda_s}{\lambda_s} \left(\frac{\alpha \lambda_e}{\alpha \lambda_e + \lambda_s} \right)^{k+1} }  = \left( 1 + \frac{\lambda_s}{\alpha \lambda_e} \right)^{k+1} - 1, \\[-3em]
 \end{align}
If we plug the expression for $\agesubswo{k}{n}{n}{m}$ and $\agenonsubswo{k}{n}{n}{m}$ above into~\eqref{eqn:age_on_graph}, then we have the expression in Corollary~\ref{cor:wo_memory_scale}.\end{proof}

\vspace{-2mm}
\section{Discussion and Numerical Results}\label{sec:discuss}
In this section, we provide numerical results for the version age of $k$-keys under memory and memoryless schemes. In particular, we compare empirical results obtained from simulations with our analytical results.
\vspace{-1.5mm}
\subsection{Full Subscription}
\vspace{-1.5mm}
In the first set of results, we consider an SHN in full subscription ($s=n$ and $m=n$). Fig.~\ref{fig:theoric_experimental} depicts the simulation and the theoretical results for both $\ageongraph{k}{n}{n}{n}$ and $\ageongraphwo{k}{n}{n}{n}$ as a function of gossip rate $\lambda_e$. The simulation results for $\ageongraph{k}{n}{n}{n}$ and $\ageongraphwo{k}{n}{n}{n}$ align closely with the theoretical calculations provided in Theorems~\ref{thm:hetero_w_memory_age} and \ref{thm:hetero_wo_memory_age}, respectively. In both schemes, we observe that the version age of $k$-keys for a node decreases with the increase in gossip rate $\lambda_e$ for fixed $n$ and $\lambda_s$. Fig.~\ref{fig:theoric_experimental}(a) also depicts the version age of $k$-keys as a function of the pair $(k,n)$. We observe, in Fig.~\ref{fig:theoric_experimental}(a), that $\ageongraph{k}{n}{n}{n}$ increases with $k$ for a fixed $\lambda_e$ and $n$ and it decreases as $n$ increases for fixed $\lambda_e$ and $k$. On the one hand, Fig.~\ref{fig:theoric_experimental}(b) confirms that the earlier observation regarding $\ageongraph{k}{n}{n}{n}$ and its connection with network parameters $\lambda_e$ and $(k,n)$ also hold true for $\ageongraphwo{k}{n}{n}{n}$. 
{However,} we observe, in Fig.~\ref{fig:theoric_experimental}, that the version age of $k$-keys {\em with} memory scheme $\Delta^k$ is {\em less than} the version age of $k$-keys {\em without} memory scheme $\ageongraphwo{k}{n}{n}{n}$ for the same values of $\lambda_e,\lambda_s$ and $(k,n)$. 

\begin{figure}[t]
\centering
\vspace{-0.5cm}
\subfloat[]{\includegraphics[width=0.47\linewidth]{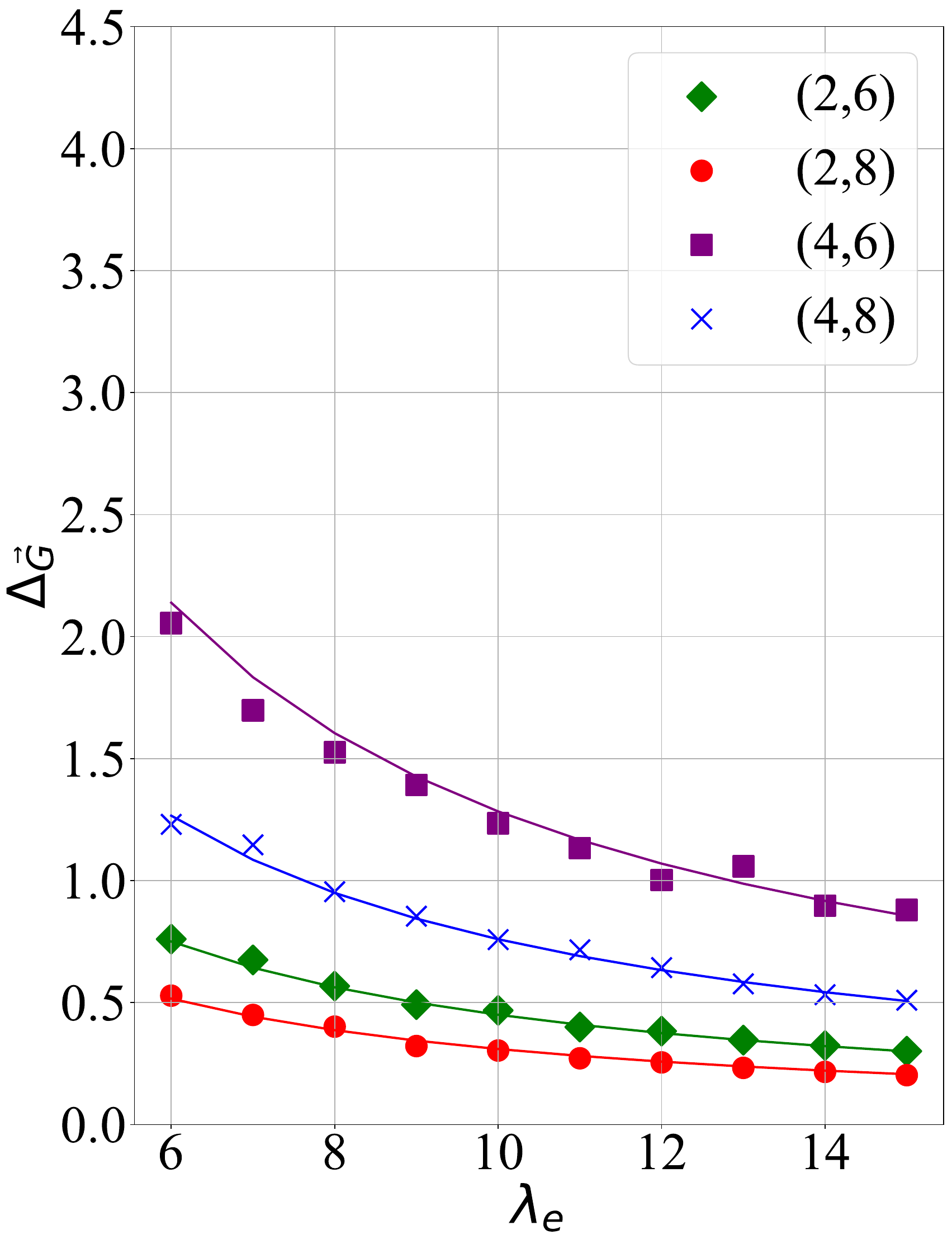}}
\hfil
\subfloat[]{\includegraphics[width=0.47\linewidth]{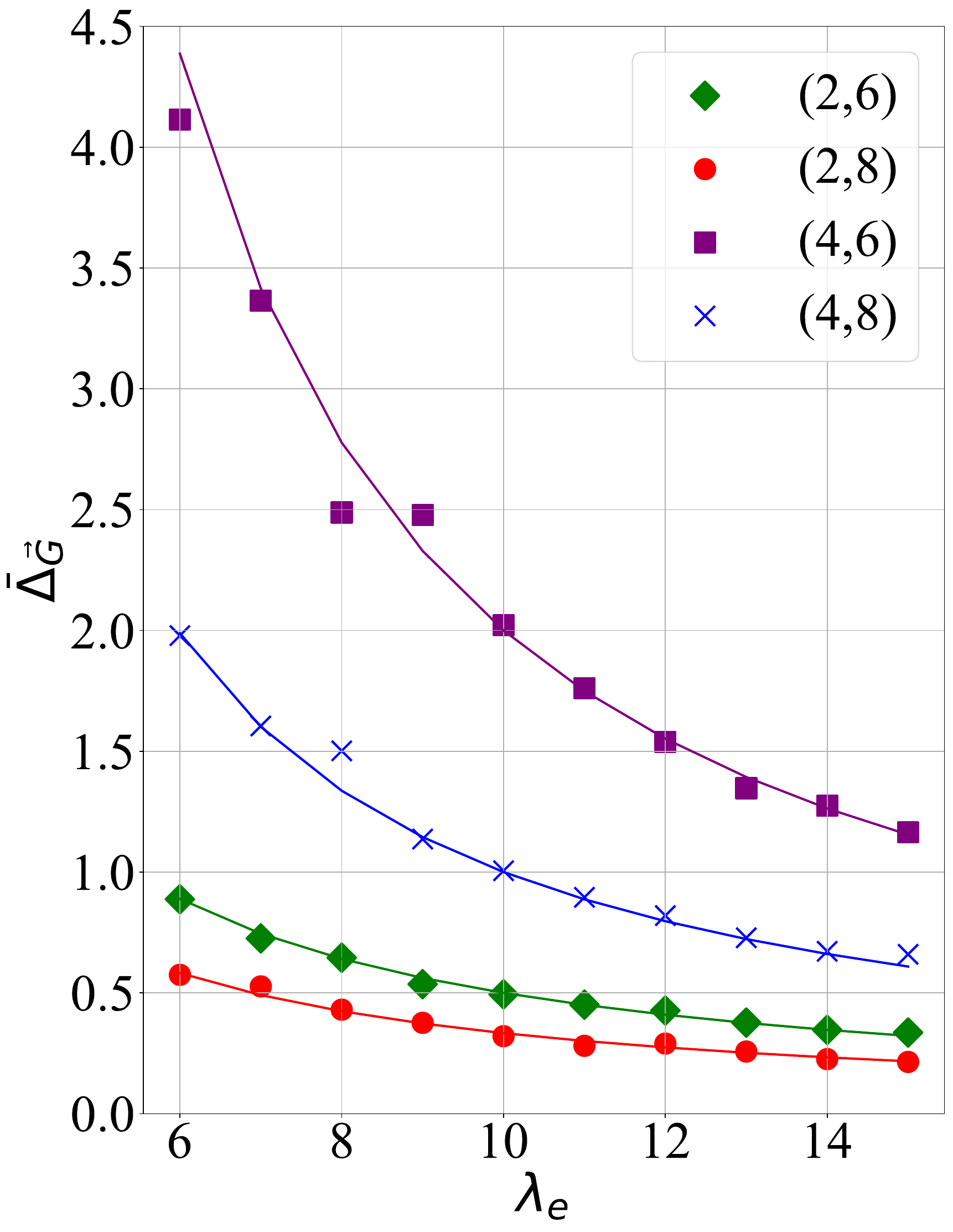}}
    \vspace{-0.2cm}
\caption{(a)$\ageongraph{k}{n}{n}{n}$ and (b) $\ageongraphwo{k}{n}{n}{n}$ as a function of $\lambda_e$ on a scalable homogeneous network when $\lambda_s=10$. Solid lines in Fig.~\ref{fig:theoric_experimental}(a) show theoretical $\ageongraph{k}{n}{n}{n}$ while solid lines in Fig.~\ref{fig:theoric_experimental}(b) show theoretical $\ageongraphwo{k}{n}{n}{n}$. Simulation results for $(2,6)$,$(2,8)$,$(4,6)$,$(4,8)$ TSS are marked by $\blacklozenge,\bullet,\blacksquare,\times$, respectively.}
	\label{fig:theoric_experimental}
\vspace{-0.5cm}
\end{figure}

Fig.~\ref{fig:asym_on_on} depicts $\ageongraph{k}{n}{n}{n}$ as a function of the number of the nodes $m(=n)$ for various gossip rates $\lambda_e$. We observe, in Fig.~\ref{fig:asym_on_on}, that $\ageongraph{k}{n}{n}{n}$ converges to $\frac{k\lambda_s}{\lambda_e}$ as $n$ grows. It aligns with the results obtained in Corollary~\ref{cor:w_memory_scale} for $\alpha=1$. 
\begin{figure}[t]
    \centering
    \includegraphics[width=0.9\linewidth]{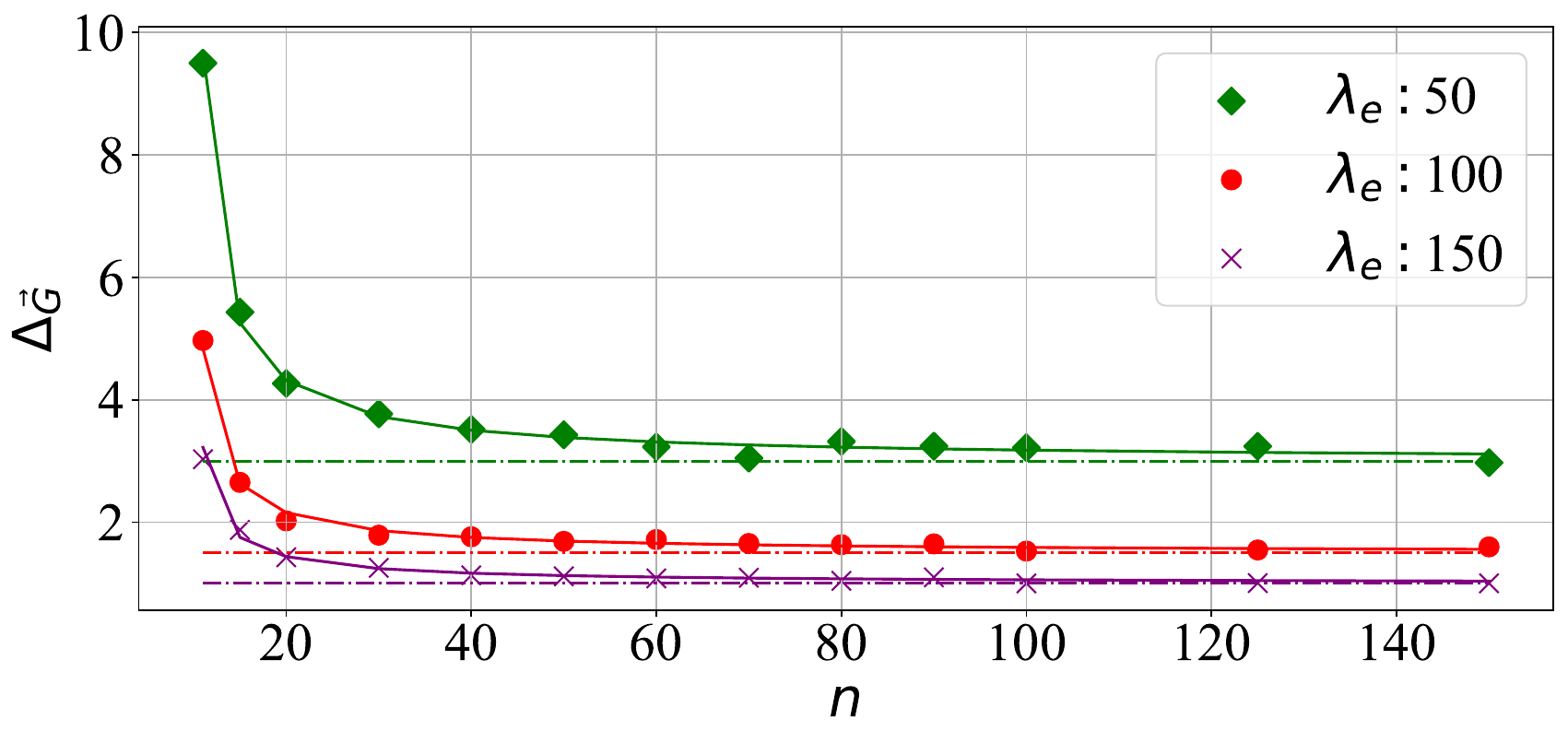}
    \caption{ $\ageongraph{k}{n}{n}{n}$ as a function of $m\!=\!n$ when $k\!=\!10$,$\lambda_s=15$ for $\alpha=1$. Solid lines show the theoretical results for $\ageongraph{k}{n}{n}{n}$ while simulation results for $\lambda_e\!=\!\{50,100,150\}$ selections are marked by $\blacklozenge,\bullet,\times$, respectively. Dashed lines show the theoretical asymptotic value of $\ageongraph{k}{n}{n}{n}$ on $n$.}
    \label{fig:asym_on_on}
    \vspace{-0.5cm}
\end{figure}
\begin{figure}[t]
    \centering
    \includegraphics[width=0.9\linewidth]{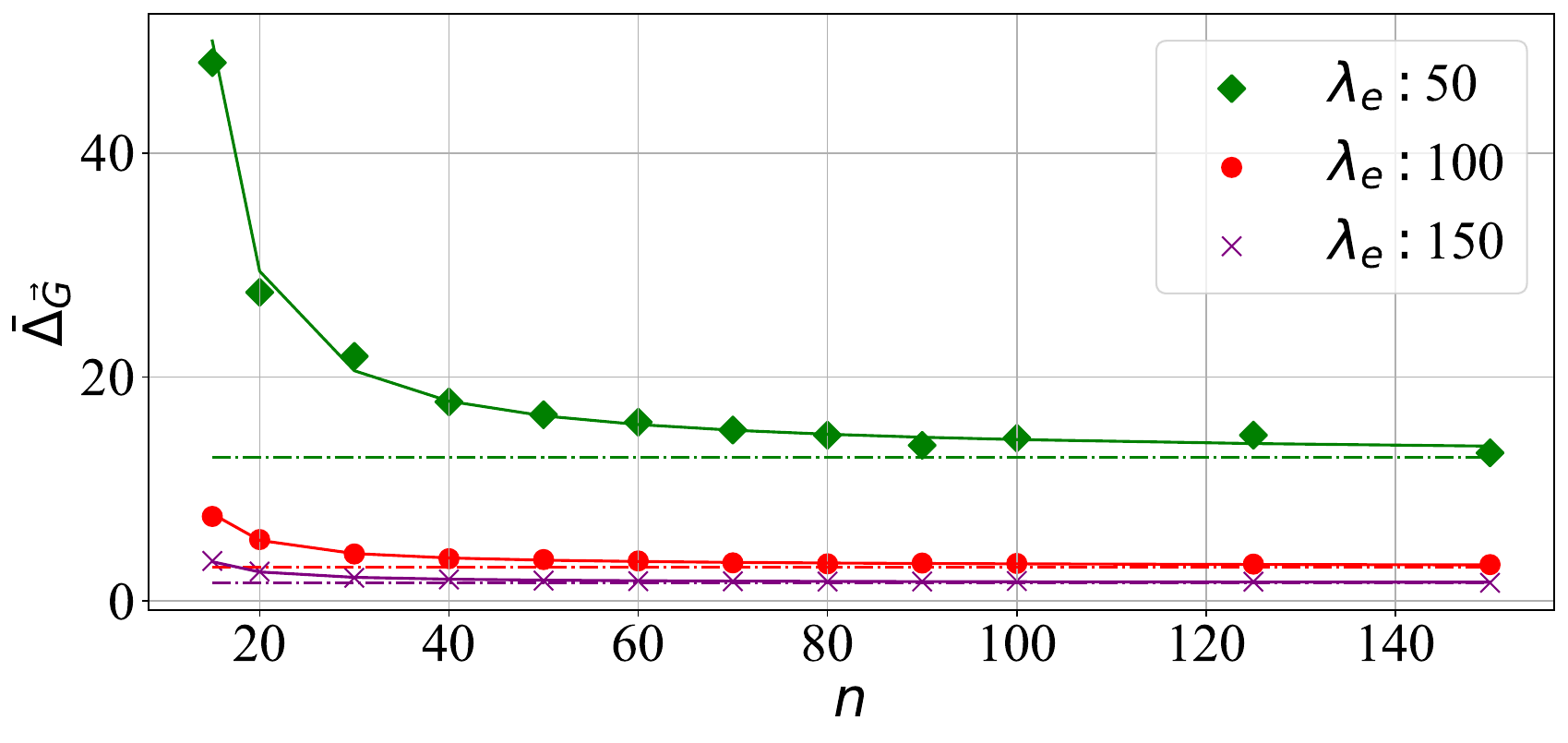}
    \caption{ $\ageongraphwo{k}{n}{n}{n}$ as a function of $m=n$ when $k=10$ and $\lambda_s=15$. Solid lines show the theoretical results for $\ageongraphwo{k}{n}{n}{n}$ while simulation results for $\lambda_e=\{50,100,150\}$ selections are marked by $\blacklozenge,\bullet,\times$, respectively. Dashed lines show the theoretical asymptotic value of $\ageongraphwo{k}{n}{n}{n}$ on $n$.}
    \label{fig:asym_on_wo}\vspace{-8mm}
\end{figure}
Fig.~\ref{fig:asym_on_wo} depicts $\ageongraphwo{k}{n}{n}{n}$ as a function of the number of the nodes $n$ for various gossip rates $\lambda_e$. We observe, in Fig.~\ref{fig:asym_on_wo}, that $\ageongraphwo{k}{n}{n}{n}$ converges to $\left(1 + \lambda_s/\lambda_e\right)^k-1$ as $n$ grows. It aligns with the results obtained in Corollary~\ref{cor:wo_memory_scale} for $\alpha=1$.
\vspace{-4mm}
\subsection{Partial Key Subscription}


In this set of results, we consider an SHN in partial key subscription $(s < n)$. Fig.~\ref{fig:bounds_on_k} depicts $\ageongraph{k}{8}{5}{12}$ and $\ageongraph{k}{8}{5}{18}$ as functions of $\lambda_e$ and $k$. The simulation results for $\ageongraph{k}{8}{5}{12}$ and $\ageongraph{k}{8}{5}{18}$ follow the upper and lower bounds given in Theorem~\ref{thm:w_memory_r_subs} for $k \in \{2,3,4\}$. We observe that $\ageongraph{k}{n}{s}{m}$ increases with $k$ for fixed values of $n$, $s$, and $m$. Furthermore, for the same $k$, $\ageongraph{k}{n}{s}{m}$ increases as the total number of nodes, $m$, increases. Additionally, in Fig.~\ref{fig:bounds_on_k}, the lower and upper bounds get closer to each other as the gossip rate $\lambda_e$ increases, which aligns with Theorem~\ref{thm:w_memory_r_subs}.

\begin{figure}[t]
\centering
\subfloat[]{\includegraphics[width=0.48\linewidth]{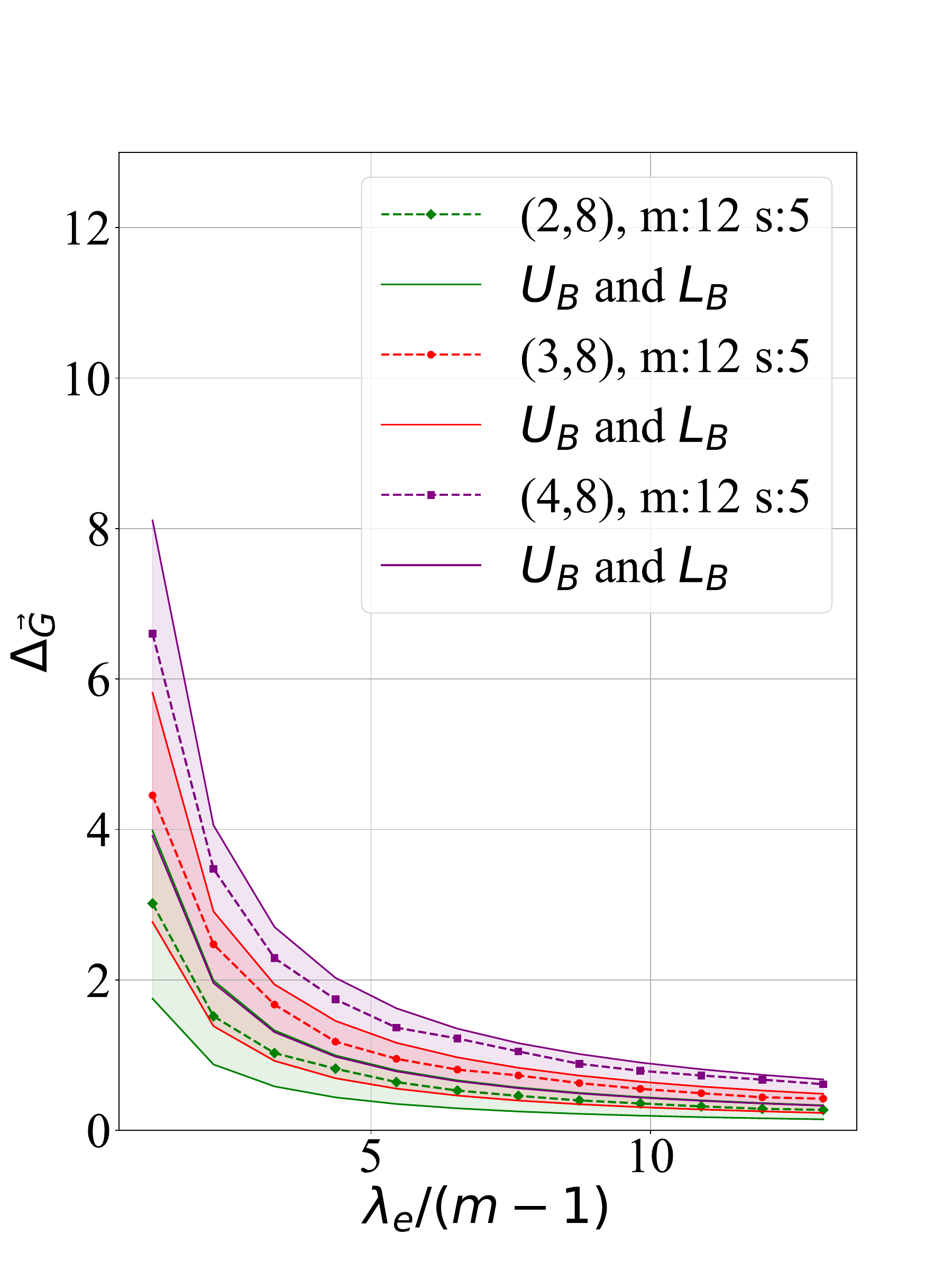}}
\hfil
\subfloat[]{\includegraphics[width=0.48\linewidth]{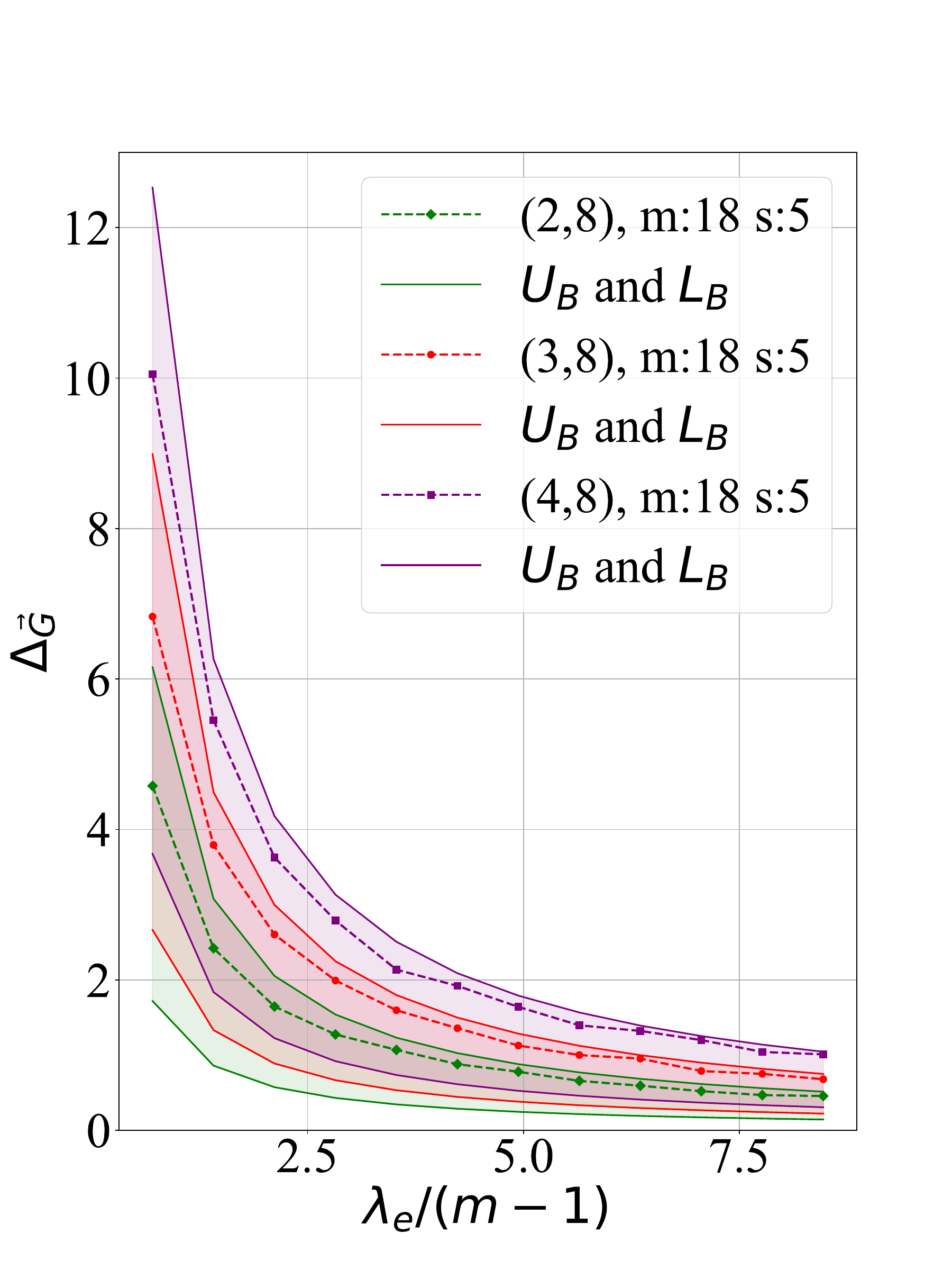}}
    \vspace{-0.2cm}
\caption{(a) $\ageongraph{k}{8}{3}{12}$ and (b) $\ageongraph{k}{8}{3}{18}$ as a function of $\lambda_e$ on an SHN when $\lambda_s=10$ for $k=\{2,3,4\}$. Solid lines in Fig.~\ref{fig:bounds_on_k} show upper and lower bounds for $\ageongraph{k}{n}{s}{m}$. Simulation results for $k=\{2,3,4\}$ are marked by $\blacklozenge,\bullet,\blacksquare$, respectively.}
	\label{fig:bounds_on_k}
\vspace{-0.5cm}
\end{figure}

Fig.~\ref{fig:bound_on_s} depicts $\ageongraph{4}{8}{s}{12}$ and $\ageongraph{4}{8}{s}{18}$ as functions of $\lambda_e$ and $s$. The simulation results align with the upper and lower bounds provided in Theorem~\ref{thm:w_memory_r_subs} for $s \in \{0,3,5,8\}$. In Fig.~\ref{fig:bound_on_s}, we observe that the average version age of $k$-keys over the graph tends to converge for different numbers of subscribers as the gossip rate increases, while a higher number of subscriber nodes leads to a higher average version age of $k$-keys over the graph. It is important to note that, while we do not currently have a formal proof for this claim, it is supported by a strong intuitive reasoning. Specifically, as the gossip rate $\lambda_e$ increases, the probability of a node collecting enough keys to decode the version update before the arrival of a new update also increases, that is, $Pr(\tau^*_\ell \leq \tau_{\ell+1})$ increases for any $\ell$. Consequently, the service time decreases.
In other words, the probability that a version is "early stopped" by a newer version (i.e., $Pr(\tau_{\ell_1} \leq \tau_{\ell_2})$ for $\ell_1 > \ell_2$) decreases as the probability of decoding a status update before a new one is generated increases. This makes the number of subscribers insignificant for the average version age of $k$-keys over the graph. Therefore, as the gossip rate increases, the average version age of $k$-keys over the graph for different numbers of subscribers tends to converge.
\begin{figure}[!t]
\centering
\vspace{-0.4cm}
    \subfloat[]{
        \includegraphics[width=0.48\linewidth]{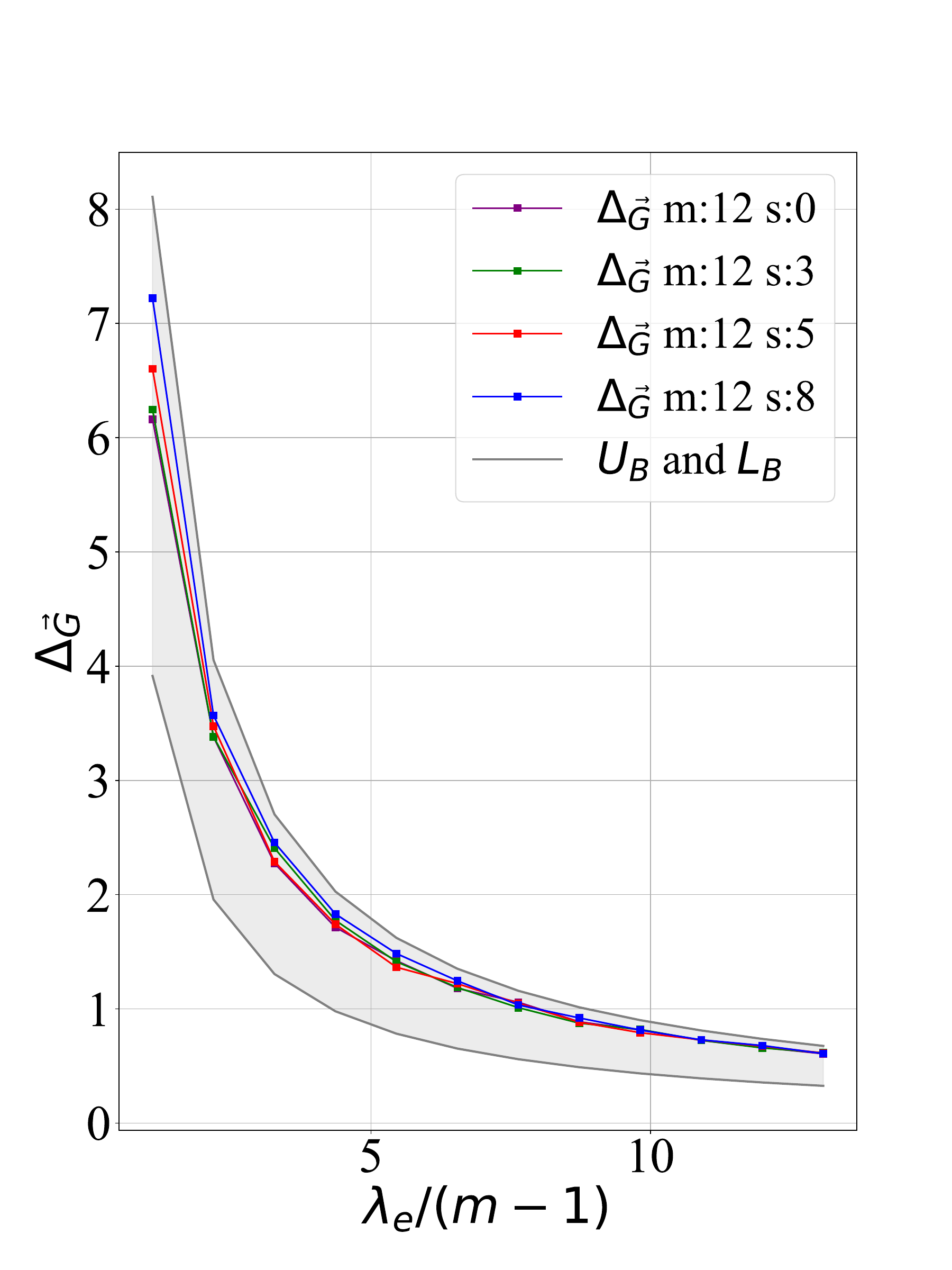}}
     \hfil %
    \subfloat[]{
        \includegraphics[width=0.48\linewidth]{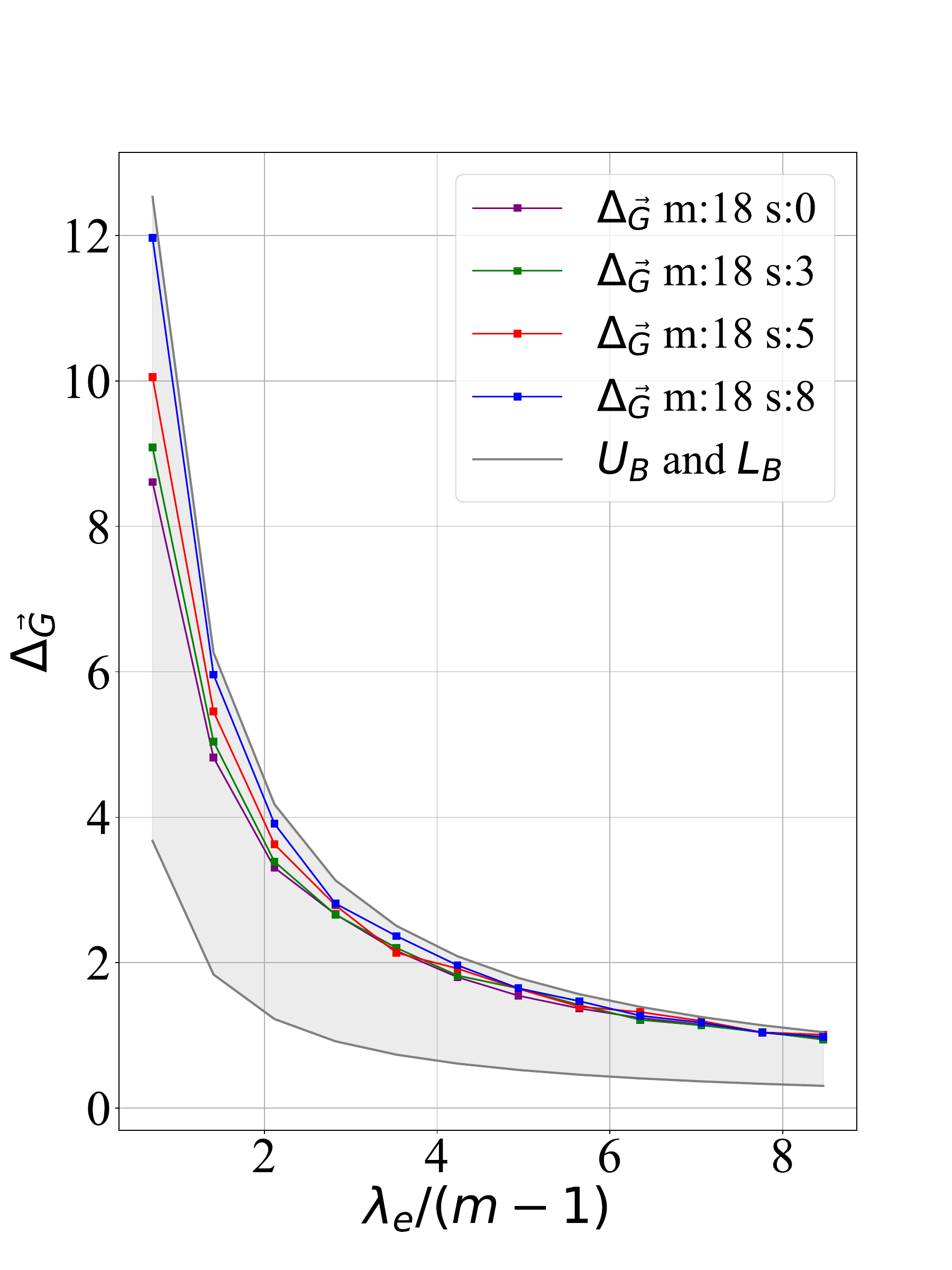}}
         \vspace{-0.2cm}
	\caption{(a) $\ageongraph{4}{8}{s}{12}$, (b) $\ageongraph{4}{8}{s}{18}$ as a function of $\lambda_e$ on an SHN when $\lambda_s=10$ for $s=\{0,3,5,8\}$. Solid lines in Fig.~\ref{fig:bound_on_s} show upper and lower bounds for $\ageongraph{k}{n}{s}{m}$. Simulation results for $s=\{0,3,5,8\}$ are marked by purple, green, red and blue lines, respectively.}
	\label{fig:bound_on_s}
 \vspace{-0.3cm}
\end{figure}
\begin{figure}[!t]
\centering
\vspace{-0.6cm}
    \subfloat[]{
        \includegraphics[width=0.48\linewidth]{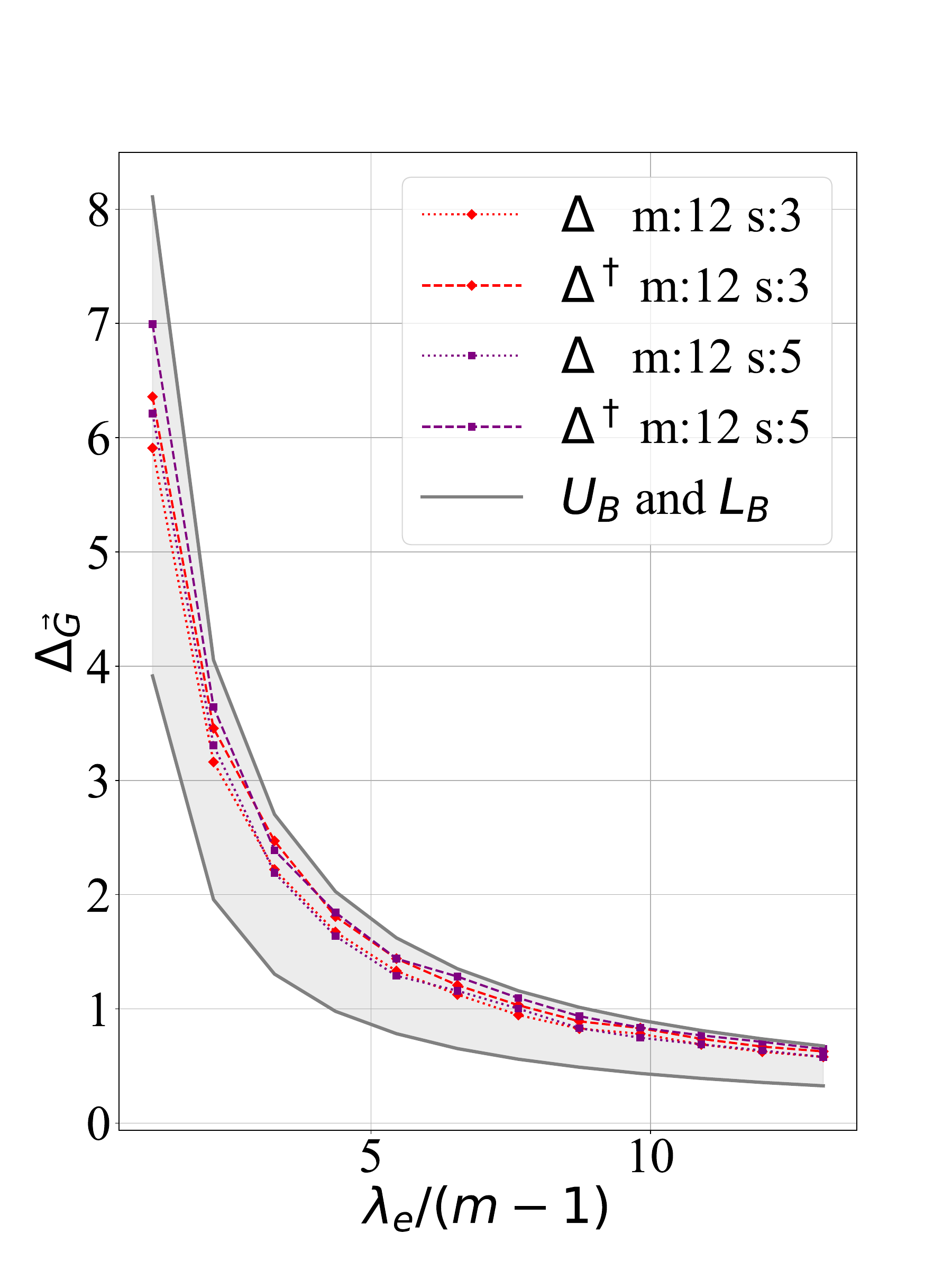}}
        \hfil %
    \subfloat[]{   
        \includegraphics[width=0.48\linewidth]{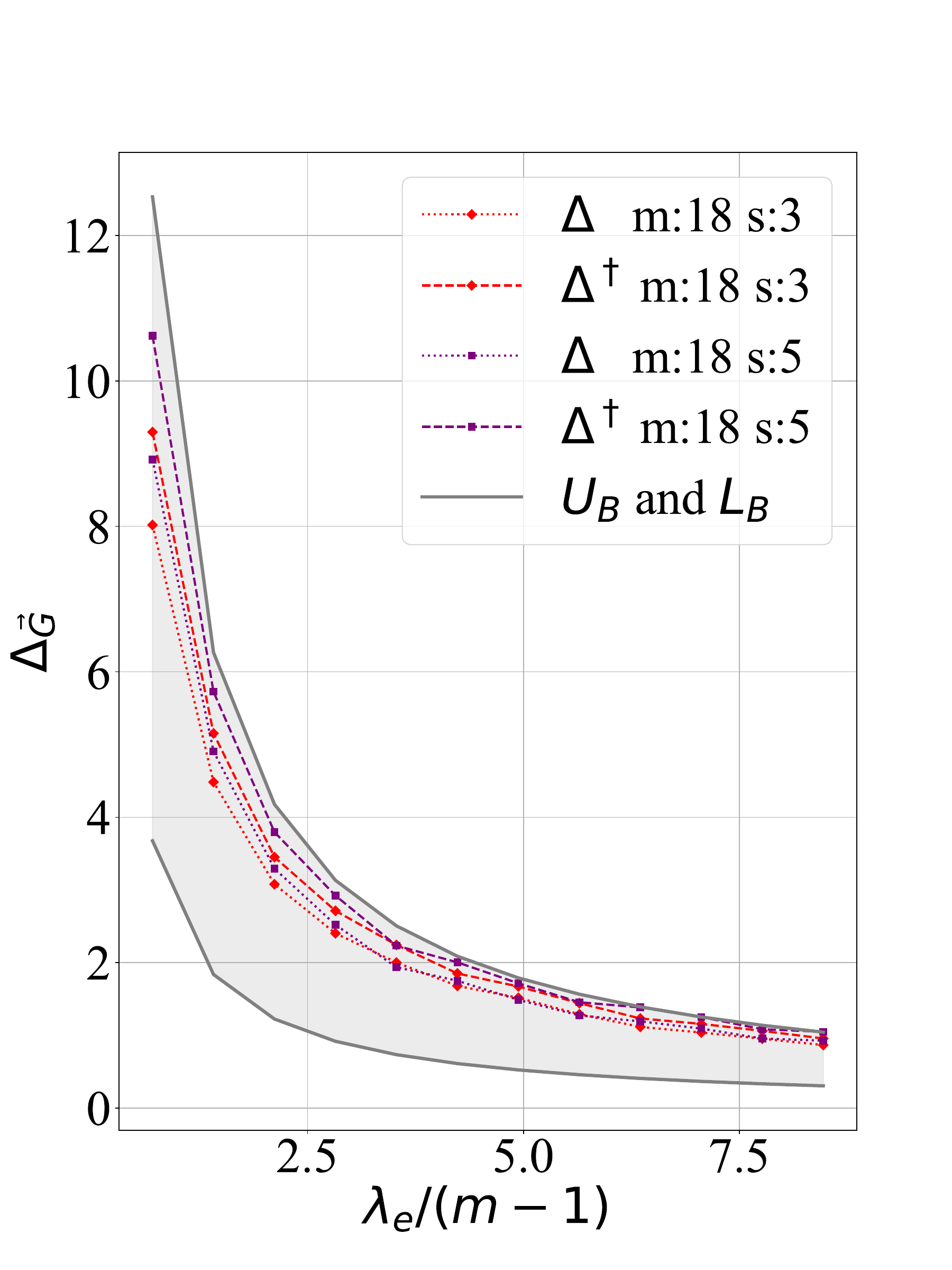}}
        \hfil\vspace{-0.4cm}
    \subfloat[]{
        \includegraphics[width=0.48\linewidth]{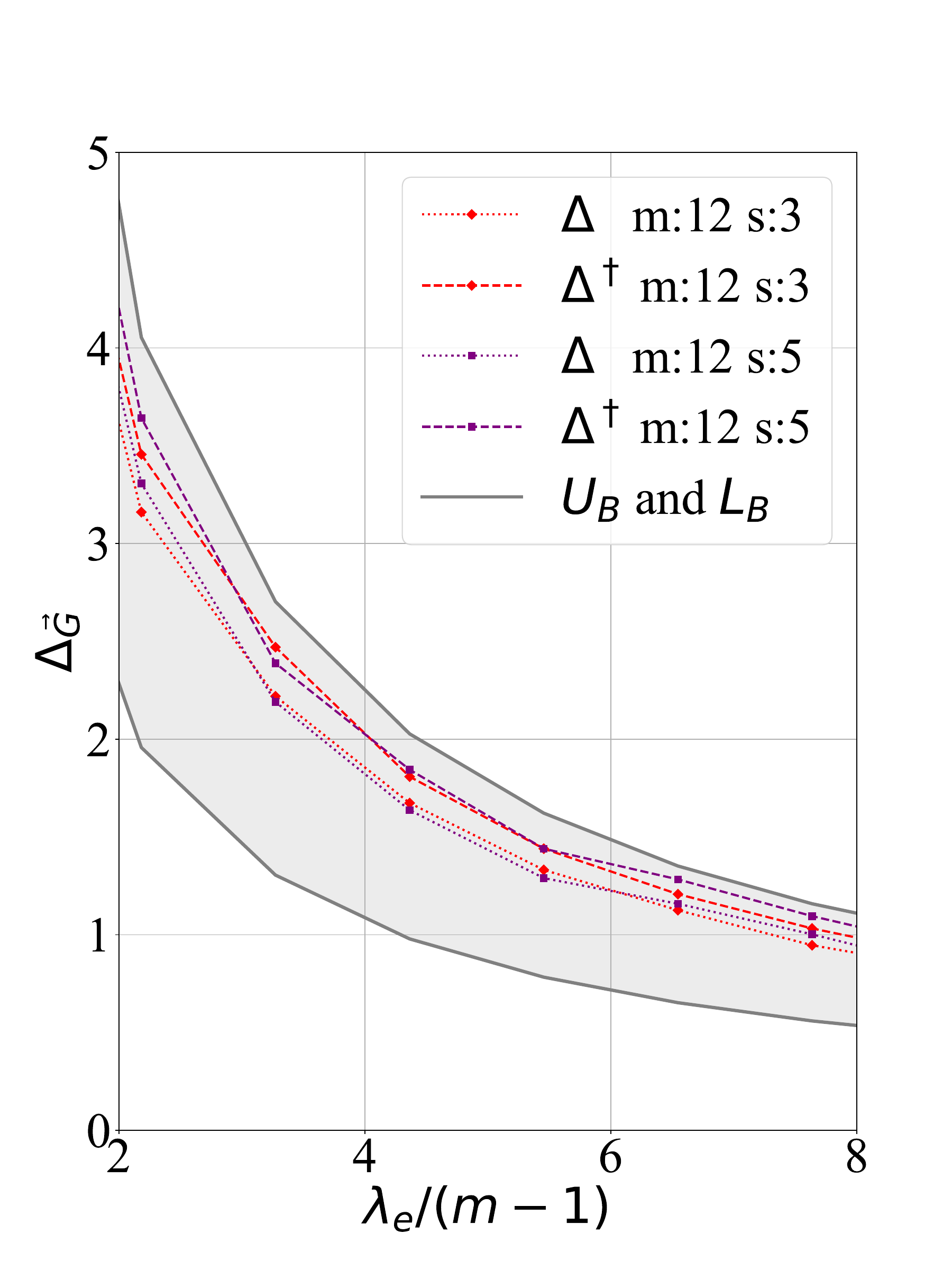}}
       \hfil 
    \subfloat[]{  
        \includegraphics[width=0.48\linewidth]{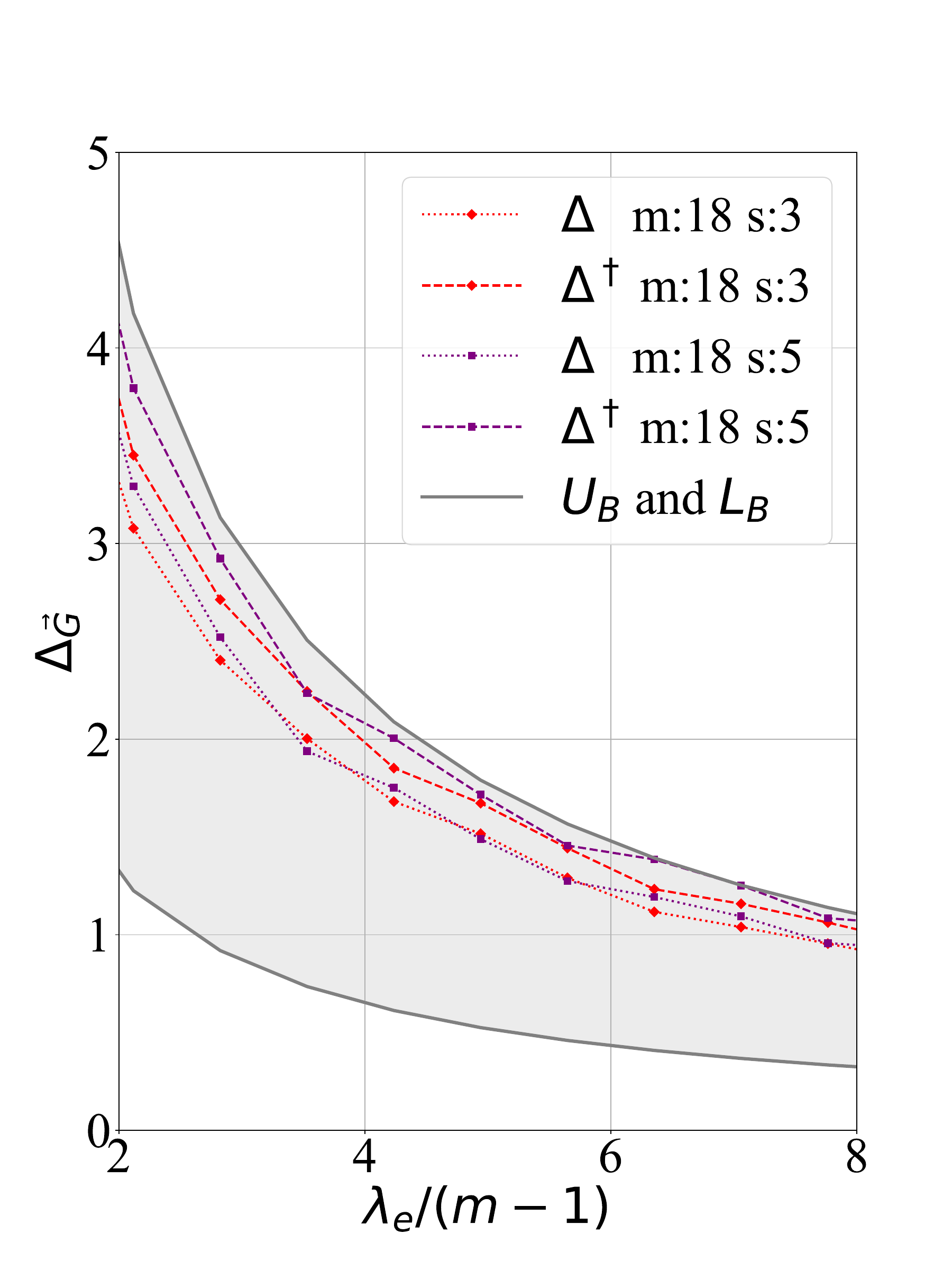}}
\vspace{-0.2cm} 
	\caption{(a) $\agesubs{4}{8}{s}{12}$ and $\agenonsubs{4}{8}{s}{12}$, (b) $\agesubs{4}{8}{s}{18}$ and $\agenonsubs{4}{8}{s}{18}$ as a function of $\lambda_e$ on an SHN when $\lambda_s=10$ for $s=\{3,5\}$. (c) and (d) offer a closer examination of the same plots presented in (a) and (b), respectively. Solid lines show upper and lower bounds for $\ageongraph{k}{n}{s}{m}$. Dashed lines and dotted lines show $\agesubs{4}{8}{s}{m}$ and  $\agenonsubs{4}{8}{s}{m}$, respectively. Simulation results for $s=\{3,5\}$ are marked by $\blacklozenge,\blacksquare$, respectively.}
	\label{fig:bounds_subs_nonsubs}
 \vspace{-0.5cm}
\end{figure}

Fig.~\ref{fig:bounds_subs_nonsubs} depicts $\agesubs{4}{8}{s}{12}$ and $\agenonsubs{4}{8}{s}{18}$ as functions of $\lambda_e$ and $s$. We observe that, in Fig.~\ref{fig:bounds_subs_nonsubs}, $\agesubs{4}{8}{s}{12}$ is lower than $\agenonsubs{4}{8}{s}{18}$ on the same network. Additionally, $\agesubs{4}{8}{s}{12}$ converges as the gossip rate increases independently from the number of subscribers $s$, which aligns with our earlier intuition. Furthermore, the simulation results align with the upper and lower bounds provided for $\ageongraph{4}{8}{s}{12}$. We observe that, in Fig.~\ref{fig:bounds_subs_nonsubs}, both the version age of $k$-keys for subscriber and nonsubscriber nodes adhere to upper and lower bounds in Theorem~\ref{thm:w_memory_r_subs} even if we proved them only for  $\ageongraph{4}{8}{s}{12}$. 

In contrast to memory schemes, we have an explicit formula for $\ageongraphwo{k}{n}{s}{m}$ in memoryless scheme. Therefore, any given numerical results include the theoretical values for a given set of network parameters and the simulation results. Fig.~\ref{fig:wo_memory_subset} depicts both theoretical values and simulations results for $\agesubswo{4}{8}{3}{m}$ and $\agenonsubswo{4}{8}{3}{m}$ as functions of $\lambda_e$ and $m$. We observe that, in Fig.~\ref{fig:wo_memory_subset}, the simulation results are consistent with result of Theorem~\ref{thm:wo_memory_subset} (and Corollary~\ref{cor:wo_memory_age}) for both $\agesubswo{4}{8}{3}{m}$ and $\agenonsubswo{4}{8}{3}{m}$. Furthermore, $\agesubswo{4}{8}{3}{m}$ is less than $\agenonsubswo{4}{8}{3}{m}$ for the same gossip rate $\lambda_e$ and $m$ while both increases with the number of nodes $m$. 


Fig.~\ref{fig:cost_of_subs} depicts $\agesubswo{k}{n}{s}{m}$ and $\agenonsubswo{k}{n}{s}{m}$ as functions of $s$ for various $m$ values. The corresponding plots are obtained through Corollary~\ref{cor:wo_memory_age}. We observe that, in Fig.~\ref{fig:cost_of_subs}, $\agenonsubswo{k}{n}{s}{m}$ exponentially increases as the number of subscribers $s$ increases for fixed $n$ and $m$. However, $\agesubswo{k}{n}{s}{m}$ remains constant for any $s$. By definition, we also observe that the average version of $k$-keys over the network, $\ageongraphwo{k}{n}{s}{m}$, exponentially increases as the number of subscribers $s$ increases for fixed $n$ and $m$. These observations show that making a node subscriber to the source drastically decreases the version age of $k$-keys for the node, while it exponentially increases the version age of $k$-keys for the rest of the nonsubscriber nodes. Consequently, the average version age of $k$-keys over the network also increases exponentially, though at a slower rate compared to the increase in the age for nonsubscriber nodes.

\begin{figure}[t]
\centering
    \subfloat[]{
        \includegraphics[width=0.48\linewidth]{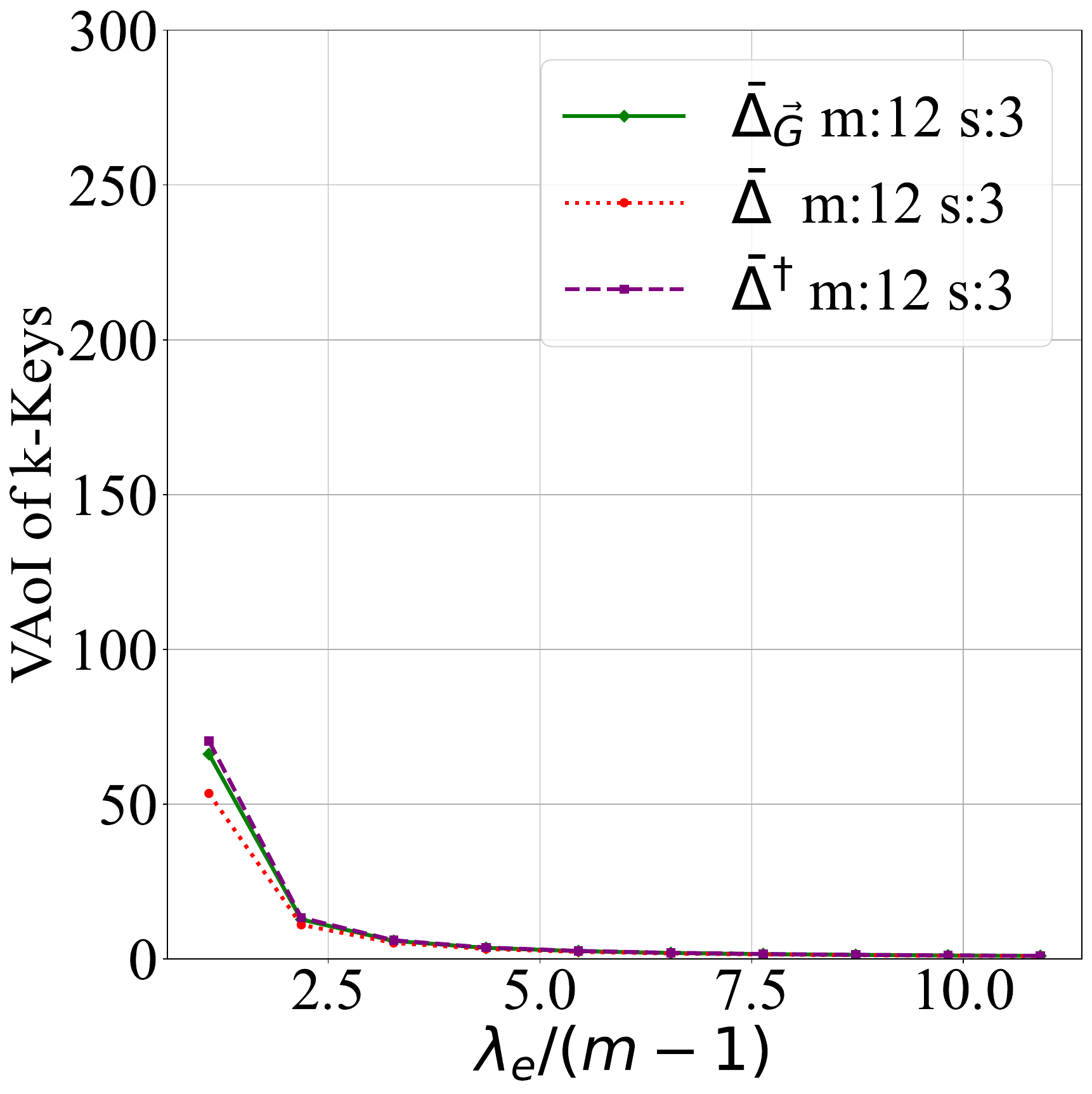}}
    \subfloat[]{   
        \includegraphics[width=0.48\linewidth]{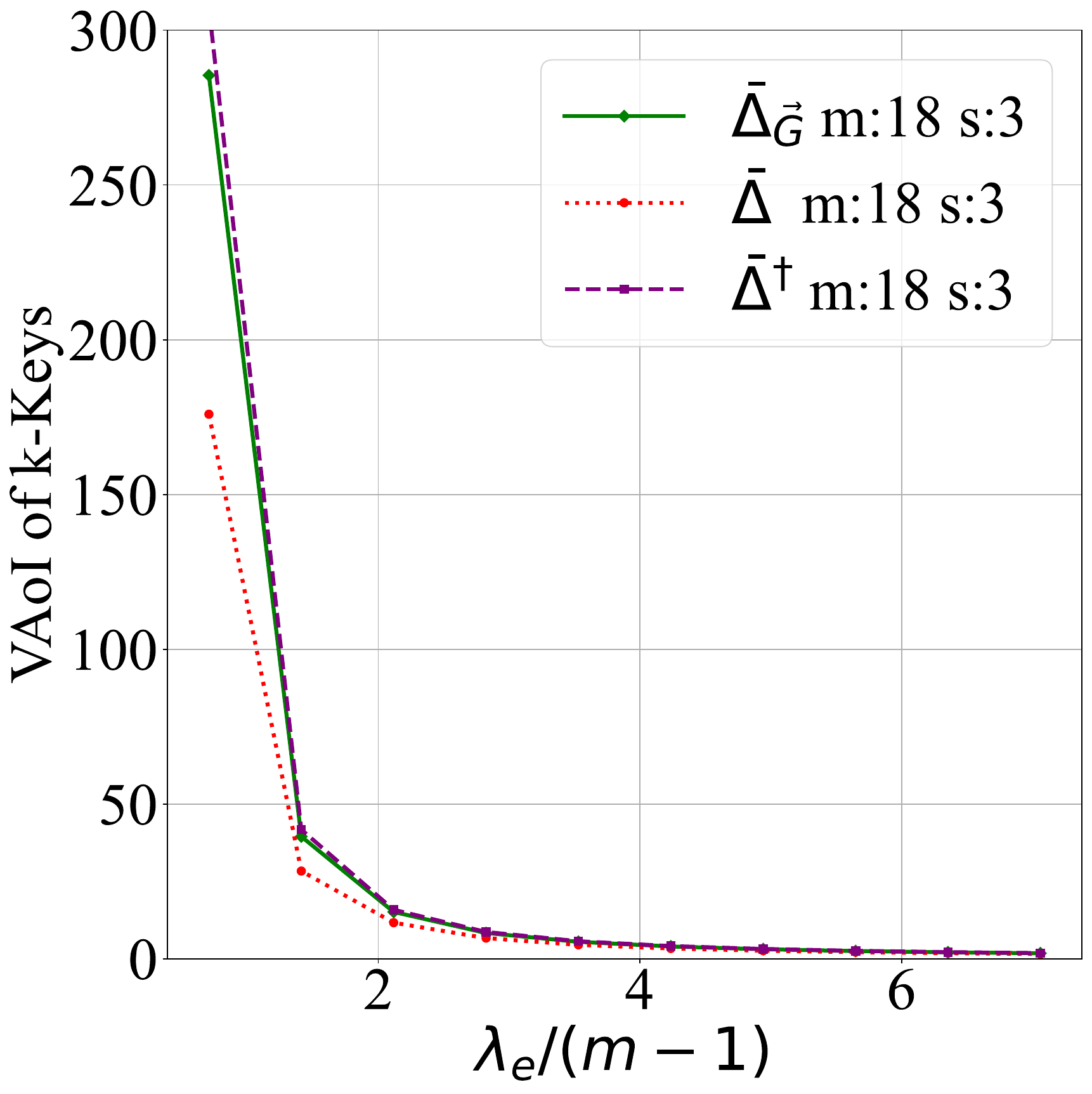}}
\vspace{-0.15cm}
    \subfloat[]{
        \includegraphics[width=0.48\linewidth]{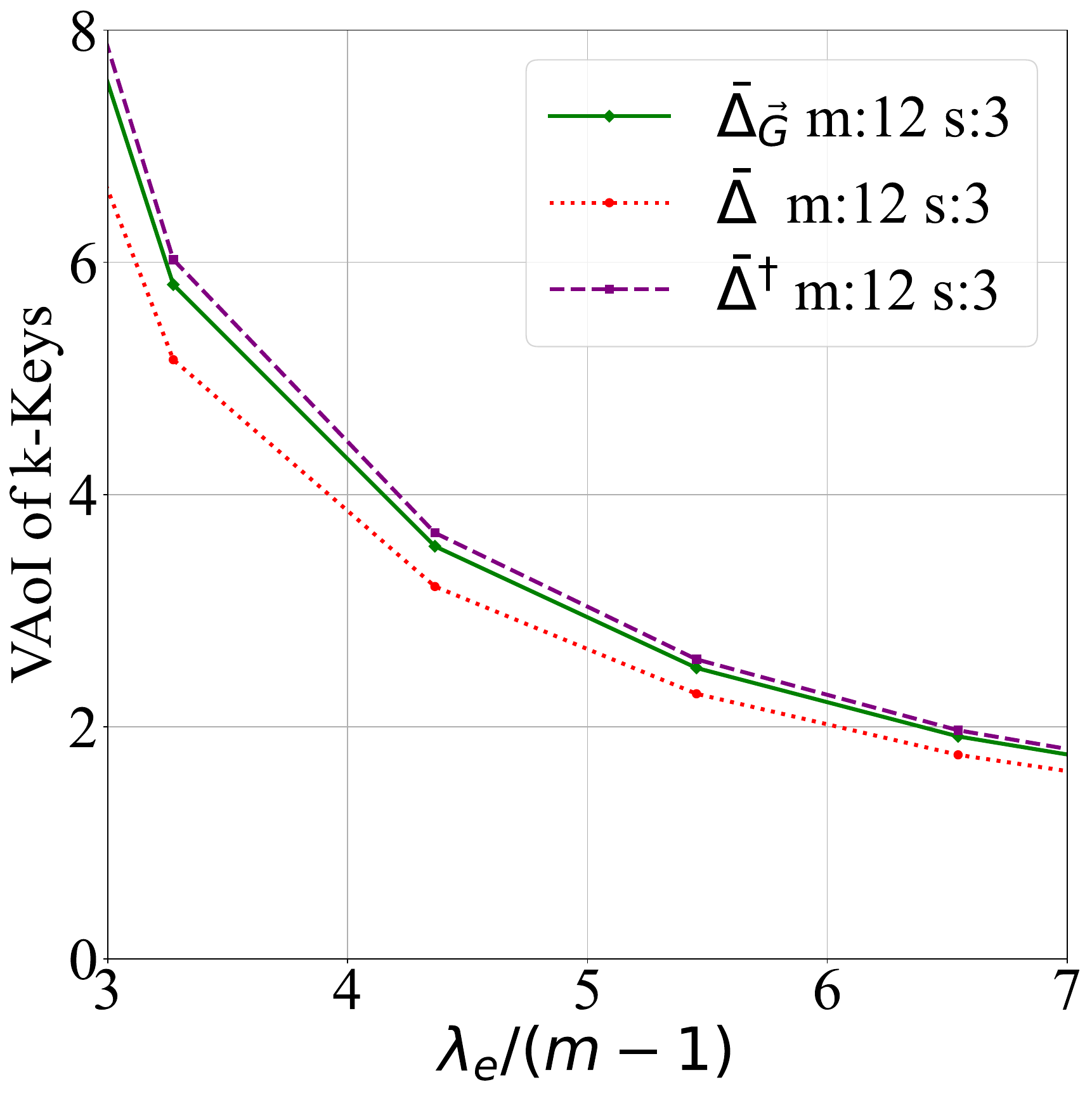}}
     \subfloat[]{   
        \includegraphics[width=0.48\linewidth]{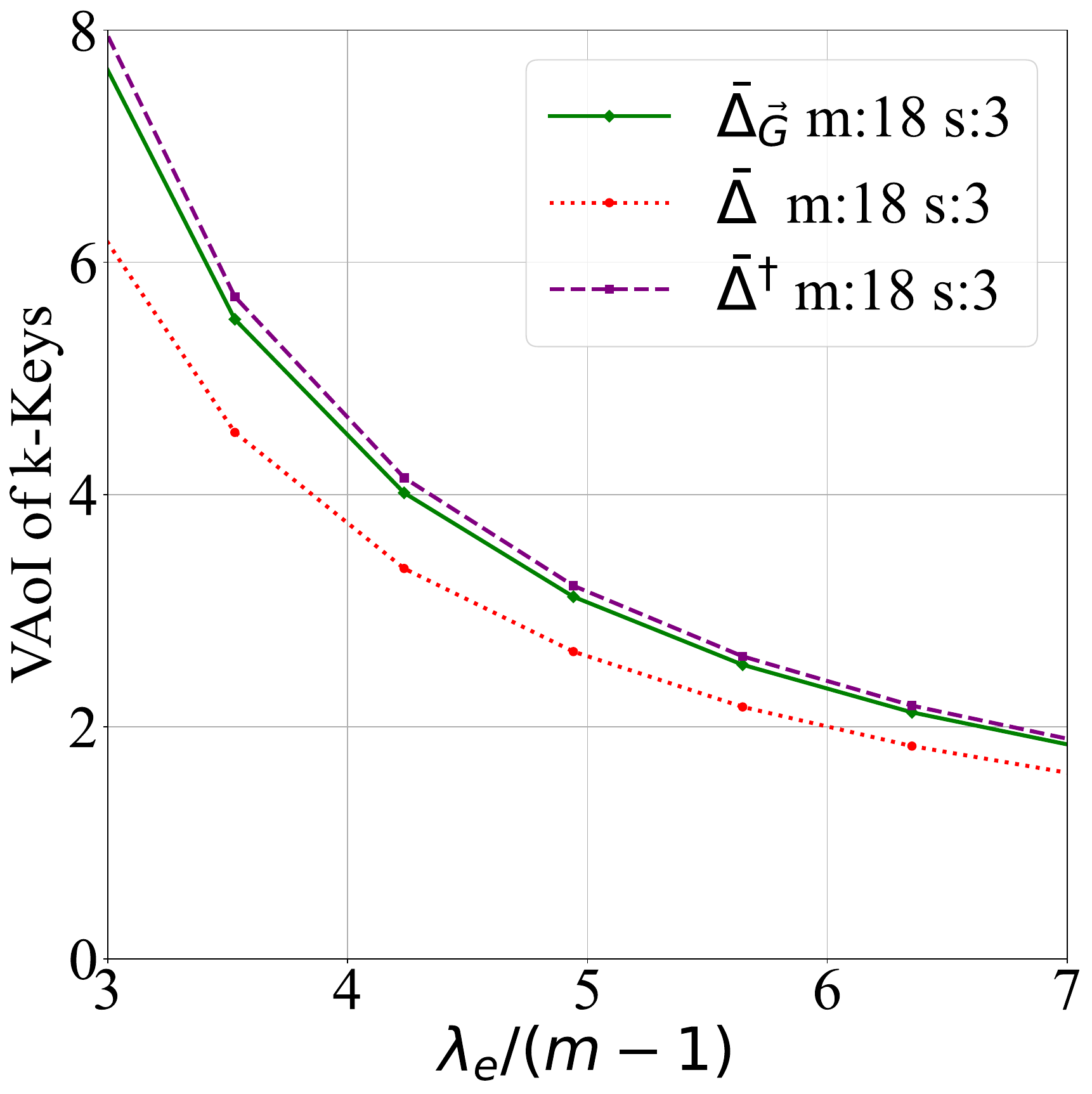}}
	\caption{(a) and (c) show $\ageongraphwo{4}{8}{3}{12}$; (b) and (d) show $\ageongraphwo{4}{8}{3}{18}$ as a function of $\lambda_e$ on a SHN when $\lambda_s=10$, respectively. Simulation results for $\ageongraphwo{4}{8}{s}{m}$, $\agesubswo{4}{8}{s}{m}$, and $\agenonsubswo{4}{8}{s}{m}$ are marked by $\blacklozenge,\bullet,\blacksquare$, respectively while Solid lines dotted lines, and dashed lines show theoretical evaluation in the same order.}
	\label{fig:wo_memory_subset}
 \vspace{-0.1cm}
\end{figure}
\begin{figure}[t]
    \subfloat[]{
        \includegraphics[width=1\linewidth]{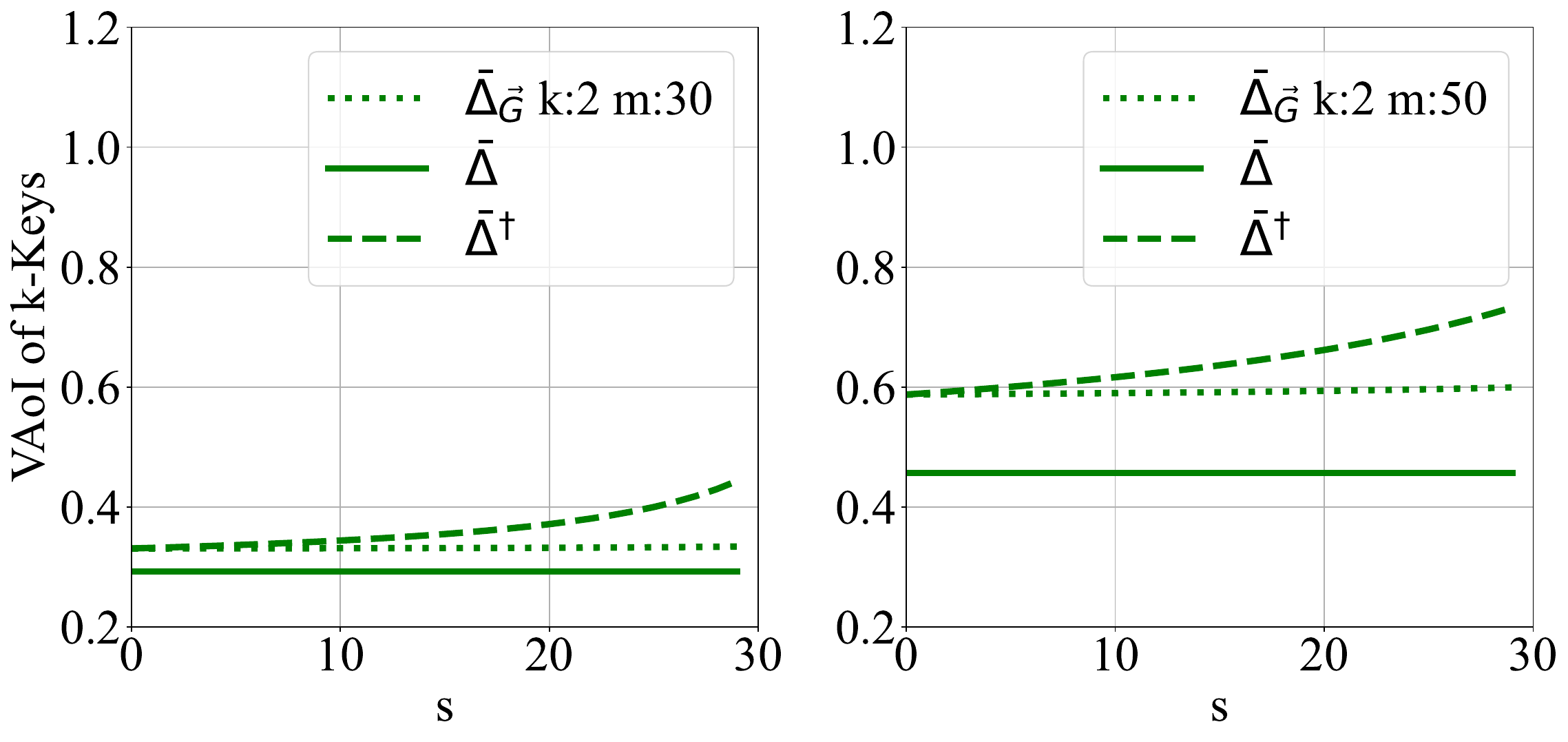}}
    \vspace{-0.1cm}
    \subfloat[]{   
        \includegraphics[width=1\linewidth]{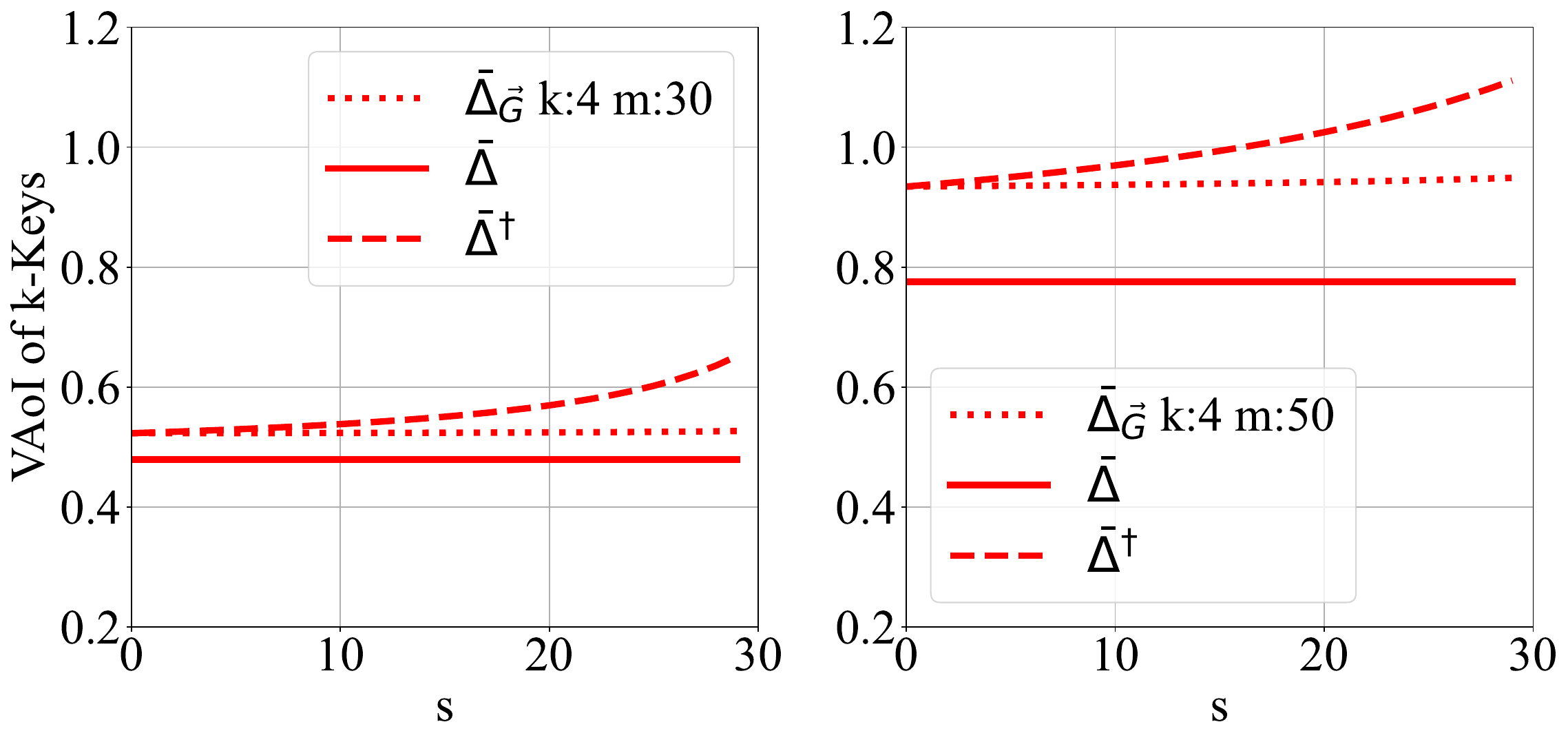}}
	\caption{(a) $\ageongraphwo{2}{30}{s}{m}$, (b) $\ageongraphwo{4}{30}{s}{m}$ as a function of $s$ and $m$ on an SHN when $\lambda_s\!=\!10$,$\lambda_e\!=\!100$ for $s\!=\!\{0,1,\!\cdots,30\}$. Solid lines, dashed lines, and dotted lines show $\agesubswo{k}{n}{s}{m}$, $\agenonsubswo{k}{n}{s}{m}$, and $\ageongraphwo{k}{n}{s}{m}$, respectively.}
	\label{fig:cost_of_subs}
 \vspace{-0.5cm}
\end{figure}






\section{Discussion on The Value of Memory}\label{sec:memory_value}
Finally, we can also compare memory and memoryless schemes. We observe that in the comparisons of Fig.~\ref{fig:bounds_subs_nonsubs} and Fig.~\ref{fig:wo_memory_subset}, $\agesubs{4}{8}{3}{m}$ is lower than $\agesubswo{4}{8}{3}{m}$ for the same gossip rate $\lambda_e$ and $m$. This observation holds true for nonsubscriber nodes as well. 



To quantify the value of memory in a network, we define {\em the memory critical gossip rate} of a $(k,n)$-TSS network for a margin $\varepsilon$, denoted by $\lambda^\varepsilon(k,n,s,m)$, as 
\begin{align}
     \lambda^\varepsilon(k,n,s,m) \!:= \!\inf\{\lambda_e\!\! \in \!\!\mathbb{R}^+ \!:  |\ageongraph{k}{n}{s}{m}\!-\!\ageongraphwo{k}{n}{s}{m}| \!\leq \!\varepsilon  \}. \\[-3em]
\end{align}
We have upper and lower bounds for $\ageongraph{k}{n}{s}{m}$ in~\eqref{thm:w_memory_r_subs}, thus, we can find an upper bound for $\lambda^\varepsilon(k,n,s,m)$ in partial key subscription case $(s<n)$ whereas we can find the exact value in total key subscription case $(s=n)$ by plugging the corresponding values in Corollaries~\ref{cor:w_memory_age} and \ref{cor:wo_memory_age}. 

For the sake of simplicity, we consider full subscription case and we denote $\lambda^\varepsilon(k,n,s,m)$ by $\lambda^\varepsilon(k,n)$ in the numerical results. Fig.~\ref{fig:lambda_epsilon} depicts the memory critical gossip rate of a $(k,30)$-TSS network for different margins as a function of $k$.

We first observe, in Fig.~\ref{fig:lambda_epsilon}, that the memory critical gossip rate $\lambda^\varepsilon(k,n)$ increases as the margin $\varepsilon$ decreases, and it exponentially increases with $k$ for fixed $n$. These observations show that $\ageongraphwo{k}{n}{n}{n}$ approaches $\ageongraph{k}{n}{n}{n}$ as $\lambda_e$ increases or $k$ decreases, consistent with Corollary~\ref{cor:w_memory_age}. Considering the event $E\!:=\!\{\mathcal{X}_{(k:n-1)}\leq U\}$, one can easily see that $Pr(E)$ converges to $1$ as $\lambda_e$ increases (frequent gossipping between nodes) or $k$ decreases for fixed $n$. In this case, the expectation $\mathbb{E}[\min(\mathcal{X}_{(k:n-1)},U)]\!\to\!\mathbb{E}[\mathcal{X}_{(k:n-1)}]$ in~\eqref{eqn:hetero_wo_memory_age}. This implies that $\ageongraphwo{k}{n}{n}{n}$ approach to $\ageongraph{k}{n}{n}{n}$ as $Pr(E)$ goes to $1$.

\begin{figure}
    \centering
    \includegraphics[width=0.9\linewidth]{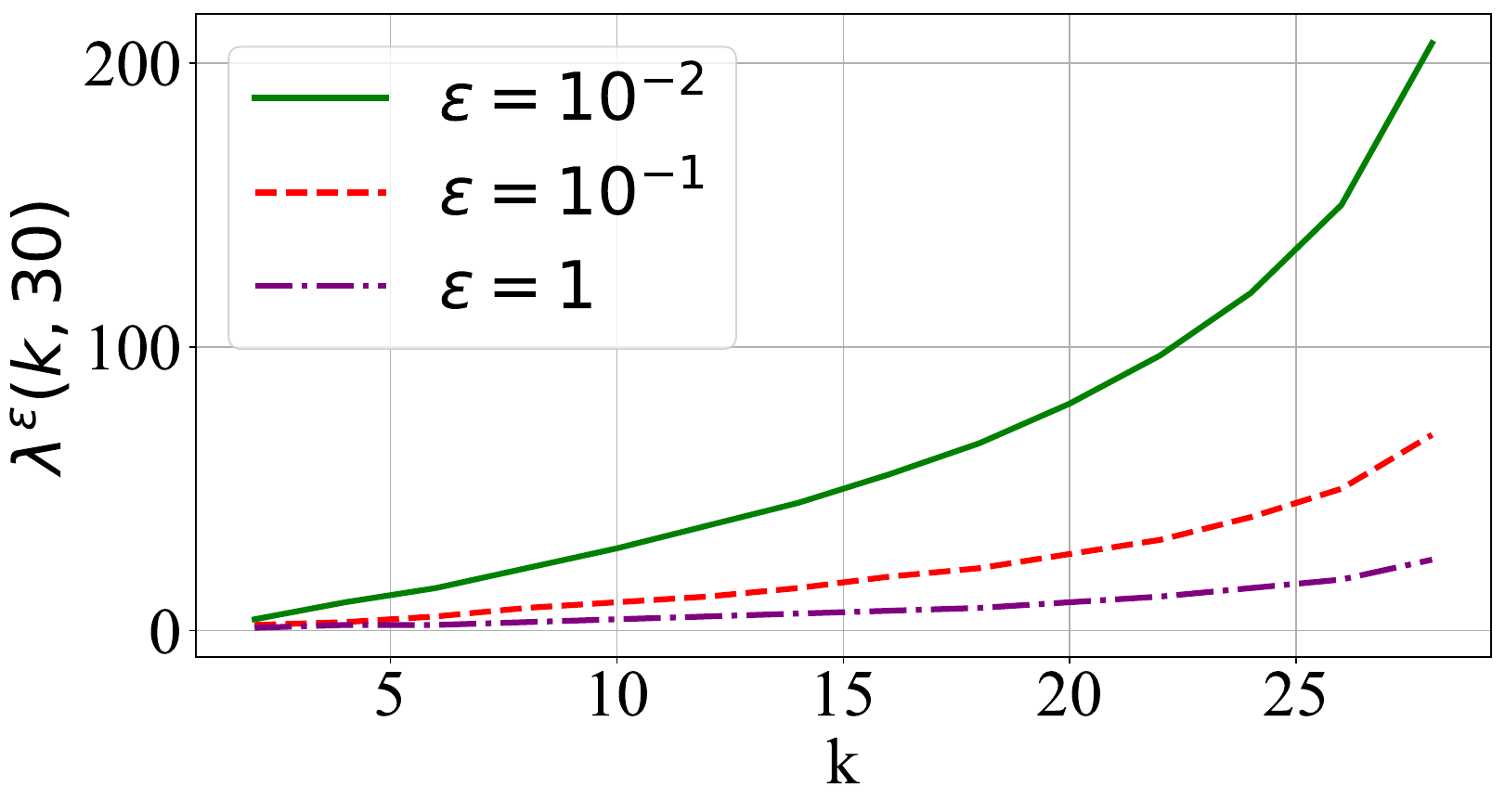}\vspace{-0.35cm}
    \caption{ The memory critical gossip rate $\lambda^\varepsilon(k,30)$ as a function of $k$ for fixed $n=30$ and $\lambda_s=15$. Solid line, dashed line and dotted-dashed line show $\lambda^\varepsilon(k,30)$ for $\varepsilon=10^{-2},\varepsilon=10^{-1}$ and $\varepsilon=1$, respectively.}
    \label{fig:lambda_epsilon}
    \vspace{-0.5cm}
\end{figure}

\section{Conclusion}\label{sec:concl}

We have examined an information update system consisting of $m$-receiver nodes and a single source that encrypts the information by using any $k$-out-of-$n$ threshold system, e.g. $(k,n)$-TSS. For memory scheme, we have provided closed-form expressions for the version age of $k$-keys in full subscription and total key subscription while we have provided tight upper and lower bounds for the same
in the partial key subscription case. For the memoryless scheme, we have provided closed-form expressions for the version age of $k$-keys for any network type. The evaluations have shown that a memory scheme yields a lower version age of $k$-keys compared to a memoryless scheme. Also, we have demonstrated that assigning a node as a subscriber reduces its version age of $k$-keys while exponentially increasing it for the remaining nonsubscriber nodes under a memoryless scheme.


In our work, nodes only send the keys that are received from the source node, ensuring that any set of messages on the channels is not sufficient to decrypt the message at any time. An alternative approach might be to consider the case where nodes can share keys that are received from other nodes. For such a scheme, in future work, we aim to compute the version age of $k$-keys as a function of the number of keys in a message and find the optimal message structure. Another future research direction would be having  more informative keys in the network, using nonlinear decryption schemes.


\bibliographystyle{IEEEtran}
\bibliography{aoi.bib}

\vfill

\end{document}